\documentclass[11pt]{article}

\usepackage[paper=letterpaper, margin=1in]{geometry}
\usepackage[T1]{fontenc}
\usepackage[utf8]{inputenc}
\usepackage[tbtags]{amsmath}
\allowdisplaybreaks
\usepackage{amssymb,amsthm,mathtools, bbm}
\usepackage{thmtools, thm-restate}
\usepackage{physics}
\usepackage{hyperref}
\usepackage[svgnames]{xcolor}
\usepackage[capitalise,nameinlink]{cleveref}
\definecolor{links}{RGB}{20, 50, 255}
\definecolor{cites}{RGB}{0, 200, 0}
\definecolor{urls}{RGB}{255, 116, 0}
\hypersetup{colorlinks=true,citecolor=blue,linkcolor=purple,urlcolor=blue}
\definecolor{MyPurple}{cmyk}{0.45,0.86,0,0}
\renewcommand{\emph}[1]{{\color{MyPurple}{\em #1}}}
\usepackage[numbers,compress]{natbib}
\usepackage{doi}
\usepackage{tikz} 
\usepackage{pgfplots}
\pgfplotsset{compat=1.14}
\usepgfplotslibrary{fillbetween}
\usepackage{hyphenat}
\usepackage[font=footnotesize]{caption}
\usepackage{subcaption}
\usepackage{appendix}
\usepackage{nicefrac}
\usepackage{comment}
\usepackage{xspace}
\usepackage{lineno}
\linenumbers

\usepackage{algorithm}
\usepackage{algorithmic}

\makeatletter
\newcommand{\functionHeader}[2]{%
  \item[]\textbf{Function #1:} \hfill #2\\[0.5ex]%
  \setcounter{ALC@line}{0}%
}
\makeatother

\usepackage{enumitem}
\usepackage{palatino}
\usepackage[bibliography=common, appendix=inline]{apxproof}

\newtheorem{question}{Question}
\declaretheorem[numberwithin=section]{theorem}
\declaretheorem[sibling=theorem]{lemma}
\newtheorem{proposition}[theorem]{Proposition}
\newtheorem{corollary}[theorem]{Corollary}
\newtheorem{claim}[theorem]{Claim}
\newtheorem{remark}[theorem]{Remark}
\newtheorem{observation}[theorem]{Observation}

\newtheorem{definition}[theorem]{Definition}

\newtheorem*{theorem*}{Theorem}

\usepackage{algorithm}
\usepackage{algorithmic}

\usepackage{enumitem}

\newcommand{\capd}{\bar{d}}
\renewcommand{\norm}[1][\cdot]{\left\lVert#1\right\rVert}
\DeclareMathOperator{\DP}{DP}

\newcommand{\OPT}{\ensuremath{\mathtt{OPT}}\xspace}
\newcommand{\children}{\ensuremath{\mathtt{children}}\xspace}

\DeclareMathOperator{\bdelay}{delay\xspace}
\DeclareMathOperator{\bserve}{serve\xspace}
\newcommand{\greedytau}{\ensuremath{\mathrm{Greedy}_{\tau}}\xspace}
\newcommand{\nextack}{\ensuremath{\mathrm{next}}\xspace}

\newcommand{\A}{\mathcal{A}}

\newcommand{\B}{\mathcal{B}}
\newcommand{\BA}{\B_{\A}}

\newcommand{\OPTTCP}{\OPT_{\mathrm{TCP}}}
\newcommand{\OPTPP}{\OPT_{\mathrm{PP}}}

\usepackage{multicol}

\begin{document}
\begin{titlepage}
\nolinenumbers
\title{Online TCP Acknowledgment under General Delays}

\date{}

\author{Sujoy Bhore\thanks{Department of Computer Science \& Engineering, Indian Institute of Technology Bombay, India. Work supported in part by ANRF ARG-MATRICS, Grant 002465. Email: \href{sujoy@cse.iitb.ac.in}{sujoy@cse.iitb.ac.in}}\and Michał Pawłowski\thanks{Faculty of Mathematics, Informatics and Mechanics, University of Warsaw, Department of Computer, Control and Management Engineering, Sapienza University of Rome and IDEAS NCBR. Supported by the Polish National Science Center grant no 2024/53/N/ST6/04119. Email: \href{michal.pawlowski@mimuw.edu.pl}{michal.pawlowski196@gmail.com}} \and Seeun William Umboh\thanks{School of Computing and Information Systems,
The University of Melbourne and ARC Training Centre in Optimisation Technologies, Integrated Methodologies, and Applications (OPTIMA). Supported by the Australian Government through the Australian Research Council (ARC) DP240101353. Email \href{william.umboh@unimelb.edu.au}{william.umboh@unimelb.edu.au}}}

\def\thepage{}
\thispagestyle{empty}
\maketitle

\setcounter{page}{0}
\begin{abstract}

In a seminal work, Dooly, Goldman, and Scott (STOC~1998; JACM~2001) introduced the classic \emph{Online TCP Acknowledgment} problem. In this problem, a sequence of $n$ packets arrives over time, and the objective is to minimize both the number of acknowledgments sent and the total delay experienced by the packets. They showed that a natural greedy algorithm, which acknowledges when the delay of pending packets equals the acknowledgment cost, is $2$-competitive.

Online TCP Acknowledgment is the canonical online problem with delay, capturing the fundamental tradeoff between reducing service cost through batching and the delay incurred by pending requests. Prior work has largely focused on richer service-cost models, e.g., Joint Replenishment and Multi-Level Aggregation. However, other than the work of Albers and Bals (SODA~2003), which studies maximum delay and closely related objectives, not much is known about general delay costs beyond the sum of delay costs of requests.

In this work, we study Online TCP Acknowledgment under two generalized delay-cost models that we call \emph{batch-aware} and \emph{batch-oblivious}. In the batch-aware model, each batch incurs a delay cost that depends on the packet delays within that batch. For the max-over-batches objective, which generalizes Albers and Bals, we show that greedy remains $2$-competitive for every monotone batch-delay function. For the sum-over-batches objective, the picture changes sharply: greedy is $\Omega(n)$-competitive, and the optimal deterministic competitive ratio is $\Theta(\log n)$. Our matching upper bound requires only the minimal assumption that the batch delay function is monotone.

In the batch-oblivious model, the delay cost is a function of the global packet-delay vector. We show that greedy is $2$-competitive for continuous submodular delay costs, and more generally under a weaker zero-coordinate diminishing-marginals condition. This yields $2$-competitive algorithms for $\ell_p$ norms, $\textsf{Top-}k$ norms, and ordered norms. Using the submodular-norm approximation of Patton, Russo, and Singla, we also obtain an $O(\log n)$-competitive algorithm for arbitrary symmetric norms.
\end{abstract}

\end{titlepage}
\nolinenumbers

\newpage	
\setcounter{page}{1}

\section{Introduction}

The \emph{Online TCP Acknowledgment} problem, introduced by Dooly, Goldman, and Scott in their seminal STOC 1998 paper~\cite{dooly1998tcp}, arose from critical challenges in computer networking. Specifically, it addresses the dynamic management of acknowledgment delays within the Transmission Control Protocol (TCP), a cornerstone of the Internet's architecture (see~\cite{allman1998generation, clark1982rfc0813, de2005dynamic}). The TCP protocol requires the recipient of a stream of packets to acknowledge every packet. On the other hand, excessive delay can disrupt TCP's congestion control mechanisms. Consequently, limiting the latency between a packet's arrival and its acknowledgment is vital for maintaining network performance and protocol stability. %

In the Online TCP Acknowledgment problem---the TCP problem henceforth---a sequence of packets arrive online over time. At each time, we can transmit an acknowledgment to acknowledge all currently pending packets at a cost of $1$. Each packet accrues a delay cost until it is acknowledged. More formally, the TCP problem involves a sequence of packet arrival times \( A = ( a_1, \ldots, a_n ) \), and the objective is to partition \( A \) into \( k \) contiguous subsequences \( \sigma_1, \sigma_2, \ldots, \sigma_k \), for some $k$. The end of each subsequence corresponds to the time an acknowledgment is transmitted. Thus, each \( \sigma_i \) is a \emph{batch} of packets that are acknowledged together by the acknowledgment sent at time $t_i$. The batches $\sigma_i$ and the acknowledgment times $t_i$ are required to satisfy the following constraints: each packet can only be acknowledged after its arrival, i.e.~$t_i \geq a_j$ for every $j \in \sigma_i$; and every packet must be acknowledged, i.e.~the final acknowledgment time $t_k \geq a_n$, the arrival time of the last packet. The tradeoff between the cost of acknowledgments and the cost of delaying acknowledgments is modeled with the objective function
\begin{equation}
  \label{eq:obj}
  k + \sum_{i=1}^k \bdelay(\sigma_i, t_i).
\end{equation}
The number of acknowledgments $k$ is called the \emph{acknowledgment cost} and $\sum_{i=1}^k \bdelay(\sigma_i, t_i)$ the \emph{delay cost}. We refer to $\bdelay(\cdot, \cdot)$ as the \emph{batch delay cost function}.

\paragraph{TCP with Additive Delay Cost.}
The most classic and well-studied setting is when the delay function $\bdelay(\sigma_i,t_i) = \sum_{j \in \sigma_i} (t_i - a_j)$, i.e.~the delay cost is simply the total waiting time between each packet's arrival time and its acknowledgment time. The competitiveness of this problem has been completely characterized. For deterministic algorithms, Dooly, Goldman, and Scott~\cite{dooly1998tcp, dooly2001line} proved that the following natural (in hindsight) Greedy algorithm achieves a competitive ratio of 2: acknowledge whenever the delay cost has increased by the cost of an acknowledgment. This is also the optimal deterministic competitive ratio as TCP with additive delay generalizes the well-known Ski Rental problem for which a lower bound of 2 is known \cite{KarlinMRS88}. Subsequently, Seiden~\cite{Seiden00} established a lower
bound of $e/(e - 1)\approx 1.58$ on the competitive ratio of randomized algorithms; a
matching upper bound was shown by Karlin, Kenyon, and Randall~\cite{KarlinKR01}
shortly after. These algorithms also work for slightly more general delay costs: each request $j$ 
incurs a delay cost $c_j$ that is a non-decreasing function of its waiting time,
and the overall delay cost is $\sum_{j=1}^n c_j$. We call this more general problem \emph{TCP with additive delay}, since the delay cost is the sum of the delay costs of the requests. 

\paragraph{Beyond TCP.}
TCP with additive delay is the simplest problem in the class of \emph{online problems with delay}, which captures trade-offs between batching requests to reduce service costs and minimize delays. An online problem with delay involves a sequence of requests that arrive online and a \emph{service cost function} $g$ which is a function on the set of requests; the goal is to partition the requests into (possibly non-contiguous) batches and a service time $t_i$ for each batch $\sigma_i$ so as to minimize the objective function $\sum_i g(\sigma_i) + \bdelay(\sigma_i,t_i)$. The sum $\sum_i g(\sigma_i)$ is called the \emph{service cost}.

Most prior work on online problems with delay has focused on objective functions with more general service costs, with the overall delay cost computed as the sum of the delay costs of requests. These objective functions capture various types of economies of scale arising in varied practical domains such as scheduling, supply chain management, and logistics. The ones most relevant to our work are: joint replenishment~\cite{BuchbinderKLMS13,BienkowskiBCJS13, chen2022online,gyorgyi2023joint, moseley2025putting, ShmoysSU26, AzarL26, DBLP:conf/icalp/DinitzFU26}, multi-level aggregation~\cite{BritoKV04,BuchbinderFNT17, bienkowski2020online, BienkowskiBBCDF21, le2023power}, network design with delay~\cite{frameworkDelay, azar2020beyond,Touitou23}, set cover~\cite{AzarCKT20,Touitou21,DBLP:journals/algorithmica/EzraLPRU26}, set aggregation~\cite{CarrascoPSV18}, online service with delay~\cite{azar2017online,DBLP:conf/sirocco/BienkowskiKS18,DBLP:conf/isaac/KrneticM0W20,DBLP:conf/focs/00010P21,DBLP:journals/siamcomp/GuptaKP22, DBLP:conf/stoc/Touitou23,DBLP:conf/stacs/Touitou25}, and matching~\cite{emek2016online,AzarCK17, ashlagi2017min, bienkowski2017match,DBLP:journals/tcs/EmekSW19,azar2021min,DeryckereU23,DBLP:conf/wine/HeLSWWZ23,DBLP:journals/mst/MariPRS25, DBLP:conf/soda/DufayW26,GSU26}. 

\paragraph{Beyond Additive Delay Cost.}
In many of the above practical domains, it is natural to consider non-additive delay costs, where the penalty incurred by delaying requests is not simply the sum of their waiting times or delay costs. For example, one may wish to control tail latency by ensuring that no request waits too long, or, more generally, by penalizing the largest few waiting times. In settings with service-level agreements, another natural objective is to limit the number of requests that experience positive delay or exceed a prescribed delay threshold. Such objectives capture global quality-of-service or incident-based costs that are not well represented by additive delays.

The main goal of our work is to initiate a systematic study of online problems with delay beyond additive delay costs, starting with the setting that has the simplest service cost function: online TCP Acknowledgment. We investigate the following broad questions.
\begin{question}\label{ques1}
    For what families of natural delay costs can we obtain a good competitive ratio for the online TCP acknowledgment problem?
\end{question}

As Greedy is heavily used as a subroutine in many algorithms for online problems with delay to determine service times (e.g.~in \cite{azar2020beyond,chen2022online}), a good understanding of the power of Greedy can help us extend those algorithms to work with non-additive delay costs.

\begin{question}\label{ques2}
    For what families of natural delay costs is Greedy a good algorithm for the online TCP Acknowledgment problem?
\end{question}

To the best of our knowledge, the only previous works on online problems with delay that go beyond additive delay costs are those below. 

\paragraph{General Batch Delay Cost.}
Dooly, Goldman, and Scott~\cite{dooly1998tcp,dooly2001line} considered a far-reaching generalization of the objective in \cref{eq:obj} where the batch delay cost function $\bdelay(\cdot,\cdot)$ is only assumed to be monotone non-decreasing: if $\sigma$ is a subsequence of  $\sigma'$ and $t \leq t'$, then $\bdelay(\sigma,t) \leq \bdelay(\sigma', t)$ and $\bdelay(\sigma,t) \leq \bdelay(\sigma,t')$. Equivalently, the marginal delay cost at each timestep depends only on the set of currently pending requests. These general batch-delay cost functions can be used to model not only the goal of avoiding requests waiting too long but also the goal of avoiding too many pending requests at any time. The former can be modeled using $\ell_p$-norms, while the latter can be modeled by taking $\bdelay(\sigma_i,t_i)$ to be a function of $|\sigma_i|$.\footnote{This type of delay cost function was also studied under the name ``size-based delay'' by Deryckere and Umboh~\cite{DeryckereU23} and Gan, Sun, and Umboh~\cite{GSU26} for online matching with delay.} Dooly, Goldman, and Scott showed that Greedy is 2-competitive when $\bdelay(\sigma_i,t_i) = \max_{j \in \sigma_i}(t_{i} - a_i)$, i.e.~the batch delay cost is the maximum waiting time of the batch. They also claimed that Greedy is 2-competitive for monotone, nondecreasing batch-delay costs. However, we show that Greedy is actually $\Omega(n)$-competitive, and thus new algorithms are required.

\paragraph{Penalizing Long Delays.}
Later, Albers and Bals~\cite{AlbersB03, AlbersB05} explored a structurally different objective function of the form \[k + \max_{i=1}^k \max_{j \in \sigma_i}(t_i-a_j)^p\] for some integer $p \geq 1$. When $p=1$, the delay cost is simply the maximum waiting time of any packet. They gave tight upper and lower bounds on the competitive ratio of deterministic online algorithms: when $p=1$, it is $\pi^2/6 \approx 1.644$; when $p > 1$, it is an expression that approaches $1.5$ as $p$ tends to infinity. For randomized algorithms, they showed lower bounds of $3/(3 - 2/e) \approx 1.324$ and $2/(2 - 1/e) \approx 1.225$, when $p=1$ and when $p > 1$, respectively. A comprehensive overview of this area until 2003 can be found in Chapter 7.2 of the survey by Albers~\cite{albers2003online}.

\subsection{Our Contributions}

In this work, we make significant progress towards answering the above questions. We consider three broad classes of delay costs that generalize those considered in previous works: \emph{max-monotone}, \emph{sum-monotone}, and \emph{batch-oblivious}.

\paragraph{Max-Monotone Delay Cost.}
We begin by considering a generalization of the delay cost considered by Albers and Bals~\cite{AlbersB05}. In the \emph{max-monotone delay} setting, the objective function is
\begin{equation}
  \tag{Max-Monotone}
    k + \max_{i=1}^k \bdelay(\sigma_i,t_i)
  \end{equation}
for some monotone non-decreasing $\bdelay(\cdot,\cdot)$.  We show that Greedy is $2$-competitive without any further restrictions on $\bdelay(\cdot,\cdot)$ using a similar analysis as Greedy under additive delay costs.

\begin{restatable}{theorem}{thmmaxmonotone}
  \label{thm:max-monotone}
  Greedy is a deterministic $2$-competitive algorithm for TCP Acknowledgment with max-monotone delay cost.
\end{restatable}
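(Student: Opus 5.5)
The plan is to compare Greedy against $\OPT$ by charging Greedy's acknowledgments to $\OPT$'s acknowledgments and to $\OPT$'s maximum batch delay, the same way Greedy is analysed under additive delay. First I fix which version of Greedy to analyse, because the choice matters here. In the max-monotone objective the delay term is paid once overall, not once per batch. So the natural ``acknowledge when delay equals acknowledgment cost'' rule must balance the \emph{current} total acknowledgment cost against the \emph{running maximum} batch delay. Concretely, Greedy keeps a counter $c$ of acknowledgments sent so far, starting at $c=0$. It sends an acknowledgment at the first time $t$ at which the pending set $P$ satisfies $\bdelay(P,t)=c+1$, and then increments $c$; it also acknowledges at the end, when everything has arrived. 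If $\bdelay$ can jump, I would instead acknowledge just before the jump and handle the rounding separately.

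I do not use the literal constant threshold $1$. With a fixed threshold and a merely monotone $\bdelay$, $\OPT$ could merge many Greedy batches into one batch whose delay is still only about $1$. So I expect the intended Greedy to be this balancing rule, and I would verify that it matches the paper's definition.

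\textbf{Upper bound on Greedy.} Let Greedy send $k$ acknowledgments at times $t_1<\dots<t_k$, with batches $\sigma_1,\dots,\sigma_k$. By construction, batch $\sigma_i$ for $i<k$ has $\bdelay(\sigma_i,t_i)=i$. The last batch has delay at most $k$, since Greedy would otherwise have acknowledged earlier. Hence Greedy's cost is at most $k+k=2k$, and with more care at most $2k-1$.

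\textbf{Lower bound on $\OPT$.} Let $\OPT$ have $k^*$ acknowledgments and maximum batch delay $D$. Let $j$ be the smallest index with $j>D$. I claim $\OPT$ acknowledges at least once in each interval $(t_{i-1},t_i]$ for $i=j,\dots,k-1$, and also at or after the final arrival.

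Suppose instead that for some $i\ge j$, $\OPT$ sends no acknowledgment in $(t_{i-1},t_i]$. Every packet of $\sigma_i$ arrived after $t_{i-1}$, so $\OPT$ acknowledges all of $\sigma_i$ together in a single batch $\sigma'\supseteq\sigma_i$, at some time $t'\ge t_i$. By monotonicity, $\bdelay(\sigma',t')\ge\bdelay(\sigma_i,t_i)=i>D$. This contradicts the definition of $D$.

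These intervals are disjoint, so $k^*\ge k-j+1$, and $D\ge j-1$. Therefore $\OPT\ge k^*+D\ge k$. Combining with the upper bound gives Greedy $\le 2k\le 2\,\OPT$.

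\textbf{Main obstacle.} The delicate step is the boundary bookkeeping. This covers the last batch, whose delay may be below its threshold; the case $D\ge k$; and discontinuities of $\bdelay$ in $t$, where the threshold is overshot rather than hit exactly. Handling these needs a short case analysis so that the off-by-one terms cancel and give exactly ratio $2$ rather than $2+O(1/k)$. The key structural step — any $\OPT$ batch that swallows a whole Greedy batch inherits at least its delay — uses only monotonicity of $\bdelay$ in both arguments, which is why no further assumptions are needed.
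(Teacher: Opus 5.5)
Your proposal is correct and follows the paper's proof. It uses the same Greedy (threshold equal to the current acknowledgment index, so $\bdelay(\sigma_j,t_j)=j$), the same $2k$ upper bound, and the same key monotonicity step: if \OPT{} does not acknowledge in $(t_{i-1},t_i]$, its batch containing all of $\sigma_i$ has delay at least $i$. The only difference is bookkeeping. You count via \OPT{}'s maximum batch delay $D$, where every interval of index greater than $D$ must contain an \OPT{} acknowledgment. The paper instead argues that with $k^*<k$ acknowledgments some interval of index at least $k-k^*$ is acknowledgment-free. Your extra care with the last batch and with jumps in $\bdelay$ addresses points the paper sidesteps by assuming each threshold is hit exactly.
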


\paragraph{Sum-Monotone Delay Cost.}
Our main technical contributions are for the \emph{sum-monotone delay} setting, where the objective is
\begin{equation}
  \tag{Sum-Monotone}
  k + \sum_{i=1}^k \bdelay(\sigma_i,t_i)
\end{equation}
for some monotone non-decreasing $\bdelay(\cdot,\cdot)$. 

Intuitively, given the success of Greedy for max-monotone delay, and the apparent simplicity of the problem---that at any point in time, we only need to decide between two actions (to acknowledge or not)---one expects that Greedy is also good for sum-monotone. Indeed, Dooly, Goldman, and Scott~\cite[Theorem 10]{dooly2001line} claimed that the same analysis of Greedy under additive delays can be extended to sum-monotone delays. Surprisingly, we show that Greedy is actually terrible.

\begin{theorem}[Informal; see \cref{thm:greedy-fail}]
  \label{thm:greedy-fail-inf}
  Greedy is $\Omega(n)$-competitive for TCP Acknowledgment with sum-monotone delay cost.
\end{theorem}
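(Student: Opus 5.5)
The plan is to build an explicit family of instances in which the batch delay function is \emph{capped}. Under a cap, a single huge batch costs no more than a tiny batch, yet Greedy's trigger still fires on every tiny batch. Recall that Greedy acknowledges as soon as the delay cost of the currently pending packets, $\bdelay(\text{pending},t)$, reaches the acknowledgment cost $1$. Sum-monotonicity allows $\bdelay$ to saturate at $1$. So an offline solution can pay the cap once for one giant batch, while Greedy is forced to pay roughly $2$ for every small batch.

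\textbf{Construction.} Fix $\varepsilon>0$. Let packets arrive at times $a_j = 2\varepsilon (j-1)$ for $j=1,\dots,n$. Define
\[
\bdelay(\sigma,t) \;=\; \min\Bigl\{1,\ \tfrac{1}{\varepsilon}\max_{j\in\sigma}(t-a_j)\Bigr\}.
\]
First I will verify monotonicity. Adding packets to $\sigma$ can only increase the inner maximum, and increasing $t$ increases every $t-a_j$. Taking $\min$ with $1$ preserves both properties. Hence this is a valid sum-monotone delay cost.

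\textbf{Greedy's cost.} I will show by induction that Greedy acknowledges every packet alone, exactly $\varepsilon$ after its arrival. Suppose packet $j$ arrives while nothing is pending. The pending delay equals $(t-a_j)/\varepsilon$, so it reaches $1$ at $t=a_j+\varepsilon$. This is strictly before $a_{j+1}=a_j+2\varepsilon$, so Greedy acknowledges $\{j\}$ at that moment, and nothing is pending when packet $j+1$ arrives. Each batch therefore costs $1$ for the acknowledgment plus delay $1$, for a total of $2n$. (If Greedy is defined via accumulated delay since the last acknowledgment rather than the current value of $\bdelay$, the computation is the same, since here the delay increases from $0$ within each batch.)

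\textbf{OPT's cost and the only delicate point.} A single acknowledgment at time $a_n$ gives one batch containing all packets. Its delay is capped at $1$, so the total cost is at most $2$. The ratio is therefore at least $n$, which gives $\Omega(n)$. The main thing to double-check is that the intended definition of Greedy really triggers on the current value $\bdelay(\text{pending},t)$ reaching $1$, so that the cap does not prevent triggering; here the value does reach exactly $1$. A second check is that the formal version (\cref{thm:greedy-fail}) may require extra properties, such as continuity in $t$ or $\bdelay(\sigma,t)=0$ when $t=\max_{j\in\sigma} a_j$. The construction above satisfies both. If strict additivity-like properties were demanded, I would instead replace the $\max$ by a sum inside the cap. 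The same argument would go through with the arrival spacing adjusted.
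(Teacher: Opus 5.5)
Your proof is correct and takes essentially the paper's route: a capped batch-delay function, arrivals spaced just past Greedy's trigger time so that every packet is acknowledged alone at cost $2$, and an offline solution that acknowledges once at $a_n$ and pays the cap only once, which gives ratio $n$. The differences are cosmetic. The paper caps $\sum_{j\in\sigma}(t-a_j)$ at a general threshold $\tau$ to cover every $\greedytau$, whereas you cap a scaled maximum at $1$, which suffices for the informal statement. Your side remark that your function vanishes at $t=\max_{j\in\sigma}a_j$ is false for multi-packet batches, but the argument never uses it.
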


A further surprise is that there is no deterministic constant-competitive algorithm for sum-monotone delays. We give a lower bound of $\Omega(\log n)$ on the deterministic competitive ratio for sum-monotone delays via the Parking Permit problem---a generalization of Ski Rental---and a matching upper bound.

\begin{restatable}{theorem}{thmsummono}
\label{thm:summono}
  The deterministic competitive ratio for TCP Acknowledgment with sum-monotone delay cost is $\Theta(\log n)$.
\end{restatable}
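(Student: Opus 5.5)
The proof has two halves: a lower bound via Parking Permit and an upper bound via a doubling-style algorithm. Throughout, write $D(s,t)$ for $\bdelay(\sigma,t)$ where $\sigma$ is the set of packets arriving in $(s,t]$ that are pending since the last acknowledgment at time $s$.

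\textbf{Lower bound $\Omega(\log n)$.} I reduce deterministic Parking Permit (Meyerson) to TCP with sum-monotone delay. Parking Permit has $K=\Theta(\log n)$ permit types with geometrically growing durations, and its deterministic ratio is $\Omega(K)$.

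The construction uses one packet per day on which parking is needed, giving $n$ packets over a horizon of length $n$. Monotonicity lets $\bdelay(\sigma,t)$ be almost any non-decreasing function of the batch's time span and contents. I would set the delay of a batch to equal the cheapest set of permits covering the days spanned by the batch. A batch acknowledged at $t$ thus pays for coverage up to $t$.

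The acknowledgment cost $k$ is at most the number of permits purchased, so it is absorbed up to a constant once the smallest permit costs at least $1$. Conversely, any Parking Permit solution yields a TCP solution of comparable cost by acknowledging at the end of each permit interval. An online TCP algorithm therefore gives an online Parking Permit algorithm with an $O(1)$ loss, and the $\Omega(K)=\Omega(\log n)$ bound transfers.

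The delicate point is that the algorithm controls when batches end while the adversary controls arrivals. I need acknowledgment times to correspond to permit decisions made online, so that not acknowledging means an uncovered day that is still forced to be paid later at the larger span cost. I expect to adjust the delay definition so that delay accrued on a batch is monotone in $t$ and mimics permit commitment.

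\textbf{Upper bound $O(\log n)$.} The algorithm keeps a guess for OPT's current batch cost and acknowledges when the pending batch's delay reaches $1$. Greedy fails because one batch can be charged where OPT splits it cheaply, or vice versa, since the delay is not additive over time.

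The core observation is that for a sum-monotone function, splitting a batch can only lower the total delay up to the extra acknowledgments. I would structure the analysis as a hierarchical charging over dyadic scales. Fix OPT's batches. Each of the algorithm's batches either contains an OPT acknowledgment, which is charged to OPT's $k$, or lies strictly inside one OPT batch $\beta$.

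In the second case, by monotonicity its delay is at most $\bdelay(\beta,t_\beta)$. The difficulty is that many algorithm batches may sit inside the same $\beta$. I would design the algorithm so that within a phase, successive batches have delays that double, or so that the algorithm acknowledges at the geometric points of a doubling schedule. Then the number of batches inside any single OPT batch, and the total cost charged to it, is $O(\log n)$ times $\max(1,\bdelay(\beta,t_\beta))$. The $\log n$ comes from the at most $n$ distinct packet-set changes, or from delay ranges bounded polynomially after normalization.

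Concretely, I would try the following algorithm. Acknowledge at time $t$ when $D(s,t)\ge 1$. Additionally restart the doubling whenever the pending set would cost more than the sum of the geometric sequence of checkpoints.

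\textbf{Main obstacle.} The main obstacle is getting the $O(\log n)$ upper bound with only monotonicity assumed. The analysis must bound the number of algorithm batches nested inside a single OPT batch by $O(\log n)$ without any additivity of the delay.
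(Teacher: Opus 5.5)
Both halves of your proposal stop exactly where a new idea is needed. As written, it is a plan with two genuine gaps.

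\textbf{Lower bound.} Your offline comparisons are essentially fine. A span-based cost such as ``cheapest permit cover of $[a(\sigma),t]$'' is a legitimate sum-monotone delay; the paper uses the concave surrogate $f(t-a(\sigma))-1$ in the same role. For the direction from permits to acknowledgments, acknowledge at the furthest end time among the permits covering the earliest pending packet, so that each permit is charged by at most one batch.

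The gap is the sentence ``an online TCP algorithm therefore gives an online Parking Permit algorithm with $O(1)$ loss.'' It does not follow from offline comparisons, and adjusting the delay function cannot repair it. A Parking Permit algorithm must fix a permit's duration when the request arrives, while a TCP algorithm never has to commit in advance to when it will acknowledge. No choice of delay cost changes that, and the paper itself leaves a general online reduction open.

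The missing idea is to reduce only on the deterministic Parking Permit adversary, which always releases the next request at the first timestep not covered by the permits bought so far. The permit algorithm works as follows.
\begin{enumerate}
\item On arrival of request $j$, simulate the TCP algorithm on the prefix $A_j$ assuming no further arrivals.
\item Let $\mathrm{next}(A_j)$ be the time at which that simulation acknowledges packet $j$.
\item Buy the cheapest permit of duration at least $\mathrm{next}(A_j)-a_j$. For the hard instance ($C_k=2^k$, $D_k=4^k$), this costs at most twice the cost of the batch $\{j\}$ acknowledged at $\mathrm{next}(A_j)$.
\end{enumerate}
The adversary's next request falls after this permit expires, so the real run agrees with the simulation up to $\mathrm{next}(A_j)$. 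Hence every packet is acknowledged alone at exactly that time, and the TCP algorithm pays at least half of what the induced permit algorithm pays. Combined with $\mathrm{OPT}_{\mathrm{TCP}}=O(\mathrm{OPT}_{\mathrm{PP}})$, the $\Omega(\log n)$ bound transfers.

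\textbf{Upper bound.} Here there is no algorithm yet, and the one concrete rule you commit to fails. Any algorithm that acknowledges whenever the pending delay reaches $1$ is $\Omega(n)$-competitive. Take $\bdelay(\sigma,t)=\min\{\sum_{j\in\sigma}(t-a_j),1\}$ with arrivals spaced slightly more than $1$ apart: every packet is acknowledged separately, costing at least $n$, whereas one acknowledgment at $a_n$ costs $2$ (\cref{thm:greedy-fail}). Extra restart rules only add acknowledgments.

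Your ``core observation'' is also false. Under sum-monotone costs, splitting a batch can increase the total delay: take $\bdelay(\sigma,t)=1$ for every nonempty $\sigma$. This lack of subadditivity is exactly what defeats Greedy.

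Doubling budgets is the right instinct, but the crux is how to choose the scale and when to reset it online. Pure doubling over-waits on sparse inputs where the optimum acknowledges each packet cheaply. What you are missing is an online estimate of the cost of the optimal batch that the current packets belong to. The paper obtains it as follows.
\begin{enumerate}
\item Run the offline dynamic program on the packets seen so far and take the longest \emph{critical suffix}, i.e., a suffix whose optimal solution is a single acknowledgment at its last arrival.
\item Set the budget to twice that suffix's serve cost.
\item Follow each budget service by three buffer services with doubled budget.
\item Promote a buffer service to a budget service only when a critical suffix with at least twice the serve cost appears.
\end{enumerate}
The analysis then rests on three facts. Every batch of an optimal solution is itself critical and is detected, so no optimal batch straddles the coverage intervals of two final critical batches; this lower-bounds $\mathrm{OPT}$. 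Parent and child critical batches differ by at least a factor $2$ in serve cost, which gives $O(\log n)$ levels. Each service costs $O(1)$ times the serve cost of its critical batch.

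This tie between budget and optimum is also what your charging step lacks. An algorithm batch that contains an optimal acknowledgment cannot be paid for by that unit-cost acknowledgment, since its delay can be as large as its budget. It can be charged only if the budget is tied to the optimum's cost on nearby packets.
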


Note that we measure the waiting time of a packet $j \in \sigma_i$ as $t_i - a_j$. However, all of our results also apply if the waiting time is transformed by any monotone, nondecreasing (possibly nonlinear) function before aggregation. Since all delay cost functions we consider are monotone non-decreasing, such transformations can equivalently be absorbed into the definition of the batch delay cost function $\bdelay(\cdot,\cdot)$.

\paragraph{Batch-Oblivious Delay Cost.}
In the \emph{batch-oblivious delay} setting, the delay cost is given by applying a monotone non-decreasing function $f$ to the \emph{delay vector} $d$ of waiting times of each packet. More precisely, the $j$-th coordinate of the vector $d$ is the waiting time of the $j$-th packet. To summarize, the batch-oblivious objective is
\begin{equation}
  \tag{Batch-Oblivious}
  k + f(d).
\end{equation}
Unlike max-monotone and sum-monotone delays, batch-oblivious delays are ``global'' objectives. For example, we can measure the $\ell$ largest waiting times among all requests by taking $f$ to be the $\textsf{Top-}\ell$ norm;\footnote{The $\textsf{Top-}\ell$ norm of a vector is the sum of its $\ell$-largest entries.} note that this objective is not captured by sum-monotone and max-monotone delays.

In this setting, we show that a natural extension of Greedy is $2$-competitive not only for $\ell_p$ norms and $\textsf{Top-}\ell$ norms, but also for a broader class of functions that are \emph{continuous submodular}. The proof is much subtler than for additive delay.

\begin{restatable}{theorem}{submodular}
\label{thm:greedy}
  Greedy is a deterministic $2$-competitive algorithm for TCP Acknowledgment
  with batch-oblivious continuous submodular delay cost.
\end{restatable}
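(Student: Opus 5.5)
The plan is a charging argument in the style of the additive analysis, with submodularity supplying the key inequality. First, fix the natural extension of Greedy. Let $\tau_0=0$ and let $\tau_1<\tau_2<\dots$ be Greedy's acknowledgment times. Let $d^{G}(t)$ be Greedy's delay vector at time $t$. Greedy acknowledges at the first time $t$ such that
\[
f\bigl(d^{G}(t)\bigr)-f\bigl(d^{G}(\tau_{i-1})\bigr)=1 .
\]
Here $d^{G}(t)$ agrees with the frozen vector of already-acknowledged packets and has coordinate $t-a_j$ for each pending packet $j$ (zero for packets not yet arrived). If needed, there is one final acknowledgment at the end, whose delay increment is at most $1$. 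By construction Greedy's total delay is at most its number of acknowledgments $k_G$, so $\mathrm{ALG}\le 2k_G$ up to the final phase. Hence it suffices to show $k_G\le \OPT$, or more precisely that each full phase $(\tau_{i-1},\tau_i]$ can be charged one unit to $\OPT$.

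Second, set up the charging. Call phase $i$ \emph{acked} if $\OPT$ sends an acknowledgment in $(\tau_{i-1},\tau_i]$; each such phase is charged to that acknowledgment, so there are at most $k_{\OPT}$ of them. If phase $i$ is \emph{unacked}, then every packet pending for Greedy during phase $i$ (it arrived after $\tau_{i-1}$) is also pending for $\OPT$ throughout $(\tau_{i-1},\tau_i]$. Its waiting time in $\OPT$ is therefore at least its Greedy waiting time at $\tau_i$. Let $U$ be the set of unacked phases, and let $\Delta_i$ be the vector of Greedy's delay increments during phase $i$. These increments are supported on packets arriving in phase $i$, so the $\Delta_i$ have disjoint supports, and $d^{\OPT}\ge \sum_{i\in U}\Delta_i$ coordinatewise. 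Monotonicity of $f$ then gives
\[
f\bigl(d^{\OPT}\bigr)\ \ge\ f\Bigl(\sum_{i\in U}\Delta_i\Bigr).
\]

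Third, and this is the main obstacle, show that
\[
f\Bigl(\sum_{i\in U}\Delta_i\Bigr)\ \ge\ \sum_{i\in U}\Bigl[f\bigl(d^{G}(\tau_{i-1})+\Delta_i\bigr)-f\bigl(d^{G}(\tau_{i-1})\bigr)\Bigr]=|U|.
\]
Write $\sum_{i\in U}\Delta_i$ as a telescoping sum, adding the $\Delta_i$ in increasing order of $i$. The $i$-th marginal is taken at base point $\sum_{i'\in U,\,i'<i}\Delta_{i'}$. This base point is coordinatewise at most $d^{G}(\tau_{i-1})$, since Greedy's vector at $\tau_{i-1}$ already contains at least those increments on their coordinates. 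The increment $\Delta_i$ lives on coordinates where both base points are zero, because those packets have not yet arrived. Continuous submodularity (diminishing returns, equivalently the DR property along coordinates not in the base support) then gives that the marginal at the smaller base point is at least the marginal at the larger base point, which is exactly $1$ for a full phase. The subtlety I expect is that continuous submodularity in the lattice sense only directly gives the DR inequality when the increment is on coordinates where the two base points agree. That holds here because both are zero on $\mathrm{supp}(\Delta_i)$, which is precisely the zero-coordinate diminishing-marginals condition mentioned in the abstract. I would isolate that inequality as a lemma, derived from submodularity by applying $f(x\vee y)+f(x\wedge y)\le f(x)+f(y)$ with $x=d^{G}(\tau_{i-1})$ and $y=\text{base}+\Delta_i$.

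Combining the two cases, $k_G\le |\text{acked}|+|U|+1\le k_{\OPT}+f(d^{\OPT})+1$. The last phase is handled as in the additive proof: either it is ended by an acknowledgment that $\OPT$ must also essentially pay for, or it is partial and contributes at most $1$ to delay without creating an extra unmatched acknowledgment. With that bookkeeping, $\mathrm{ALG}\le 2\,\OPT$.
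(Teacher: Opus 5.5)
Your proof is correct and follows the paper's template: Greedy pays at most twice its number of acknowledgments, and each phase is charged either to an acknowledgment of \OPT inside it or, via diminishing marginals against Greedy's own unit marginal, to one unit of \OPT's delay. Where you differ is the auxiliary vector.

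The paper uses the time-indexed capped vector $\capd(j)=\min\{d^*(j),d(j)\}$ and telescopes $f(\capd)$ over all $k$ phases. It must then verify $\capd(j-1)+\Delta^j\le\capd(j)$ by tracking how \OPT's waiting times evolve inside an unacked phase. It also relies on $\capd(j)$ being nondecreasing, so that acked phases contribute nonnegatively.

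You instead lower bound the final vector $d^{\mathrm{OPT}}$ by $\sum_{i\in U}\Delta_i$. This is Greedy's final delay vector restricted to packets of unacked phases, and it is coordinatewise below the paper's $\capd$. You telescope only over $U$, with base points $B_i\le d^{G}(\tau_{i-1})$. This never looks at \OPT at intermediate times and is a bit cleaner.

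Your one-shot derivation of the marginal inequality is also right. With $x=d^{G}(\tau_{i-1})$ and $y=B_i+\Delta_i$, one has $x\vee y=x+\Delta_i$ and $x\wedge y=B_i$, so the lattice inequality is exactly the needed comparison. That step uses full continuous submodularity, whereas the paper's coordinate-by-coordinate observation needs only zero-coordinate diminishing marginals. Your telescoping works verbatim with that observation, so you lose no generality.

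Finally, drop the $+1$ in your count; as written it only gives $\mathrm{ALG}\le 2\,\mathrm{OPT}+2$. \OPT acknowledges packet $n$ at a time $\ge a_n>\tau_{k-1}$, so the last phase, extended to $(\tau_{k-1},\infty)$ if necessary, is always acked. Hence every phase in $U$ is full, and $k_G\le k_{\mathrm{OPT}}+|U|\le \mathrm{OPT}$.
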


Continuous submodularity (\cref{def:cs}) generalizes the notion of submodularity for discrete functions to functions on real vectors~\cite{francis}. Patton, Russo, and Singla~\cite{PattonR023} observed that $\ell_p$ norms and several other symmetric norms are continuous submodular. They also showed that any symmetric norm can be approximated by a continuous submodular norm up to a logarithmic factor. Using the results of \cite{PattonR023}, we get the following result.

\begin{restatable}{corollary}{normthm} %
  There is a deterministic 2-competitive algorithm for TCP Acknowledgment, where the delay cost is: an $\ell_p$ norm, a $\textsf{Top-}\ell$ norm, or an ordered norm. When the delay cost is a symmetric norm $\norm$, then there is a deterministic $O(\log \rho)$-competitive algorithm where $\rho = \norm[(1,1,\ldots,1)]/\norm[(1,0,\ldots,0)] \leq n$. In particular, there is a deterministic $O(\log n)$-competitive algorithm when the delay cost is a monotone symmetric norm.
\end{restatable}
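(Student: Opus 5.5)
The plan is to derive the corollary from \cref{thm:greedy} combined with the structural results of Patton, Russo, and Singla~\cite{PattonR023}. There are two parts: the exactly continuous submodular norms, and a reduction for general symmetric norms.

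\textbf{First part.} We show that each listed norm is a monotone, continuous submodular function on the nonnegative orthant. Delay vectors are nonnegative, so this is the only domain that matters. Continuous submodularity means that for all $x,y\ge 0$,
\[
f(x\vee y)+f(x\wedge y)\le f(x)+f(y).
\]
For twice-differentiable $f$ it is equivalent to the mixed partials $\partial_i\partial_j f$ being nonpositive for $i\neq j$.
\begin{itemize}
\item For $\ell_p$ norms with $p\ge 1$, write $\|x\|_p=(\sum_i x_i^p)^{1/p}$. This is a concave, nondecreasing function of a separable sum, so the cross partials are nonpositive on the positive orthant. Continuity extends the inequality to the boundary.
\item For $\textsf{Top-}\ell$ norms, \cite{PattonR023} show they are continuous submodular.
\item Every ordered norm is a nonnegative combination of $\textsf{Top-}\ell$ norms, and continuous submodularity is preserved under nonnegative combinations.
\end{itemize}
All of these are monotone on $\mathbb{R}_{\ge0}^n$. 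Applying \cref{thm:greedy} with $f$ equal to the norm then gives a $2$-competitive Greedy.

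\textbf{Second part: general symmetric norms.} Let $\norm$ be a symmetric norm. We invoke the approximation theorem of \cite{PattonR023}: there is a monotone continuous submodular norm $g$ with
\[
g(x)\le \|x\|\le \alpha\, g(x)\quad\text{for all } x\ge 0, \qquad \alpha=O(\log\rho).
\]
The algorithm is to run Greedy on the modified instance with delay cost $g$. The competitive ratio follows from two observations about the objective.
\begin{itemize}
\item Scaling only the delay term changes the objective $k+f(d)$ by at most a factor $\alpha$. More precisely, for any schedule $S$ we have $\mathrm{cost}_g(S)\le \mathrm{cost}_{\|\cdot\|}(S)\le \alpha\,\mathrm{cost}_g(S)$.
\item Let $S$ be Greedy's schedule on the $g$-instance. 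Then
\[
\mathrm{cost}_{\|\cdot\|}(S)\le \alpha\,\mathrm{cost}_g(S)\le 2\alpha\,\OPT_g\le 2\alpha\,\OPT_{\|\cdot\|}.
\]
\end{itemize}
This gives an $O(\log\rho)$ ratio. Finally, $\rho\le n$: by the triangle inequality, symmetry, and the fact that a symmetric norm has the same value on every standard basis vector, $\|(1,\dots,1)\|\le n\,\|(1,0,\dots,0)\|$. This yields the $O(\log n)$ bound.

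\textbf{Main obstacle.} The hard step is matching the exact form of the \cite{PattonR023} approximation to the hypotheses of \cref{thm:greedy}. Greedy must be computable online with $g$, so $g$ has to be an explicit, fixed function known in advance; it cannot depend on the input. The approximation must hold on the whole nonnegative orthant, not only on sorted vectors, and $g$ must be monotone. If the approximation is given only for sorted vectors, the fix is to use symmetry of both norms, so the bound extends to all of $\mathbb{R}^n_{\ge0}$ by permutation invariance.

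A second, minor point is $n$-dependence. Greedy needs the dimension $n$ of the delay vector, which may be unknown online. I would handle this by evaluating $g$ on finitely supported vectors, padding with zeros. This is consistent because a symmetric norm's value is unchanged by appending zero coordinates. The continuity and monotonicity checks for the first part are routine.
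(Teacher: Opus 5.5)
Your proposal is correct and takes the same route as the paper. The paper simply combines \cref{thm:greedy} with two facts from \cite{PattonR023}: that $\ell_p$, $\textsf{Top-}\ell$, and ordered norms are continuous submodular, and that every symmetric norm is $O(\log\rho)$-approximated by a monotone submodular norm. Your sandwich argument $\mathrm{cost}_g(S)\le \mathrm{cost}_{\|\cdot\|}(S)\le \alpha\,\mathrm{cost}_g(S)$ and the triangle-inequality bound $\rho\le n$ correctly supply the steps the paper leaves implicit.
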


Our final result shows that if we venture too far from norms to concave functions, then the problem becomes significantly more difficult.\footnote{Recall that norms are convex.} 
\begin{restatable}{theorem}{concavelb}
  \label{thm:concave-lb}
  Every deterministic algorithm is $\Omega(\sqrt{n})$-competitive for TCP Acknowledgment with batch-oblivious concave delay cost.
\end{restatable}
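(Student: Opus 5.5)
We prove \cref{thm:concave-lb} with an adaptive adversary. It exploits concavity so that delaying many packets costs almost nothing, while acknowledging early is punished. The concave function we use is a truncated count: $f(d) = \min\{\sqrt{n}, \sum_j \min\{1, d_j/\epsilon\}\}$ for a small $\epsilon>0$, or, more simply, a concave nondecreasing function of $\sum_j \min(d_j,\epsilon)$ that saturates at $\sqrt{n}$. This $f$ is monotone and concave. Any solution whose packets mostly wait pays at most $\sqrt{n}$ in delay in total. If a solution delays essentially nothing, it must acknowledge each packet at (or within $\epsilon$ of) its arrival.

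The adversary runs as follows. It releases packets one at a time, spaced far apart in time. Packets are fed in phases, and while the algorithm has not yet caused the saturation term to be reached, the delay cost acts like a nearly additive per-packet penalty of $1$ for any packet that waits more than $\epsilon$. Suppose the online algorithm accumulates many pending packets, or keeps acknowledging. Then:
\begin{itemize}
\item If the algorithm acknowledges each packet promptly, it pays roughly $n$ acknowledgments, whereas \OPT waits and pays at most $1+\sqrt{n}$. This gives ratio $\Omega(\sqrt{n})$.
\item If the algorithm lets packets wait, it incurs delay. The adversary then needs $f$ to make this delay expensive relative to \OPT. That is the tension, since \OPT also benefits from the cap.
\end{itemize}

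To resolve the tension, I would use a cap that is reached only via many packets: $f(d)=\sqrt{\#\{j: d_j>\epsilon\}}$, the square root of a count. This function is concave along lines in the region where the count changes, but it is only piecewise constant. So I would smooth it, for example by replacing the count with $\sum_j \min(1,d_j/\epsilon)$, which makes the function concave and monotone. With this $f$:
\begin{itemize}
\item \OPT can always choose between two strategies. It can acknowledge every packet immediately, for cost $n$. Or it can acknowledge once at the end, for cost $1+\sqrt{n}$. So $\OPT\le 1+\sqrt{n}$.
\item An online algorithm that delays $m$ packets pays $\sqrt{m}$. The adversary stops releasing packets right after the algorithm's acknowledgment behaviour commits it. Specifically, the adversary sends a packet only after the algorithm has acknowledged the previous one. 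If the algorithm never acknowledges a packet, the adversary waits, and the delay grows without bound only if the waiting time is transformed unboundedly. So I would instead let $f$ include an additive term $\sum_j d_j/T$ with a large horizon $T$, which is still concave. This forces the algorithm either to acknowledge, or to pay linearly in time while \OPT has already finished.
\end{itemize}
Putting these together, the algorithm pays about one acknowledgment per packet, at least $\Omega(n)$ in total, against $\OPT=O(\sqrt{n})$. This yields the $\Omega(\sqrt{n})$ bound.

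The main obstacle is making this adaptive argument airtight. The algorithm might batch a few packets at a time, paying $n/b$ acknowledgments plus about $\sqrt{n}$ in delay. That shows a batching algorithm only pays $\Theta(\sqrt n)$, so a fixed instance is insufficient. The adversary must therefore adapt the spacing, releasing the next packet only at the moment of an acknowledgment, so that \OPT can batch everything in hindsight. The key step is checking that every pending packet at a given time was forced upon the algorithm by its own choice to wait.
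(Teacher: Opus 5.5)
\textbf{There is a genuine gap: the construction you finally settle on does not give the bound.}

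Once you add the linear term $\sum_j d_j/T$ so that waiting is never free, your estimate $\OPT\le 1+\sqrt{n}$ no longer holds. A single final acknowledgment now costs $1+\sqrt n+\sum_j (a_n-a_j)/T$. Under your release-after-acknowledgment adversary, the gaps $a_{j+1}-a_j$ are chosen by the algorithm. Concretely, let the algorithm acknowledge every packet exactly $T$ time units after it arrives. Against your adversary the packets then arrive at least $T$ apart. The algorithm pays $n$ acknowledgments, $n$ from the linear term and at most $\sqrt n$ from the square-root term, i.e.\ at most $2n+\sqrt n$. On the same sequence, any batch of $b$ consecutive packets costs at least $1+\sum_{i=0}^{b-1} i=1+b(b-1)/2\ge b$. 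So $\OPT\ge n$ and the ratio is $O(1)$. The same happens with any unbounded per-packet penalty $\sum_j h(d_j)$: the algorithm waits until $h(d_j)=1$, which stretches the input until batching is as expensive for $\OPT$ as for the algorithm. The step you flag as the ``main obstacle'' is exactly where the argument breaks, and nothing in your sketch replaces it.

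\textbf{The missing idea.} You need a waiting cost that $\OPT$ can escape in hindsight but the algorithm cannot. The paper gets this from asymmetry rather than from a cap. It uses $f(d)=\min\{\epsilon d_{\le \ell}+d_{>\ell},\,(n/\ell)d_{\le\ell}+\epsilon d_{>\ell}\}$ with $\ell=\lceil\sqrt n\rceil$ and $\epsilon=1/n^2$. The first $\ell$ packets arrive at unit times, and the adversary branches on whether the algorithm has made at least $\ell/2$ acknowledgments by time $\ell+1$.
\begin{itemize}
\item If it has, the remaining $n-\ell$ packets arrive together at time $\ell+1$. $\OPT$ acknowledges once, paying $1+O(\ell^2\epsilon)$ through the first linear piece.
\item If not, the first $\ell$ packets have waited more than $\ell/2$ in total, so the second piece is at least $n/2$. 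The unit-spaced tail forces $(k-A_\ell)+d_{>\ell}=\Omega(n)$, which handles the first piece. Meanwhile $\OPT$ acknowledges the first $\ell$ packets on arrival and once at time $n$, paying $\ell+O(n^2\epsilon)$ through the second piece.
\end{itemize}

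\textbf{Your first idea is valid but degenerate.} Your worry about an algorithm that never acknowledges is unfounded. An online algorithm must produce a valid schedule on every finite prefix. So, exactly as in \cref{lem:summono-lb:threshold}, it acknowledges packet $i$ at some finite time, and the adversary can release packet $i+1$ just afterwards. Hence your capped $f=\min\{\sqrt n,\sum_j\min\{1,d_j/\epsilon\}\}$ without the linear term, combined with that adversary, does force $n$ acknowledgments against $\OPT\le 1+\sqrt n$. That is valid in the model as stated. However, it rests entirely on $f$ being bounded, i.e.\ on waiting eventually becoming free; a constant cap would even give $\Omega(n)$. The paper's $f$ has every weight at least $\epsilon>0$, so its bound holds in the non-degenerate regime your linear term was meant to reach, which is precisely where your adversary fails.
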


\subsection{Our Techniques}
\label{sec:techniques}

We first develop intuition for our algorithms by recalling the Greedy algorithm and its analysis for the classic additive delay costs. Then, we describe the modifications required to prove that Greedy is $2$-competitive for max-monotone and continuous submodular delay costs. Finally, we give an overview of our algorithmic framework for sum-monotone delays, where greedy strategies are no longer competitive, and outline the phase-based approach that overcomes this barrier, together with the accompanying logarithmic lower bound.

\paragraph{Recap: Additive Delay.}
The Greedy algorithm can be easily described as follows: send an acknowledgment when the total increase in its delay cost since the last acknowledgment equals $1$---the cost of an acknowledgment. For additive delays, this is equivalent to sending an acknowledgment when the total waiting time of unacknowledged requests since the last acknowledgment reaches $1$. Let us now analyze the competitive ratio of Greedy for additive delays. Suppose Greedy makes $k$ acknowledgments at times $t_1 < \ldots < t_k$, with batches $\sigma_1, \ldots, \sigma_k$. The analysis consists of two main components:
\begin{enumerate}
  \item By design, Greedy's acknowledgment cost equals its delay cost, so its cost is exactly $2k$.
  \item The optimal solution pays at least $1$ during each of the $k$ intervals $I_j = [t_{j-1}, t_j)$ for $j \in [k]$, where $t_0=0$. This is because with additive delays, if the optimal solution does not send an acknowledgment within the time interval $I_j$, then the packets in batch $\sigma_i$ contribute $\bdelay(\sigma_i,t)$ to its delay cost.
\end{enumerate}
While the first component still holds for any monotone delay cost, the second component breaks for the delay costs that we consider.

\paragraph{Max-Monotone Delay.}
For max-monotone delays, we get that $\bdelay(\sigma_j,t_j) = j$ and if the optimal solution does not send an acknowledgment during interval $I_j$, then the packets of $\sigma_j$ were acknowledged together at some time $t^{*}_j \geq t_j$, and so by monotonicity of $\bdelay(\cdot,\cdot)$, the delay cost of the optimal solution is at least $j$. Thus, if the optimal solution makes $k^{*} < k$ acknowledgments, its delay cost is at least $k - k^{*}$, and so its total cost is at least $k$.

\paragraph{Sum-Monotone Delay.}
For sum-monotone delay costs, greedy strategies are not competitive. We show that for any constant threshold, there exists an instance where a greedy algorithm incurs a cost that is $\Omega(n)$ times larger than that of the offline optimum. This fundamental limitation stems from the additive nature of the delay cost: greedy algorithms react too locally, failing to recognize when it would be better to wait and serve a larger group of packets with a single acknowledgment. For instance, if the delay function is concave and grows steadily up to the acknowledgment cost, and if the packets are spaced far enough apart, greedy will serve each packet individually --- incurring an acknowledgment cost each time --- while the optimal solution serves all at once at the last packet arrival, paying the delay cost only once and saving on acknowledgments.

To overcome this, we use our offline dynamic programming solution to guide online decisions. The key idea is to monitor the cost of the optimal solution on the packets seen so far. We partition the execution of the algorithm into consecutive \emph{services}, each of which aggregates packets and ends with an acknowledgment. At the start of a service, we examine the sequence of packets observed so far and identify the longest suffix for which the offline optimum issues a single acknowledgment. The cost of optimally serving this suffix determines a \emph{budget} for the service. The algorithm then continues to accumulate packets and issues an acknowledgment as soon as the total delay cost exceeds this budget. Thus, the service budget is derived directly from the structure of an optimal offline solution on the currently observed packets.

To avoid repeatedly issuing acknowledgments with small accumulated delay costs, each budget service is followed by three \emph{buffer services} whose budgets are larger by a constant factor than that of the preceding budget service. These buffer services separate consecutive budget services and later allow us to analyze their costs independently. Throughout the execution, the algorithm continues to monitor newly observed suffixes. During a budget service, the arrival of any longer critical suffix triggers an update of the service budget, whereas during a buffer service, such an update occurs only if the offline cost of a suffix increases by a constant factor. This controlled budget growth ensures that the number of services grows only logarithmically, implying that the total cost incurred by the algorithm is within an $O(\log n)$ factor of the offline optimum.

\paragraph{Lower Bound for Sum-Monotone Delay.}
Observe that if the batch delay cost depends only on the maximum waiting time of the requests in the batch, then the offline problem essentially reduces to an interval covering problem, where we seek to cover the arrival times with intervals whose cost is a function of the interval length. Thus, it makes sense to consider the Parking Permit Problem~\cite{Meyerson05}, a generalization of the classic Ski Rental problem. In this problem, we are given different permit types and for each permit type $k$, its cost $C_k$ and duration $D_k$. Purchasing permit $k$ at time $t$ covers the time up to $t + D_k$. Online, a sequence of requests over time, and for each requested time $t$, we need to ensure that it is covered by a purchased permit. A deterministic lower bound of $\Omega(\log n)$ is known for this problem \cite{NaorUW22}.

Offline, it is straightforward to reduce Parking Permit to our problem, since both problems are interval covering problems in which the cost of an interval depends only on its length. However, it is unclear how to do the reduction online. Intuitively, we would like to use an algorithm for our problem to solve Parking Permit as follows: when a request arrives for Parking Permit at time $t$, we give the same request to the TCP Acknowledgment algorithm, see how long it waits before acknowledging the request, and buy a permit of the corresponding length. Unfortunately, at the time the request arrives, the TCP Acknowledgment algorithm does not need to commit to when it will acknowledge the request. Indeed, if the next request arrives immediately after, it may be acknowledged at a different time than if it arrives much later. We leave the possibility of an online reduction from Parking Permit to our problem as an open, tantalizing problem.

Fortunately, the deterministic adversarial sequence for Parking Permit ensures that the next requested time is later than all the permits the algorithm has purchased so far. Thus, when the Parking Permit request arrives at time $t$, we pass it to the TCP Acknowledgment algorithm and simulate it into the future to determine at what time $t^*$ it would acknowledge the request, assuming no further requests, and purchase the corresponding permit that covers up to time $t^*$. Since the next Parking Permit request arrives after the permit expires, the next TCP Acknowledgment request also arrives after $t^*$. Since the TCP Acknowledgment algorithm is online and the actual request sequence up to time $t^*$ matches the simulation, it also acknowledges at time $t^*$. Thus, the TCP Acknowledgment algorithm and the Parking Permit algorithm behave identically, yielding the desired lower bound.

\paragraph{Continuous Submodular Delay.}
The proof of \cref{thm:greedy} requires a more subtle analysis of Greedy. The second part of the greedy analysis for additive delays breaks down even when the continuous submodular delay cost is a concave function of the total waiting time. Intuitively, this is because the optimal solution can ``front-load'' its delay cost by acknowledging, e.g., the first few packets a long time after their arrival. This increases the delay cost, so later packets contribute only a small marginal amount to it. To address this, we consider a hypothetical solution that produces the same acknowledgments as the optimal solution, but each packet's waiting time is capped at its waiting time in Greedy. Since $f$ is monotone, capping the waiting times can only decrease the delay cost. Thus, the cost of the hypothetical solution is at most the cost of the optimal solution, so it suffices to prove a lower bound on the cost of the hypothetical solution. The capping prevents the hypothetical solution from front-loading its delay cost and enables us to argue that if it does not make an acknowledgment during interval $I_j$, then its delay cost increases by at least $1$ during the interval.

\section{Preliminaries}

\paragraph{Problem Setup.}
We consider the \emph{Online TCP Acknowledgment} problem. The input is a sequence of packet arrival times $A = ( a_1, a_2, \dots, a_n )$, where $a_j$ is the time at which the $j$-th packet arrives, and we assume $a_1 \le a_2 \le \cdots \le a_n$. An \emph{online algorithm} must select acknowledgment times $t_1 < t_2 < \cdots < t_k$, such that every packet is acknowledged at or after arrival --- that is, for each $1 \le j \le n$, there exists some $1 \le i \le k$ with $a_j \le t_i$. Each acknowledgment $t_i$ covers the batch $\sigma_i = \{\, j : t_{i-1} < a_j \le t_i \,\}$, where we set $t_0 = 0$. These batches form a partition of the packet indices $\{1,\dots,n\}$. Each packet $j \in \sigma_i$ experiences a \emph{waiting time} (or \emph{delay}) $d_j = t_i - a_j$, and we denote the full delay vector by $d = (d_1, \dots, d_n)$. The goal is to minimize the total cost, which typically balances two competing objectives: reducing the number of acknowledgments and limiting the total incurred delay. The online nature of the problem is that, as time $t$ increases, the algorithm is aware only of requests whose arrival time is at most $t$ and can only make an acknowledgment at the current time $t$.

\paragraph{Batch-Aware TCP.}
The \emph{Batch-Aware TCP} model defines the delay cost in terms of the individual acknowledgment batches. Let $\bdelay(\sigma_i, t_i)$ denote the delay cost of acknowledging batch $\sigma_i$ at time $t_i$. The total delay cost is then computed from the per-batch costs, thereby capturing richer dependencies between the acknowledgment structure and delay.

In this work, we focus on two important classes of batch-aware cost models. In the \emph{sum-monotone delay} model, the total cost is given by $k + \sum_{i=1}^k \bdelay(\sigma_i, t_i)$. Here, the function $\bdelay(\cdot, \cdot)$ is required to be monotone non-decreasing: if one batch is a subsequence of another and the acknowledgment times are ordered accordingly, the delay cost cannot decrease. This model captures objectives in which the overall delay accumulates across batches and scales with the batch size or delay duration.

In the \emph{max-monotone delay} model, the objective is $k + \max_{1 \le i \le k} \bdelay(\sigma_i, t_i),$ where $\bdelay(\cdot, \cdot)$ is any monotone non-decreasing function. Unlike the sum-based case, this model focuses on minimizing the worst per-batch delay contribution, which is natural in applications where fairness or tail performance is critical.

\paragraph{Batch-Oblivious TCP.}
In the \emph{Batch-Oblivious TCP} model, the delay cost is defined globally over the entire delay vector $d$. Specifically, the delay cost is given by a function $f_n(d_1, \dots, d_n)$, where $f_n : \mathbb{R}_+^n \to \mathbb{R}_+$ is assumed to be monotone and consistent under zero-padding. Since the number of packets $n$ is not known in advance, we assume the algorithm learns the function $f_i$ only when the $i$th packet arrives. These functions satisfy the consistency condition
\[
    f_{i+1}(d_1, \dots, d_i, 0) = f_i(d_1, \dots, d_i),
\]
ensuring that appending a packet with zero delay does not affect the cost. A natural example of this model is total waiting time, where $f_n(d) = \sum_{j=1}^n d_j$. The overall objective in this setting is to minimize $k + f_n(d)$, where $k$ denotes the number of acknowledgments. We refer to this setting as batch-oblivious because the delay cost treats all packet delays globally, regardless of which acknowledgment batch a packet belongs to.

\section{Max-Monotone Delay Cost}
\label{sec:max-monotone}

We begin our investigation with the max-monotone delay model, where the total cost is given by the number of acknowledgments plus the maximum per-batch delay cost. We show that the natural greedy algorithm is 2-competitive.

\thmmaxmonotone*

\begin{proof}
Recall that we assume the acknowledgment cost is normalized to 1. We analyze the natural greedy algorithm that sends an acknowledgment whenever the total cost increases by 1. Since we are in the max-monotone model, this is equivalent to acknowledging a batch of packets once its delay cost reaches the current batch index. For example, the algorithm sends the first acknowledgment when the currently pending packets accumulate delay cost 1, then the second when the pending batch reaches delay cost 2, and so on. In this way, each acknowledgment increases the total cost by exactly 1 from the delay, in addition to the unit cost of the acknowledgment.

Let $t_j$ denote the time when the algorithm makes its $j$-th acknowledgment, and define the interval $T_j = (t_{j-1}, t_j]$, with $t_0 = 0$. Moreover, let $\sigma_j$ be the batch of packets acknowledged at time $t_j$. By design, we have $\bdelay(\sigma_j, t_j) = j$, since the algorithm waited until this batch’s delay cost reached exactly $j$. Thus, the total cost of the algorithm is $k$ from acknowledgments and $\max_{1 \le j \le k} \bdelay(\sigma_j, t_j) = k$ from delay, totaling at most $2k$.

Denote by $k^*$ the number of acknowledgments made by the optimal solution. We now argue that its total cost is at least $k$, which implies that the greedy algorithm is 2-competitive.

\begin{claim}
\label{clm:max-mono}
If the optimal solution does not send an acknowledgment during interval $T_j$, then its delay cost is at least $j$.
\end{claim}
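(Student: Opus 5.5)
The plan is to show that, if $\OPT$ sends no acknowledgment in $T_j=(t_{j-1},t_j]$, then some batch of $\OPT$ contains all of $\sigma_j$ and is acknowledged at a time $t^*\ge t_j$; monotonicity of $\bdelay$ then gives the bound.

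First, assume $\sigma_j$ is nonempty. This is automatic: an empty pending set has delay cost $\bdelay(\emptyset,\cdot)$, and by monotonicity this is at most the delay of any batch. If it already reached $j$, the claim is immediate for any solution with at least one batch. So let $a$ be any packet of $\sigma_j$. Then $t_{j-1}<a\le t_j$.

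Next, $\OPT$ must acknowledge $a$ at some time $t^*\ge a>t_{j-1}$. Since $\OPT$ makes no acknowledgment in $(t_{j-1},t_j]$, we get $t^*>t_j$. Let $t'$ be $\OPT$'s last acknowledgment before $t^*$. Then $t'\le t_{j-1}$, or $t'$ does not exist.

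It follows that $\OPT$'s batch $\sigma^*$ acknowledged at $t^*$ contains every packet arriving in $(t'\!,t^*]\supseteq(t_{j-1},t_j]$. Hence $\sigma_j\subseteq\sigma^*$, and $\sigma_j$ is a contiguous subsequence of $\sigma^*$.

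Finally, apply monotonicity in both arguments:
\[
\bdelay(\sigma^*,t^*)\ \ge\ \bdelay(\sigma_j,t^*)\ \ge\ \bdelay(\sigma_j,t_j)=j .
\]
The equality holds by Greedy's rule. Since $\OPT$'s delay cost is the maximum over its batches, it is at least $j$.

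The only delicate point is the boundary convention. Batches are half-open intervals $(t_{i-1},t_i]$, so a packet arriving exactly at $t_{j-1}$ belongs to $\sigma_{j-1}$, not to $\sigma_j$. I would check that the containment $\sigma_j\subseteq\sigma^*$ is consistent with this convention.

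The claim then gives the overall bound. Greedy acknowledges in each of the $k$ disjoint intervals $T_1,\dots,T_k$. If $\OPT$ makes $k^*<k$ acknowledgments, then by pigeonhole it skips at least $k-k^*$ of these intervals. Let $j$ be the largest skipped index. Then $j\ge k-k^*$, so $\OPT$'s delay cost is at least $k-k^*$ and its total cost is at least $k^*+(k-k^*)=k$. If instead $k^*\ge k$, its cost is already at least $k$.
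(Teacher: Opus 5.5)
Your proof is correct and takes the same route as the paper's. Every packet of $\sigma_j$ ends up in a single later \OPT batch $\sigma^*$ acknowledged at some $t^*\ge t_j$, and monotonicity in both arguments then gives $\bdelay(\sigma^*,t^*)\ge\bdelay(\sigma_j,t_j)=j$; you simply spell out the containment $\sigma_j\subseteq\sigma^*$ more carefully than the paper does.

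Your empty-batch aside is unnecessary, and as written it is not quite right. Comparing $\bdelay(\emptyset,t_j)$ with an \OPT batch acknowledged before $t_j$ does not follow from monotonicity. Under the natural convention $\bdelay(\emptyset,\cdot)=0$, which the paper also assumes implicitly, Greedy never acknowledges an empty batch, so the aside can simply be dropped.
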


\begin{proof}
Since no acknowledgment is made during $T_j$, all packets in $\sigma_j$ must be acknowledged by the optimal solution in a later batch $\sigma^*$, sent at time $t^* \ge t_j$. Because $\bdelay(\cdot,\cdot)$ is monotone non-decreasing in both batch size and acknowledgment time, it follows that $\bdelay(\sigma^*, t^*) \ge \bdelay(\sigma_j, t_j)$, which equals $j$, proving the claim.
\end{proof}

Suppose that the optimal solution makes $k^{*}$ acknowledgments. If $k^{*} \geq k$, then we are done. Otherwise, \cref{clm:max-mono} implies that the delay cost of the optimal solution is at least $k - k^{*}$ (as the acknowledgments can be placed in the last $k^*$ $T_i$ intervals) and so the optimal solution has total cost at least $k$, as desired.
\end{proof}

\section{Sum-Monotone Delay Cost}

In this section, we focus on TCP acknowledgment under the sum-monotone delay-cost model. First, we prove that any deterministic greedy policy is $\Omega(n)$-competitive: there exists an input of size $n$ that enforces the greedy algorithm to pay $n$-fold more than the offline optimum. We then turn to the offline problem and present a polynomial-time algorithm via dynamic programming, that computes an exact optimum schedule. By embedding this offline oracle into an online framework, we obtain a new algorithm with an $O(\log n)$ competitive ratio: at each decision point, it computes the offline optimum schedule for the set of all packets that have arrived so far and uses that schedule to guide its acknowledgments. Finally, we show that every deterministic algorithm is $\Omega(\log n)$-competitive.

\subsection{Lower Bound for Greedy}
We begin with a more precise statement of \cref{thm:greedy-fail-inf}.
Elsewhere in this paper, when we talk about the Greedy algorithm, we refer to
the algorithm that sends an acknowledgment whenever $\bdelay(\sigma,t) = 1$,
where $\sigma$ is the set of currently unacknowledged packets and $t$ is the
current time. Our lower bound holds for a family of greedy algorithms: for $\tau > 0$, algorithm
$\greedytau$ sends an acknowledgment when $\bdelay(\sigma,t) = \tau$. 

\begin{theorem}
  \label{thm:greedy-fail}
  For every $\tau > 0$, the algorithm $\greedytau$ is $\Omega(n)$-competitive for TCP Acknowledgment with
  sum-monotone delay cost.  
\end{theorem}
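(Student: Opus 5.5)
We will construct, for each fixed $\tau>0$ and each $n$, an explicit sum-monotone instance on which $\greedytau$ pays $\Omega(n)$ while the offline optimum pays $O(1)$. The delay function is allowed to depend on $\tau$ (and on $n$), since the competitive ratio is a worst case over all monotone $\bdelay(\cdot,\cdot)$. The construction follows the intuition from \cref{sec:techniques}. We take a batch delay cost that depends only on the waiting time of the earliest pending packet, $w=t-\min_{j\in\sigma}a_j$. We set $\bdelay(\sigma,t)=g(w)$ with a concave non-decreasing $g$ that rises quickly to $\tau$ and then stays essentially flat. Concretely, choose $g(w)=\tau\min(w,1)$ for $w\le L$, and let $g$ grow afterwards only as far as needed (e.g.\ $g(w)=\tau$ up to a large horizon $L$). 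This is monotone in both arguments: adding packets can only decrease $\min a_j$, and later $t$ increases $w$. Packets arrive at times $a_j=2(j-1)$ for $j=1,\dots,n$, so consecutive arrivals are spaced by more than the time $1$ at which $g$ reaches $\tau$.

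\textbf{Greedy's cost.} We show that $\greedytau$ acknowledges each packet alone. When packet $j$ arrives, all earlier packets have been acknowledged. Suppose inductively that packet $j-1$ was acknowledged at time $a_{j-1}+1<a_j$. Then packet $j$ is alone in the pending set, $\bdelay$ reaches $\tau$ at time $a_j+1<a_{j+1}$, and greedy acknowledges it then. Hence greedy makes $n$ acknowledgments and pays delay $\tau$ per batch, for a total of $n(1+\tau)\ge n$.

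\textbf{The optimum's cost.} Offline, we acknowledge everything once at time $a_n$. The single batch has $w=a_n-a_1=2(n-1)$, so we choose the horizon $L\ge 2n$ so that $g$ is still $\tau$ there. The cost is then $1+\tau$, and the ratio is $n(1+\tau)/(1+\tau)=n$, i.e.\ $\Omega(n)$ for every $\tau$.

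\textbf{Main obstacle.} The delicate point is ruling out a degenerate or trivially unfair instance. Two things need checking. First, $g$ must genuinely equal $\tau$ at a finite time, so that greedy's trigger fires; this holds since $g$ is continuous in $t$ and reaches $\tau$ at $w=1$. Second, the definition must satisfy the paper's monotonicity requirement for subsequences, which holds because $\min a_j$ over a superset is no larger. If one insists that $\bdelay$ be strictly increasing or independent of $n$, we would instead use $g(w)=\tau\min(w,1)+\epsilon\log(1+w)$ with $\epsilon=1/n$. Greedy is unaffected by the small term beyond $w=1$, since it fires at $w\le 1$. The optimum pays at most $1+\tau+\epsilon\log(2n)=O(1+\tau)$, so the bound still holds.
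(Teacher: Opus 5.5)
Your proof is correct and is essentially the paper's argument: a batch delay that saturates at $\tau$, arrivals spaced farther apart than the saturation time so that $\mathrm{Greedy}_{\tau}$ acknowledges each packet alone and pays $n(1+\tau)$, against a single acknowledgment at $a_n$ costing at most $1+\tau$. The only difference is cosmetic. The paper caps the sum of waiting times, $\min\{\sum_{i\in\sigma}(t-a_i),\tau\}$, with spacing $\tau+\epsilon$, whereas you cap a function of the earliest packet's waiting time. Your remarks about the horizon $L$ and the $\epsilon\log(1+w)$ perturbation are unnecessary, since $\tau\min(w,1)$ already works for all $w$ and is independent of $n$; taking $\epsilon=1/n$ would in fact introduce a dependence on $n$.
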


\begin{proof}
 Fix $\tau > 0$. Define the batch delay function $\bdelay(\sigma,t) =
 \min\{\sum_{i \in \sigma} d_i, \tau\}$, where $d_i = a_i - t$. Consider the sequence of $n$ packet with arrival times $a_i = i(\tau + \epsilon)$ for a small enough $\epsilon$.
 Observe that $\greedytau$ makes an acknowledgment exactly $\tau$ time steps
 after the arrival of each packet. Thus, $\greedytau$ makes $n$ acknowledgments
 and incurs a delay cost of $n\tau$ for a total cost $n(1+\tau)$. 
  On the other hand, the solution that waits to make a single acknowledgment at
  the arrival time of the last packet $a_n$ has a total cost of at most $1 +
  \tau$. Thus, the competitive ratio of $\greedytau$ is at
  least $n$.
\end{proof}

\subsection{Offline Model}

Solving the offline version of the TCP acknowledgment under the sum-monotone delay-cost model can be efficiently performed using dynamic programming. The general structure of the solution closely follows the approach used by Dooly et al.~\cite{dooly2001line}, though here we consider a general batch cost function $\bdelay(\sigma, t)$, making our model a natural generalization of their framework.

Let $\sigma$ be the given sequence of packets, and $\sigma|_i^j$ be its subsequence containing packets $i$ through $j$. We define the subproblems for our dynamic programming solution as follows. Let $\DP[i]$ denote the minimum total cost of acknowledging packets $1$ through $i$, and set $\DP[0] = 0$. Then, the optimal cost can be computed using the following recurrence relation:
\[
    \DP[i] = \min_{1 \leq j \leq i} \left\{ \DP[j-1] + \bdelay\left(\sigma|_j^i, a_i\right) + 1 \right\}.
\]
Intuitively, at each step $i$, the algorithm evaluates all possible previous acknowledgment positions $j$ and selects the one that minimizes the sum of the optimal cost up to $j-1$, the delay cost $\bdelay(\sigma|_j^i, a_i)$, and the acknowledgment cost $1$.

The correctness of this approach follows directly from the optimal substructure property: an optimal partitioning of packets into acknowledged subsequences necessarily implies that each of these subsequences itself must be an optimal solution to the corresponding subproblems. The computational complexity of this method is $O(n^2)$, since evaluating each $\DP[i]$ involves examining $O(n)$ previous acknowledgment points, and we have $n$ such subproblems.

\subsection{Online Model}

Using the offline result from the previous subsection, we now prove the following lemma.

\begin{restatable}{lemma}{thmnested}
    \label{thm:nested}
    There is a deterministic $O(\log n)$-competitive algorithm for TCP
    Acknowledgment with sum-monotone delay cost.
\end{restatable}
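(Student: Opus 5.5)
We plan to give an online algorithm that runs in phases. Each phase is driven by the offline dynamic program from the previous subsection, recomputed on the packets seen so far. The cost will be charged phase by phase to $\OPT$, and a geometric growth argument will show that there are only $O(\log n)$ budget levels.

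\textbf{The algorithm.} Let $\OPT_t$ be the optimal offline schedule, computed by the DP, for the packets that have arrived by time $t$, all of which must be acknowledged.

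Consider the last batch of $\OPT_t$, i.e.\ the longest suffix that $\OPT_t$ serves with a single acknowledgment. Its cost $B = 1 + \bdelay(\text{suffix}, \cdot)$ sets the \emph{budget} of the current service. The algorithm keeps accumulating pending packets and acknowledges once the batch delay of the pending set reaches $B$. Monotonicity of $\bdelay$ makes this threshold well defined in time.

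To keep budgets from shrinking or oscillating, every budget service is followed by three buffer services, each with budget $c\,B$ for a constant $c \ge 2$. The budget is updated upward inside a budget service whenever a longer critical suffix appears. Inside a buffer service it is updated only when the offline suffix cost grows by a factor $c$.

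\textbf{Analysis.} The analysis has three steps.

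\begin{enumerate}[label=(\roman*)]
  \item \emph{Each service is cheap.} A service with budget $B$ costs at most $1 + B \le 2B$, because delay is accrued continuously and the acknowledgment is triggered exactly at the threshold.
  \item \emph{Budgets are charged to $\OPT$.} Every budget $B$ used is at most the cost of some batch of an offline optimum on a prefix, so $B \le \OPT_{\text{prefix}} \le \OPT$. Monotonicity gives $\OPT$ on a prefix at most $\OPT$ on the whole sequence.
  \item \emph{Bounding the number of services.} The buffer services separate consecutive budget services. We will argue that between two consecutive budget services, either $\OPT$ places an acknowledgment inside the window, or the critical suffix strictly lengthens. The length can grow at most $n$ times. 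Because budgets increase geometrically by the factor $c$, they fall into $O(\log n)$ scales; costs below $\OPT/n$ are negligible, since they total at most $O(\OPT)$.
\end{enumerate}

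Grouping services by budget scale $[c^{\ell}, c^{\ell+1})$, the services of one scale are charged to disjoint intervals in which $\OPT$ pays at least a constant fraction of $c^{\ell}$, so each scale contributes $O(\OPT)$. Summing over the $O(\log n)$ scales gives the claimed ratio.

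\textbf{Main obstacle.} The crux is step (iii), in particular the charging of a single scale. We must show that when the algorithm exhausts budget $B$ on a pending batch $\sigma$ without a budget update, $\OPT$ pays $\Omega(B)$ during that window. The argument we would try has two cases.

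\begin{itemize}
  \item If $\OPT$ places no acknowledgment in the window, then $\OPT$ serves $\sigma$ together with earlier packets at a later time. By monotonicity that batch costs at least $\bdelay(\sigma, t) = B$.
  \item If $\OPT$ does acknowledge inside the window, we would use the fact that the critical suffix of the observed prefix did not lengthen. Then the DP on the prefix already pays about $B$ for its last batch, and that cost cannot be avoided by $\OPT$ restricted to the window.
\end{itemize}

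The three buffer services exist precisely so that the windows charged at a given scale are disjoint and the $\OPT$ batches they touch are not double counted. Making this disjointness rigorous, with at most constantly many windows sharing an $\OPT$ batch, is where we expect the most delicate bookkeeping.
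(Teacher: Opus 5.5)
\textbf{There is a genuine gap: the per-scale $O(\OPT)$ bound, which you defer, is never established, and your two cases do not establish it.} Your algorithm is essentially the paper's. Steps (i), (ii) and the count of $O(\log n)$ budget scales are fine. The scale count follows from (ii) alone. The observation that the critical suffix lengthens at most $n$ times yields only an $O(n)$ count. Budgets are also not monotone over time, since the state resets after the third buffer service. The gap is exactly the step you call the crux: that all services of one scale cost $O(\OPT)$ in total.

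Case~2 is false as stated. The budget of a service is a constant multiple of $\bserve(\sigma_c)$ for a critical batch $\sigma_c$ that can lie almost entirely \emph{before} the window. For example, take the maximum waiting time as batch delay and let a lone packet arrive during a buffer service, far from all other packets. $\OPT$ acknowledges it on arrival at cost $1$. It is its own largest critical suffix, so nothing is promoted and the prefix optimum's last batch costs $1$. Yet the algorithm pays the full buffer budget. So the $\Omega(B)$ must be charged to $\OPT$'s cost on $\sigma_c$, i.e.\ on packets already acknowledged by earlier windows, and there it collides with those windows' own charges.

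Case~1 is valid for a single window, but sum-monotone delay is only monotone, not superadditive. One $\OPT$ batch spanning many consecutive same-scale windows therefore pays about $B$ once, not once per window; this is exactly the mechanism behind \cref{thm:greedy-fail}. Showing that only $O(1)$ same-scale windows can share an $\OPT$ batch is therefore not bookkeeping; it is the heart of the proof.

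The missing ingredient is the paper's structural analysis of critical batches, which has two parts.
\begin{itemize}
\item Every batch of $\OPT$ is itself critical. When its last packet arrives it is a critical suffix and lies inside the largest one. If it reaches back over the current service, the algorithm must either extend $\sigma_c$ (line~\ref{alg_line:cond_increase_budget}) or, during a buffer service, promote it. Promotion is forced because the $\OPT$ batch then contains the whole budget window and so has serve at least $2\bserve(\sigma_c)$ (line~\ref{alg_line:cond_buffer_to_budget}). This shows that no $\OPT$ batch meets two coverage intervals of final critical batches (\cref{cor:total_cost_vs_critical_sets}). Applied inside a parent's coverage interval, it shows the children's costs are dominated by the parent's (\cref{cor:parent_cost_equal_children}).
\item Critical batches form a forest under the intersection/parent relation, and each parent's serve is at least twice each child's (\cref{lem:parent_child_doubling_cost}). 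This is where the three buffer services are really used: a later critical batch meeting $\sigma_{b,2}^i$ must contain all of $\sigma_{b,3}^i$, whose serve is $4\bserve(\sigma_c^i)$.
\end{itemize}
Charge the four services of each critical batch, which cost at most $14\bserve(\sigma_c^i)$, to that batch. The forest has depth at most $\log n$ and each level is paid for by $\OPT$, which gives the $O(\log n)$ bound. Your scale classes could replace the paper's levels, since a scale band is an antichain in this forest, but only once these two facts are proved.
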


The key idea is to use the cost of the optimal offline solution on the sequence of packets seen so far to guide when our algorithm issues acknowledgments. Suppose the $j$-th packet arrives and there are no pending packets; we then start a new \emph{budget service} $s$. Let $\sigma|^j$ denote the sequence of the first $j$ packets observed so far.\footnote{Here we slightly overload notation: earlier, $\sigma|^j$ denoted the $j$-th batch served by the algorithm.} We first identify the longest \emph{critical suffix} of $\sigma|^j$, that is, a suffix $\sigma_s \subseteq \sigma|^j$ such that the optimal offline strategy for $\sigma_s$ consists of issuing a single acknowledgment exactly at the arrival time of its last packet. Observe that at least one such suffix always exists, namely the one containing only packet $j$. Moreover, notice that for a critical suffix $\sigma_s$, the optimal cost of serving it equals $\bdelay(\sigma_s, a_j) + 1$. To simplify notation, we define $\bserve(\sigma_s, a_j) = \bdelay(\sigma_s, a_j) + 1$. Whenever an acknowledgment is issued at the arrival time of the last packet in the suffix, we omit the second argument and write $\bserve(\sigma_s)$.

Based on $\sigma_s$, we define a budget $b_s$ for service $s$ (its precise definition will be given later). The service $s$ then aggregates arriving packets and issues an acknowledgment as soon as their total accumulated delay cost reaches $b_s$; we refer to this moment as the \emph{critical time} of service $s$. If, during the aggregation phase of a budget service, a new packet arrives, we again compute the longest critical suffix $\sigma' \subseteq \sigma|^j$. If $\sigma'$ strictly extends $\sigma_s$ (that is, if it starts earlier than $\sigma_s$), we update the budget $b_s$ accordingly to reflect the increased cost suggested by the offline optimum.

When a budget service $s$ finally issues its acknowledgment, the algorithm schedules three consecutive \emph{buffer services}, each with a budget equal to twice the budget of $s$. Each buffer service aggregates packets and acknowledges them as soon as their delay cost reaches its assigned budget, and this process is repeated three times.\footnote{The purpose of these buffer services is to separate consecutive budget services and simplify the analysis.} During a buffer service, if a packet arrives that induces a critical suffix whose cost is at least twice the cost of $\sigma_s$, then the current service is immediately promoted to a new budget service, with its budget defined accordingly.

Assigning exponentially larger budgets to buffer services prevents the algorithm from issuing many low-cost acknowledgments in situations where the offline optimum would issue a single high-cost acknowledgment. By scaling service budgets exponentially in response to increasing costs of critical suffixes, the algorithm ensures that its total cost increases only logarithmically relative to the optimal offline cost, yielding an $O(\log n)$ competitive ratio.

In what follows, we use $\OPT(\sigma_s)$ to denote the cost of the optimal offline solution for batch $\sigma_s$.

\subsubsection{Algorithm Description}

We begin by formalizing the notion of \emph{critical batches}, which correspond to the critical suffixes discussed earlier and will play a central role in the design and analysis of our algorithm.

\begin{definition}[Critical Batch] \label{def:critical_set}
    We call batch $\sigma$ \emph{critical} if it satisfies $\OPT(\sigma) = \bserve(\sigma)$.
\end{definition}

We are now ready to present \cref{alg:tcp_sum_monotone_delay}. This algorithm is structured around two event-driven functions: `UponArrival`, which is triggered whenever a new packet arrives, and `UponCriticalTime`, which is triggered when a precomputed critical time is reached. Together, these functions manage the scheduling of acknowledgments based on the observed input structure.

\begin{algorithm}[ht]
\caption{\label{alg:tcp_sum_monotone_delay} Sum-Monotone TCP Algorithm}

\begin{description}[labelwidth=2.5cm, labelsep=0.5cm, leftmargin=3cm]
    \item[Initialization:] sequence of already seen packets $\sigma = []$, critical time $t_c = 0$, \\ critical batch $\sigma_c = []$ and budget $b$
\end{description}

\begin{algorithmic}[1]

\functionHeader{UponArrival($r$)}{}

\STATE put $r$ at the end of $\sigma$

\STATE let $\sigma'$ be the largest critical suffix

\STATE \textbf{if} there is no service running, i.e., $\sigma_c$ is empty \textbf{then}   \hfill $\triangleright$ start a buffer service

\STATE \ \ \ $\sigma_c = \sigma'$   \label{alg_line:new_batch_no_service}

\STATE \textbf{else if} there is a budget service running \textbf{then}

\STATE \ \ \ \textbf{if } $\sigma_c \subseteq \sigma'$ \textbf{then}   \hfill $\triangleright$ check if the current critical suffix is a larger   \label{alg_line:cond_increase_budget}

\STATE \ \ \ \ \ \ $\sigma_c = \sigma'$   \hfill $\triangleright$ if so, update the critical batch   \label{alg_line:new_batch_increase_budget}

\STATE \textbf{else if} there is a buffer service running \textbf{then}

\STATE \ \ \ \textbf{if } $\bserve(\sigma') \geq 2 \cdot \bserve(\sigma_c)$ \textbf{then}   \hfill $\triangleright$ check whether the critical suffix is large enough   \label{alg_line:cond_buffer_to_budget}

\STATE \ \ \ \ \ \ $\sigma_c = \sigma'$   \hfill $\triangleright$ if so, update the critical batch

\STATE \ \ \ \ \ \ switch the current service to a budget service
\label{alg_line:switch_service_type}

\STATE $b = 2 \cdot \bserve(\sigma_c)$   \hfill $\triangleright$ update the critical batch   \label{alg_line:set_the_budget}

\STATE let $t_c$ be such that $\bserve(\text{pending packets}, t_c) = b$   \hfill $\triangleright$ update the critical time
\label{alg_line:update_critical_time}

\vspace{14pt}

\functionHeader{UponCriticalTime()}{$\triangleright$ called when $t_c$ is reached}

\STATE make an acknowledgment

\STATE \textbf{if} the budget service ended \textbf{then}

\STATE \ \ \ start a new buffer service and double the budget $b \leftarrow 2b$   \label{alg_line:double_budget_for_buffer}

\STATE \textbf{else if} the first or second buffer service ended \textbf{then}

\STATE \ \ \ start a consecutive buffer service

\STATE \textbf{else if} the third buffer service ended \textbf{then}

\STATE \ \ \ $\sigma_c = []$, $b = 0$   \hfill $\triangleright$ reset the variables

\end{algorithmic}
\end{algorithm}

When a packet arrives, the \texttt{UponArrival} appends it to the already-seen packets sequence and finds the largest critical suffix. Then, depending on whether there is a service running and if so --- of what type it is --- we possibly update the budget of the service (and its type, as in line \ref{alg_line:switch_service_type}). Notice that line \ref{alg_line:update_critical_time} sets the critical time $t_c$ so that acknowledging all currently pending packets at time $t_c$ will incur a delay cost exactly equal to the budget $b$. If this cost never reaches $b$, we assume $t_c = \infty$, meaning the current service will conclude only at the end of the input sequence.

The \texttt{UponCriticalTime} function is called when $t_c$ is reached. It performs an acknowledgment and issues a new service, or resets the state if all buffer services have completed.

\subsubsection{Critical Batches and Relations Between Them}

We begin by introducing some notation. Suppose that \cref{alg:tcp_sum_monotone_delay} issues $k$ acknowledgments for budget services over a sequence of $n$ requests. We denote the corresponding critical batches by $\sigma_c^i$ for $i \in \{1,2,\ldots,k\}$. Each critical batch may be followed by up to three buffer services. We denote by $\sigma_{b,l}^i$, for $l \in \{1,2,3\}$, the sequences of packets acknowledged by these services. If a particular buffer service is not started, we define the corresponding batch to be the empty sequence.

\begin{definition}[Critical Batch Intersection] \label{def:critical_set_intersection}
    We say that the critical batch $\sigma_c^j$ \emph{intersects} $\sigma_c^i$, $i < j$, if 
    \begin{enumerate}[label=(\arabic*)]
        \item $j = i+1$ and the budget service corresponding to $\sigma_c^j$ was started in line \ref{alg_line:switch_service_type} of the \texttt{UponArrival} function in \cref{alg:tcp_sum_monotone_delay}, or \label{cond:buffer_converted_service}
        \item $\sigma_{b,2}^i$ and $\sigma_{b,3}^i$ are both not empty, while $\sigma_c^j$ and $\sigma_{b,2}^i$ share at least one packet, that is, $\sigma_c^j \cap \sigma_{b,2}^i \not= \emptyset$. \label{cond:intersection_between_batches}
    \end{enumerate}
    When using set-theoretic notation, we implicitly identify a sequence with the set of packets it contains.
\end{definition}

Understanding how critical batches intersect is key to analyzing the proposed algorithm. To make our reasoning simpler, we introduce a few new terms to describe this relation.

\begin{definition}[Parent Batch] \label{def:parent_set}
    Given a critical batch $\sigma_c^i$, we define the set $\children(\sigma_c^i)$ to consist of all earlier critical batches that $\sigma_c^i$ intersects and for which $\sigma_c^i$ is the first subsequent critical batch to do so. For each batch $\sigma_c^j \in \children(\sigma_c^i)$, we refer to $\sigma_c^i$ as its \emph{parent batch}. We use the terms \emph{ancestor} and \emph{descendant} to denote the transitive closure of this relation.
\end{definition}

\begin{definition}[Final Critical Batch] \label{def:final_critical_set}
    We call a batch with no parent a \emph{final critical batch}.
\end{definition}

In what follows, we also use the following notation for the pivotal packet arrival times. We use $t_s^i$ and $t_c^i$ to denote the first and the last packet's arrival times in $\sigma_c^i$. Moreover, if $\sigma_c^i$ is a parent batch, we let $\sigma_e^i$ be its descendant that was acknowledged the earliest, and let $t_e^i$ be the arrival of the first packet in $\sigma_e^i$.

\begin{definition}[Coverage Interval]
    Given a critical batch $\sigma_c^i$, we define its \emph{coverage interval} as follows. If $\sigma_c^i$ has no children, then its coverage interval is $[t_s^i, t_b^i]$; otherwise, it is $[t_e^i, t_b^i]$. We denote this interval by~$I_c^i$. When convenient, we also use $I_c^i$ to denote the set of packets arriving within this time interval.
\end{definition}

\subsubsection{Analysis}

Let us start by showing how final critical batches relate to the optimal cost of serving packets.

\begin{lemma}
    Let $\sigma_c^1$, $\ldots$, $\sigma_c^\ell$ be a sequence of final critical batches obtained during a run of \cref{alg:tcp_sum_monotone_delay} on a packet sequence $\hat\sigma$. We assume that they are listed in the order in which the corresponding budget services were issued, and that the list contains all such batches. Then, it holds that
    \[
        \OPT\left(\hat\sigma\right) \geq \sum_{i=1}^\ell \OPT\left(I_c^i\right).
    \]
\end{lemma}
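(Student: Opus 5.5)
The plan rests on a superadditivity property of $\OPT$ over disjoint, time-ordered groups of packets, together with the fact that the coverage intervals of final critical batches are pairwise disjoint.

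\textbf{Step 1: superadditivity.} Suppose the packets of $\hat\sigma$ are partitioned into consecutive contiguous blocks $J_1,\dots,J_m$ in arrival order. Take an optimal schedule for $\hat\sigma$ and restrict it to a block $J_i$: keep only the acknowledgments that serve packets of $J_i$, and intersect each batch with $J_i$. Monotonicity of $\bdelay(\cdot,\cdot)$ in the batch argument gives that the restricted batch costs at most the original one. An acknowledgment whose batch meets two blocks is counted twice, so restriction alone does not show that the restricted costs sum to at most $\OPT(\hat\sigma)$.

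I therefore intend to use the coverage intervals only through the following \emph{separation} property. Let $\sigma_c^i$ and $\sigma_c^{i+1}$ be consecutive final critical batches. Between the end $t_b^i$ of $I_c^i$ and the start of $I_c^{i+1}$, I claim there is a gap in which a cheap charge can be made.

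\textbf{Step 2: disjointness and gaps.} I will argue that distinct final critical batches have disjoint coverage intervals. The mechanism is that, by Definition~\ref{def:critical_set_intersection}, a later critical batch reaching back into the buffer services of an earlier one would make it a child. Finality therefore forces $\sigma_c^{i+1}$ to start after the second buffer batch $\sigma_{b,2}^i$. Since $I_c^i$ ends at $t_b^i$ (read as the end of the buffer phase preceding it), the intervals are separated.

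\textbf{Step 3: main obstacle (shared acknowledgments).} The key difficulty is that an optimal acknowledgment may serve packets from both $I_c^i$ and $I_c^{i+1}$. I plan to handle this with an exchange argument. Take an OPT batch $\beta$ that spans several intervals and split it into pieces, one per interval. Charge each extra piece to the delay that $\beta$ incurs on the packets of the earlier interval while waiting past the gap.

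Criticality of $\sigma_c^{i+1}$, together with the buffer budget being twice $\bserve(\sigma_c^i)$, should make that waiting delay at least $1$. The reason is that the buffer services $\sigma_{b,1}^i$ and $\sigma_{b,2}^i$ accumulated delay at least $2\bserve(\sigma_c^i)\ge 2$ before $\sigma_c^{i+1}$ began. Hence holding packets of $I_c^i$ (or the buffer packets) across that span already costs more than one extra acknowledgment.

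I expect this step to be the hard part. If the direct charging fails, the fallback is to restrict OPT to each $I_c^i$ and, for each straddling batch, pay one acknowledgment per extra piece using the delay bound above. The sum over $i$ of the restricted costs is then at most $\OPT(\hat\sigma)$. Finally, since each restricted schedule is a feasible schedule for $I_c^i$, its cost is at least $\OPT(I_c^i)$, and summing over $i$ gives the inequality.
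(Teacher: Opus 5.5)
There is a genuine gap, and it is in Step~3, which carries the whole difficulty. You keep the optimal batches that straddle several coverage intervals and try to pay for cutting them, but in the sum-monotone model no such payment is available.

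First, $\bdelay(\cdot,t)$ is only monotone, so nothing apportions $\bdelay(\beta,t)$ among the pieces $\beta\cap I_c^i$. Each piece may cost nearly as much as $\beta$ itself. For example, if the batch cost depends only on the earliest arrival, as in the hard instances of the lower bound, the earliest piece acknowledged at time $t$ already has delay exactly $\bdelay(\beta,t)$, leaving nothing for the other pieces.

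Second, the exchange version (acknowledging the earlier piece sooner and using the saved delay to pay for the extra acknowledgment) is blocked by optimality itself. Suppose the pieces are contiguous, together make up $\beta$, and are each acknowledged at their last arrival. Replacing $\beta$ by them is then a feasible modification of the optimal schedule, so $\bserve(\beta)\le\sum_P\bserve(P)$. Cutting an optimal batch is never profitable, and charging the pieces to $\beta$'s own cost amounts to asserting that the cut is free.

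Third, the claimed waiting delay of at least $1$ does not follow. The budgets $2\bserve(\sigma_c^i)$ control the delay of the \emph{algorithm's} buffer batches, not of $\beta$, which may contain a single packet of $I_c^i$ whose own delay is negligible. Moreover, if that piece keeps its acknowledgment time, the delay is already part of its restricted cost, so reusing it for an extra acknowledgment double-counts. In short, straddling batches cannot be paid for; you have to show they do not exist.

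That is what the paper does, structurally rather than by charging. Every batch $\hat\sigma_j$ of an optimal solution is itself critical: by optimal substructure it is served optimally by a single acknowledgment at its last arrival. So when its last packet arrives, $\hat\sigma_j$ is a critical suffix and is contained in the largest critical suffix $\sigma'$ computed by \texttt{UponArrival}. Suppose $\hat\sigma_j$ reached back into an earlier coverage interval. If that packet falls in a budget service, \texttt{UponArrival} picks the suffix up (lines~\ref{alg_line:new_batch_no_service} or~\ref{alg_line:new_batch_increase_budget}). If it falls in a buffer service, then $\hat\sigma_j$ contains all packets of the preceding budget service. By monotonicity its serve cost is then at least that service's budget $2\bserve(\sigma_c)$, so the promotion test in line~\ref{alg_line:cond_buffer_to_budget} fires. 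Either way the algorithm produces a critical batch linking the two intervals, contradicting finality. Hence each optimal batch meets at most one coverage interval. Group the optimal batches by the interval they meet and restrict them to it, which is harmless by monotonicity. This gives a feasible schedule for each $I_c^i$ of cost at most that group's cost, and disjointness of the groups yields the inequality with no cutting at all.

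Your Step~2 is then a consequence rather than an ingredient, and your argument for it is incomplete. Knowing that $\sigma_c^{i+1}$ avoids $\sigma_{b,2}^i$ does not place $I_c^{i+1}$ after the end of $I_c^i$, since $I_c^{i+1}$ starts at the first packet of the earliest-acknowledged descendant of $\sigma_c^{i+1}$ when it has children.
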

\begin{proof}
    Let $\hat\sigma_1, \ldots, \hat\sigma_p$ denote the partition of $\hat\sigma$ induced by the optimal solution, where each $\hat\sigma_i$ corresponds to a single acknowledgment. By definition, each $\hat\sigma_i$ forms a critical batch and therefore must be recognized by \cref{alg:tcp_sum_monotone_delay}.
    
    Suppose that for some $j \in \{1,\ldots,p\}$, the batch $\hat\sigma_j$ overlaps with at least two coverage intervals corresponding to the final critical batches identified by the algorithm. Let $\sigma_c^k$, for some $k \in \{1,\ldots,\ell\}$, be the latest such critical batch, and let $\hat{j}$ denote the last packet in $\hat\sigma_j$. Finally, let $\sigma_c^{k,j}$ be the largest critical batch among $\sigma_c^k$ and its descendants such that $\hat{j}$ belongs to one of the services associated with this batch.
    
    If $\hat{j}$ is served by the budget service associated with $\sigma_c^{k,j}$, then at the moment $\hat{j}$ arrives, the \texttt{UponArrival} function would recognize $\hat\sigma_j$ in either line~\ref{alg_line:new_batch_no_service} or~\ref{alg_line:new_batch_increase_budget} and act upon it. Otherwise, if $\hat{j}$ is served by a buffer service, it would trigger the start of a new budget service in line~\ref{alg_line:switch_service_type}. Indeed, since $\hat\sigma_j$ overlaps with all packets acknowledged by the budget service corresponding to $\sigma_c^{k,j}$, its service cost exceeds $2 \cdot \bserve(\sigma_c^{k,j})$, as stated in line \ref{alg_line:set_the_budget}, and thus the condition in line~\ref{alg_line:cond_buffer_to_budget} is satisfied.
    
    Therefore, no batch served by the optimal solution can overlap with more than one coverage interval. Consequently, for each coverage interval $I_c^i$, we consider the set of all batches $\hat\sigma_j$ that overlap with it. This set covers at least as many packets as $I_c^i$, and hence its total cost is at least $\OPT(I_c^i)$. Since each batch $\hat\sigma_j$ is counted at most once, the desired inequality follows.
\end{proof}

From there, we obtain two straightforward properties.

\begin{corollary} \label{cor:total_cost_vs_critical_sets}
    The total cost of \OPT{} on a given sequence of packets $\sigma$ is at least equal to the sum of the optimal costs of serving coverage intervals of all the final critical batches recognized during the run of \cref{alg:tcp_sum_monotone_delay} on $\sigma$.
\end{corollary}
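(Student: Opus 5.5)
This statement is essentially a restatement of the preceding lemma, so the plan is to instantiate that lemma and check that its hypotheses are met for an arbitrary run. Fix an input sequence $\sigma$ and run \cref{alg:tcp_sum_monotone_delay} on it. Take all budget services the algorithm ever issues and their critical batches $\sigma_c^1,\ldots,\sigma_c^k$. Among these, select the final critical batches, i.e.\ those with no parent in the sense of \cref{def:parent_set}, and list them in the order in which their budget services were issued. Then apply the lemma with $\hat\sigma=\sigma$ and this complete, ordered list. Its conclusion $\OPT(\sigma)\ge\sum_i \OPT(I_c^i)$, summed over the final critical batches, is exactly the claimed statement.

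The only points needing care are the lemma's side conditions. First, the list must contain all final critical batches, which holds by construction because we took every budget service of the run. Second, the coverage intervals $I_c^i$ must be well defined for each final batch. For batches without children this is $[t_s^i,t_b^i]$. For parents it is $[t_e^i,t_b^i]$, where $\sigma_e^i$ is the earliest-acknowledged descendant. Descendants exist whenever $\children(\sigma_c^i)\neq\emptyset$, and the parent relation is acyclic since a parent always comes later in the issue order, so $t_e^i$ is well defined.

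I do not expect a real obstacle here, since the corollary is immediate. The one point worth stating explicitly is that ``final'' is a property of the whole run, not of any prefix of it. So the lemma must be applied after the run has terminated on the full sequence $\sigma$, when no later batch can still become a parent and the set of final critical batches is fixed.
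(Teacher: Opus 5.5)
Your proposal is correct and follows the paper exactly: the corollary is the preceding lemma instantiated with $\hat\sigma=\sigma$ and the complete, issue-ordered list of final critical batches, which the paper states without further argument. Your remarks on the well-definedness of the coverage intervals and on applying the lemma only after the run has terminated are harmless clarifications that the paper leaves implicit.
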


\begin{corollary} \label{cor:parent_cost_equal_children}
    For any critical batch with children, the optimal cost of serving its coverage interval is at least the sum of the children's critical interval optimal serving costs.
\end{corollary}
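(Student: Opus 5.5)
The plan is to rerun the argument of the preceding lemma with a single parent batch in place of the whole input. Concretely, fix a critical batch $\sigma_c^i$ with nonempty $\children(\sigma_c^i)$ and apply the same exchange/overlap argument to the subsequence $I_c^i$, with the final critical batches replaced by the children of $\sigma_c^i$. First I will check that the children's coverage intervals lie inside $I_c^i$ and are pairwise disjoint. Containment holds because $I_c^i$ starts at $t_e^i$, the first arrival of the earliest-acknowledged descendant, and every child's coverage interval starts at or after its own earliest descendant. Since $\sigma_e^i$ is the earliest among all descendants, each child's interval starts no earlier than $t_e^i$. Each child's interval also ends before $t_b^i$, since children are earlier critical batches. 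Disjointness comes from the tree structure of the parent relation: each batch has at most one parent, by the ``first subsequent critical batch'' rule in \cref{def:parent_set}. Coverage intervals of distinct children therefore correspond to disjoint subtrees of services that the algorithm runs consecutively in time.

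Next, take an optimal offline partition of the packets of $I_c^i$ into single-acknowledgment batches $\hat\sigma_1,\dots,\hat\sigma_p$. Each $\hat\sigma_j$ is critical, as in the previous lemma, because a block of an optimal solution is served optimally by one acknowledgment at its last arrival. I then want to show that no $\hat\sigma_j$ meets two children's coverage intervals. Suppose $\hat\sigma_j$ did. Look at its last packet $\hat j$, which lies in the later child's subtree, and at the largest critical batch in that subtree whose services contain $\hat j$. Exactly as before, two cases arise. If $\hat j$ arrives during a budget service, line~\ref{alg_line:cond_increase_budget} would have replaced the current critical batch by a suffix at least as long as $\hat\sigma_j$. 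If $\hat j$ arrives during a buffer service, then $\hat\sigma_j$ covers the whole preceding budget batch, so $\bserve(\hat\sigma_j)\ge 2\,\bserve(\cdot)$ and line~\ref{alg_line:cond_buffer_to_budget} would have promoted the service. In both cases the earlier child would acquire a different parent (or be absorbed) before $\sigma_c^i$, contradicting the assumption that the two intervals belong to distinct children of $\sigma_c^i$.

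Given disjointness and the no-straddling property, I finish as in the lemma. For each child $\sigma_c^m$, the $\hat\sigma_j$ meeting $I_c^m$ cover every packet of $I_c^m$. By monotonicity of $\bdelay$, restricting them to $I_c^m$ yields a feasible solution for $I_c^m$ of no larger cost, so their total cost is at least $\OPT(I_c^m)$. Each $\hat\sigma_j$ is charged to at most one child, so summing over children gives $\OPT(I_c^i)\ge\sum_m\OPT(I_c^m)$. The main obstacle is the no-straddling step. It requires care that the ``largest descendant containing $\hat j$'' argument really works relative to the subtree of a child rather than to a final batch, and that the suffix-criticality of $\hat\sigma_j$ within the full observed prefix matches the algorithm's ``largest critical suffix'' computation. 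I would handle this by noting that criticality depends only on the batch itself, not on the prefix, so the argument transfers verbatim.
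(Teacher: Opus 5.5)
You take the same route as the paper, which gives no separate proof and treats this corollary as a consequence of the preceding lemma: rerun the critical-suffix argument inside $I_c^i$ with the children of $\sigma_c^i$ in place of the final critical batches. But there is a genuine gap in the two structural facts everything rests on. These are that the children's coverage intervals are disjoint, and that no optimal block straddles two of them.

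Both of your justifications tacitly identify the packets of $I_c^m$ with the packets served by the services of $m$'s subtree. You write ``disjoint subtrees of services that the algorithm runs consecutively in time'' and ``$\hat j$ lies in the later child's subtree''. That identification is false. $I_c^m$ starts at the first packet of a critical suffix, and the algorithm takes critical suffixes of the whole history, including packets that were already acknowledged. Nothing prevents the following scenario:
\begin{enumerate}
\item the budget service of $\sigma_c^m$ is followed by three completed buffer services;
\item budget service $m+1$ then starts afresh with a critical suffix that begins inside $\sigma_{b,3}^m$ but does not reach $\sigma_{b,2}^m$;
\item budget service $m+2$ starts afresh with a suffix reaching into $\sigma_{b,2}^m$.
\end{enumerate}
Then $m+1$ does not intersect $m$, since there was no promotion and condition (2) of the intersection definition only looks at $\sigma_{b,2}^m$. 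Also $m+1$ has no children, and $m+2$ is the parent of both $m$ and $m+1$.

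Now there are two cases. If $I_c^m$ extends past the first packet of $\sigma_c^{m+1}$, then $I_c^m$ and $I_c^{m+1}$ overlap and the sum of the children's $\OPT$ values double counts. If $I_c^m$ ends earlier, nothing prevents an optimal block $\hat\sigma_j$ from starting inside $I_c^m$ and ending at some $\hat j\in\sigma_{b,3}^m\cap I_c^{m+1}$. For such a block:
\begin{itemize}
\item $\hat j$ is served by the third buffer service of the earlier child, so your case analysis has no batch to look at;
\item $\hat\sigma_j$ need not contain the budget batch of $m$, so the promotion test $\bserve(\sigma')\ge 2\,\bserve(\sigma_c^m)$ need not fire;
\item no critical batch is enlarged, so no new intersection appears, and the contradiction with the parent relation never materializes.
\end{itemize}
Appealing to the lemma ``verbatim'' does not close this, since the lemma's sketch has the same blind spot. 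You need to pin down where $I_c^m$ ends (the paper never defines $t_b^m$) and control how far back a freshly started budget batch's critical suffix can reach.

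A smaller slip is in the containment step: being acknowledged earliest does not by itself make $\sigma_e^i$ start earliest. The correct reason is that a leaf $L$ that was promoted from a buffer of $L-1$, or whose critical suffix meets $\sigma_{b,2}^{L-1}$, intersects $L-1$ and hence is its parent. So no leaf descendant acknowledged after $\sigma_e^i$ starts before $t_e^i$, and every coverage interval begins at the first packet of a leaf.
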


Now, observe that also the following holds for any critical batch with children.

\begin{lemma} \label{lem:parent_child_doubling_cost}
    Let $\sigma_c^i$ be a critical batch such that $\children(\sigma_c^i) \not= \emptyset$. Then, for any child batch $\sigma_c^j$, we have $\bserve(\sigma_c^i) \geq 2 \cdot \bserve(\sigma_c^j)$.
\end{lemma}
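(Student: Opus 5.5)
The plan is to split according to the two ways $\sigma_c^i$ can intersect a child $\sigma_c^j$ in \cref{def:critical_set_intersection}, and in each case exhibit a direct reason why the algorithm only accepted $\sigma_c^i$ as a new critical batch because its service cost had at least doubled relative to $\sigma_c^j$.

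\emph{Case \ref{cond:buffer_converted_service}.} Here $i = j+1$ and the budget service of $\sigma_c^i$ was created in line~\ref{alg_line:switch_service_type}, i.e.\ during a buffer service that followed the budget service of $\sigma_c^j$. During buffer services the variable $\sigma_c$ is never changed except in this promoting branch, so at the moment of promotion $\sigma_c$ still equals $\sigma_c^j$. The promoting packet passed the test in line~\ref{alg_line:cond_buffer_to_budget}, giving $\bserve(\sigma') \ge 2\bserve(\sigma_c^j)$ for the suffix $\sigma'$ assigned to $\sigma_c$. Afterwards, while this new budget service runs, $\sigma_c$ can only be replaced by a critical suffix $\sigma''\supseteq\sigma_c$ (line~\ref{alg_line:new_batch_increase_budget}). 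By monotonicity of $\bdelay$ in the batch and in time (the last packet of $\sigma''$ arrives no earlier), $\bserve$ can only grow. Hence the final $\sigma_c^i$ satisfies $\bserve(\sigma_c^i)\ge 2\bserve(\sigma_c^j)$.

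\emph{Case \ref{cond:intersection_between_batches}.} Here both buffer services $\sigma_{b,2}^j$ and $\sigma_{b,3}^j$ exist, and $\sigma_c^i$ contains a packet of $\sigma_{b,2}^j$. Since $\sigma_c^i$ is a suffix ending no earlier than after the third buffer service started, $\sigma_c^i$ contains all of $\sigma_{b,3}^j$ as well as a packet of $\sigma_{b,2}^j$, and thus it contains every packet acknowledged after it. The second buffer service ended by reaching its budget, so the first buffer's acknowledgment point lies strictly before the start of $\sigma_{b,3}^j$. I would argue that $\sigma_{b,3}^j$ alone, acknowledged at its critical time, has delay cost equal to the buffer budget $b=2\cdot 2\bserve(\sigma_c^j)$; then lower-bound $\bdelay(\sigma_c^i, t_c^i)$ by monotonicity. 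The main obstacle is that $\sigma_c^i$ is acknowledged at the arrival of its last packet, which may be earlier than the critical time of the third buffer. So instead I would use criticality of $\sigma_c^i$: $\bserve(\sigma_c^i)=\OPT(\sigma_c^i)$. Any solution for $\sigma_c^i$ must either acknowledge inside the span of the third buffer service or delay the packets of $\sigma_{b,3}^j$ to at least its critical time. More precisely, consider the first optimal acknowledgment covering the portion of $\sigma_c^i$ from $\sigma_{b,2}^j$. Either it lies after the end of $\sigma_{b,2}^j$'s critical time, incurring delay $\ge$ the second buffer's budget $4\bserve(\sigma_c^j)$ by monotonicity applied to the $\sigma_{b,2}^j$-part of $\sigma_c^i$. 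Or it lies before, and then $\sigma_{b,3}^j\cap\sigma_c^i$ still has to be served, costing at least $1$ more. Combining with the packets of $\sigma_{b,2}^j$ requires care. A cruder bound still suffices: $\OPT(\sigma_c^i)\ge \min(4\bserve(\sigma_c^j), 1+\ldots)$. I expect to need the fact that critical suffixes are not cut inside by OPT to force the first alternative. The algorithm also never replaces $\sigma_c$ by a non-superset during budget services, which I would use to transfer the bound to the final $\sigma_c^i$ as in case 1.

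Finally I would note that the parent relation in \cref{def:parent_set} only selects the first intersecting batch, so it suffices to prove the inequality for any intersecting pair, which the two cases cover.
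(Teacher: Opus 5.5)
Your Case~\ref{cond:buffer_converted_service} is correct and is the paper's argument. The paper's write-up actually interchanges $i$ and $j$ there; your orientation, with the parent $\sigma_c^i$ later and $i=j+1$, is the right one. Your remark that $\bserve$ only grows when $\sigma_c$ is replaced by a superset ending later is a welcome extra detail.

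The gap is in Case~\ref{cond:intersection_between_batches}. You start with the right idea: $\sigma_{b,3}^j$, acknowledged at its critical time $t$, has $\bserve(\sigma_{b,3}^j,t)$ equal to the buffer budget, and $\sigma_c^i$ should dominate it by monotonicity. You then drop this because of the timing worry, and the \OPT-based replacement does not work.
\begin{itemize}
\item Criticality of $\sigma_c^i$ says only that $\OPT(\sigma_c^i)=\bserve(\sigma_c^i)$, i.e.\ the optimum is a single acknowledgment at the last arrival of $\sigma_c^i$. Nothing forces it to wait until $t$, so using it to lower-bound $\bserve(\sigma_c^i)$ is circular.
\item Your branch ``delay $\ge$ the second buffer's budget by monotonicity applied to the $\sigma_{b,2}^j$-part of $\sigma_c^i$'' is false in general. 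The set $\sigma_c^i\cap\sigma_{b,2}^j$ is only a suffix of $\sigma_{b,2}^j$, possibly a single packet, and its delay can be far below that budget.
\item Your other branch gives only an additive $+1$, not a multiplicative bound.
\end{itemize}
Note also that your assertion $\sigma_{b,3}^j\subseteq\sigma_c^i$ already presupposes the timing fact you were unsure of.

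The missing idea is a simple property of the algorithm, not of \OPT. Every assignment to $\sigma_c$ (lines~\ref{alg_line:new_batch_no_service} and~\ref{alg_line:new_batch_increase_budget}, and the promotion branch at line~\ref{alg_line:switch_service_type}) sets it to a critical suffix ending at the packet that has just arrived. Hence the last packet of the final $\sigma_c^i$ arrives while the budget service of $\sigma_c^i$ is running. There are two possibilities.
\begin{itemize}
\item If that budget service arose by promoting the third buffer service of $\sigma_c^j$, then $i=j+1$. Your Case~1 argument applies verbatim, since the test in line~\ref{alg_line:cond_buffer_to_budget} compares against $\bserve(\sigma_c^j)$.
\item Otherwise the third buffer service of $\sigma_c^j$ completed with its acknowledgment at time $t$, and the budget service of $\sigma_c^i$ runs after $t$. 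So the last arrival of $\sigma_c^i$ is at least $t$. Because $\sigma_c^i$ is contiguous and contains a packet of $\sigma_{b,2}^j$, it contains all of $\sigma_{b,3}^j$.
\end{itemize}
In the second case, monotonicity of $\bdelay$ in both arguments gives $\bserve(\sigma_c^i)\ge\bserve(\sigma_{b,3}^j,t)=4\,\bserve(\sigma_c^j)$. Even under the literal pseudocode, where line~\ref{alg_line:set_the_budget} resets $b$ to $2\,\bserve(\sigma_c)$ on every arrival, the factor $2$ the lemma needs survives. This is exactly what the paper's one-line containment argument uses implicitly.
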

\begin{proof}
    Since $\sigma_c^i$ is a parent of $\sigma_c^j$, \cref{def:parent_set} implies that $\sigma_c^i$ intersects $\sigma_c^j$. Therefore, one of the two conditions in \cref{def:critical_set_intersection}—namely, \ref{cond:buffer_converted_service} or \ref{cond:intersection_between_batches}—must hold.
    
    We first consider condition~\ref{cond:buffer_converted_service}. In this case, we have $j=i+1$, and the budget service corresponding to $\sigma_c^j$ is initiated in line~\ref{alg_line:switch_service_type} of the \texttt{UponArrival} function by some critical suffix $\sigma'$. Consequently, the condition in line~\ref{alg_line:cond_buffer_to_budget} holds, which implies that $\bserve(\sigma') \geq 2 \cdot \bserve(\sigma_c^i)$. Since $\sigma' \subseteq \sigma_c^j$ (as $\sigma'$ may have been extended in line~\ref{alg_line:cond_increase_budget} to become $\sigma_c^i$), the claimed inequality follows.
    
    We now turn to condition~\ref{cond:intersection_between_batches}. In this case, both $\sigma_{b,2}^i$ and $\sigma_{b,3}^i$ are non-empty, and $\sigma_c^j$ shares at least one packet with $\sigma_{b,2}^i$, that is, $\sigma_c^j \cap \sigma_{b,2}^i \neq \emptyset$. Since $\sigma_c^j$ overlaps with $\sigma_{b,2}^i$, it must also cover $\sigma_{b,3}^i$, which immediately follows $\sigma_{b,2}^i$. As $\sigma_{b,3}^i \not= \emptyset$, its service cost satisfies $\bserve(\sigma_{b,3}^i) = 2 \cdot 2 \cdot \bserve(\sigma_c^i)$, by lines~\ref{alg_line:double_budget_for_buffer} of the \texttt{UponCriticalTime} function and~\ref{alg_line:set_the_budget} of the \texttt{UponArrival} function. Hence, $\bserve(\sigma_c^j) \geq \bserve(\sigma_{b,3}^i) = 4 \cdot \bserve(\sigma_c^i)$, and the desired inequality also holds in this case.
\end{proof}

\begin{lemma} \label{lem:critical_set_service_cost}
    The budget service associated with the critical batch $\sigma_c^i$ incurs a cost of $2 \cdot \bserve(\sigma_c^i)$, while each buffer service associated with $\sigma_c^i$ incurs a cost of at most $4 \cdot \bserve(\sigma_c^i)$.
\end{lemma}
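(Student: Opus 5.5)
The plan is to read both bounds directly off the stopping rule in \cref{alg:tcp_sum_monotone_delay}, namely line~\ref{alg_line:set_the_budget} together with the critical-time computation in line~\ref{alg_line:update_critical_time}. A service ends exactly when the $\bserve$ value of the pending packets equals the current budget $b$. That value is one acknowledgment plus the accumulated batch delay, so the cost incurred by the service, acknowledgment included, is exactly $b$ at the moment it ends. The whole argument therefore reduces to identifying the value of $b$ in force when each service terminates.

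\textbf{Budget service.} Throughout a budget service, every invocation of \texttt{UponArrival} re-executes line~\ref{alg_line:set_the_budget}. This includes the invocation that may replace $\sigma_c$ by a longer critical suffix in line~\ref{alg_line:new_batch_increase_budget}, and the one that converts a buffer service in line~\ref{alg_line:switch_service_type}. Each time, the budget is reset to $b = 2\cdot\bserve(\sigma_c)$ for the current critical batch, and $t_c$ is recomputed so that $\bserve(\text{pending packets}, t_c)=b$. Take $\sigma_c^i$ to be the value of $\sigma_c$ when the service terminates, which is how the critical batch of the service is indexed. Then the acknowledgment happens at the time where the pending batch's service cost equals $2\cdot\bserve(\sigma_c^i)$. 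Since $\bdelay$ is monotone in time and arrivals only add packets, the critical time is well defined; I would briefly note that at the moment $t_c$ is recomputed the current value does not already exceed $b$. This gives the claimed cost $2\cdot\bserve(\sigma_c^i)$.

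\textbf{Buffer services.} When the budget service ends, line~\ref{alg_line:double_budget_for_buffer} doubles $b$ to $4\cdot\bserve(\sigma_c^i)$. While the service remains a buffer service, \texttt{UponArrival} does not modify $\sigma_c$. The only exception is that the condition in line~\ref{alg_line:cond_buffer_to_budget} fires, in which case the service is promoted to a budget service for a new critical batch and is no longer a buffer service of $\sigma_c^i$. So I must check how line~\ref{alg_line:set_the_budget} acts inside an unpromoted buffer service. As written, it would reset $b$ to $2\cdot\bserve(\sigma_c)$, undoing the doubling. I would interpret the algorithm as intended, with the buffer budget staying at twice the budget of $s$, matching the prose and the use of $\bserve(\sigma_{b,3}^i)=4\bserve(\sigma_c^i)$ in \cref{lem:parent_child_doubling_cost}. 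Either reading yields at most $4\cdot\bserve(\sigma_c^i)$. The second and third buffer services are started in \texttt{UponCriticalTime} without changing $b$, so they also stop once the pending cost reaches at most $4\cdot\bserve(\sigma_c^i)$. A buffer service cut short by the end of the input costs even less, by monotonicity.

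The main obstacle is this bookkeeping of which value of $b$ governs each service, in particular the interaction between the doubling in line~\ref{alg_line:double_budget_for_buffer} and the unconditional reset in line~\ref{alg_line:set_the_budget}. A secondary point is the boundary case where the pending cost at a recomputation already exceeds $b$. There I would argue that the acknowledgment is issued immediately, and that this cannot occur within a budget service, because budgets only increase there.
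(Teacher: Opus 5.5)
Your main line is exactly the paper's proof, which is the single sentence ``by the definition of the algorithm'': identify the budget $b$ in force when a service ends and read its cost off the stopping rule in lines~\ref{alg_line:set_the_budget}--\ref{alg_line:update_critical_time}. Your remark about line~\ref{alg_line:set_the_budget} is correct and more careful than the paper: executed literally inside a buffer service, it undoes the doubling of line~\ref{alg_line:double_budget_for_buffer}, yet the bound $4\cdot\bserve(\sigma_c^i)$ survives either reading. The step that does not hold is your disposal of the boundary case. Monotonicity of $\bdelay$ in $t$ gives no intermediate-value property, so it does not by itself make the critical time well defined. Your claim that the pending cost cannot already exceed $b$ at a recomputation inside a budget service ``because budgets only increase there'' is false. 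The danger is not a falling budget but an arrival that makes the pending cost jump over an unchanged budget.

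Here is an example. Let $\bdelay(\sigma,t)=\max_{j\in\sigma}(t-a_j)+M\cdot[\,|\sigma|\ge 3\,]$ (Iverson bracket) with $M>1$; this is monotone in both arguments. Let $a_1=0$, $a_2=1/2$, $a_3=1$.
At time $1/2$ the largest critical suffix is $\{1,2\}$, since its serve cost $3/2$ beats two separate acknowledgments. Hence $\sigma_c=\{1,2\}$, $b=3$ and $t_c=2$.
At time $1$ the suffix $\{1,2,3\}$ has serve cost $2+M>5/2\ge\OPT(\{1,2,3\})$ (acknowledge $\{1\}$ at time $0$ and $\{2,3\}$ at time $1$). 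So the largest critical suffix is $\{2,3\}$, which does not contain $\sigma_c$. Therefore $b$ stays $3$, while the pending batch $\{1,2,3\}$ already has serve cost $2+M>3$.
No $t_c$ as in line~\ref{alg_line:update_critical_time} exists. If nothing else arrives, the service costs at least $2+M$, which exceeds $2\cdot\bserve(\sigma_c^i)=3$ by an arbitrary amount.

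So the equality, and even the upper bounds, rest on a presupposition built into line~\ref{alg_line:update_critical_time}: the pending serve cost passes exactly through $b$, which needs continuity in $t$ and no jumps on arrival. The paper's one-line proof takes this for granted as well. You should state it as an explicit assumption rather than claim to derive it from the budgets being non-decreasing.
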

\begin{proof}
    By the definition of the algorithm.
\end{proof}

\begin{lemma} \label{lem:descendant_levels}
    Every final critical batch $\sigma_c^i$ has at most $\log(\OPT(I_c^i)) \leq \log n$ levels of descendant sets.
\end{lemma}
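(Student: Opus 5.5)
The plan is to bound the depth of the descendant tree by a geometric growth argument along any root-to-leaf chain. Fix a final critical batch $\sigma_c^i$ and consider a chain of descendants $\sigma_c^{j_0}, \sigma_c^{j_1}, \ldots, \sigma_c^{j_m} = \sigma_c^i$, where each $\sigma_c^{j_{r}}$ is a child of $\sigma_c^{j_{r+1}}$ and $\sigma_c^{j_0}$ is a leaf. By \cref{lem:parent_child_doubling_cost}, each step at least doubles the service cost, so
\[
\bserve(\sigma_c^{i}) \;\geq\; 2^{m}\,\bserve(\sigma_c^{j_0}) \;\geq\; 2^m,
\]
where the last inequality holds because every service cost is at least $1$, namely the acknowledgment cost. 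Hence $m \le \log \bserve(\sigma_c^i)$. The number of levels is therefore controlled once $\bserve(\sigma_c^i)$ is compared with $\OPT(I_c^i)$.

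The second step is this comparison. The critical batch $\sigma_c^i$ is contained in the packet set $I_c^i$: its first packet arrives at $t_s^i \ge t_e^i$, and its last packet arrives before the end $t_b^i$ of the coverage interval. Since $\sigma_c^i$ is critical, $\bserve(\sigma_c^i)=\OPT(\sigma_c^i)$. I would then argue that $\OPT$ is monotone under restriction to a contiguous subsequence. Take an optimal schedule for $I_c^i$ and drop the packets outside $\sigma_c^i$. Each batch's delay does not increase, by monotonicity of $\bdelay$, and the number of acknowledgments does not increase. This gives $\OPT(\sigma_c^i)\le \OPT(I_c^i)$, and combined with the first step yields $m\le\log\OPT(I_c^i)$.

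The remaining inequality $\log\OPT(I_c^i)\le\log n$ is the main obstacle, because $\bdelay$ is an arbitrary monotone function. I would first note that acknowledging each packet at its arrival time costs $\sum_j (1+\bdelay(\{j\},a_j))$. I would then use the normalization implicit in the model: a packet acknowledged immediately incurs zero waiting time, so $\bdelay(\{j\},a_j)=0$ for a single packet with zero delay. This gives $\OPT(I_c^i)\le |I_c^i|\le n$. If the paper's $\bdelay$ does not satisfy this normalization, I would instead read the lemma's $\log n$ as $\log$ of the number of packets weighted by per-packet immediate-service cost, and flag that assumption explicitly.
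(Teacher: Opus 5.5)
Your proposal is correct and follows essentially the same route as the paper. The doubling of $\bserve$ along parent--child chains from \cref{lem:parent_child_doubling_cost}, with base cost at least $1$, bounds the depth by the logarithm of the root's cost, and acknowledging every packet individually gives $\OPT(I_c^i)\le n$. You are more explicit than the paper on two points it uses only implicitly: the comparison $\bserve(\sigma_c^i)=\OPT(\sigma_c^i)\le\OPT(I_c^i)$, which relies on reading $I_c^i$ as containing $\sigma_c^i$ (the paper never defines the endpoint $t_b^i$), and the normalization $\bdelay(\{j\},a_j)=0$ needed for the bound by $n$.
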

\begin{proof}
    By \cref{lem:parent_child_doubling_cost}, each child batch has a cost at least twice that of the parent. Since the minimum cost is 1 (the acknowledgment cost), the first property follows. To observe the upper bound on this value, notice that one of the possible solutions is to acknowledge each packet individually, and thus $\OPT(I_c^i) \leq n$.
\end{proof}

Combining the above lemmas gives us \cref{thm:nested}. Indeed, by \cref{lem:descendant_levels} and a recursive application of \cref{cor:parent_cost_equal_children}, we have that the sum of optimal costs for all the critical batches processed by our algorithm (i.e., those for which a service was issued) is at most equal to $\log n \cdot \sum_{i=1}^\ell \texttt{OPT}(I_c^i)$. Thus, by \cref{lem:critical_set_service_cost}, the algorithm paid at most $14 \log n \cdot \sum_{i=1}^\ell \texttt{OPT}(I_c^i)$ to serve all the packets. The lower bound on optimal cost from \cref{cor:total_cost_vs_critical_sets} completes the proof.

\subsection{Lower Bounds for Arbitrary Algorithms}

We %
show that the deterministic $O(\log n)$ competitive ratio of our algorithm (\cref{alg:tcp_sum_monotone_delay}) is tight.
We start by defining the Parking Permit Problem \cite{Meyerson05}. It is
known that for every
deterministic algorithm $\B$ for Parking Permit, there is an
adversarial request sequence $A$ on which $\B$ is $\Omega(\log n)$-competitive
(\cref{lem:summono-lb:ppadv}~\cite{NaorUW22}). Then, we show how
to convert any deterministic algorithm $\A$ for TCP Acknowledgment into a
deterministic algorithm $\BA$ for Parking Permit. Finally, we show that $\A$ is
$\Omega(\log n)$-competitive on the adversarial sequence for $\BA$ (\cref{lem:summono-lb:lb}).

\paragraph{Parking Permit Problem \cite{Meyerson05}.} An instance of the Parking
Permit Problem. For each timestep $t \geq 1$, a request may arrive, and if so, needs
to be covered by a permit. There are several different types of permits that can be
purchased at any time throughout the time horizon. A permit of type $k \geq 0$ has cost
$C_k$ and duration $D_k$: when bought at time $t$, it covers the time interval
$[t,t+D_k]$.
The goal is to minimize the cost of the purchased permits.

\paragraph{Hard Parking Permit instances.} A \emph{hard Parking Permit instance}
is one where $C_k=2^k$ and $D_k=4^k$ for each $k \geq 0$.
\begin{lemma}[\cite{NaorUW22}]
  \label{lem:summono-lb:ppadv}
  Let $\B$ be a deterministic online algorithm for Parking Permit. Consider
  a hard Parking Permit instance where
  the sequence of request arrival times $A$ is generated by the following
  adversary: for each $1 \leq j \leq n$ (in ascending order), request $j$
  arrives at the earliest timestep that is not yet covered by the permits bought
  by $\B$.

  When run on $A$, algorithm $\B$
  \begin{itemize}
  \item buys a separate permit $[a_j,t_j]$ for each
    request $j$; moreover, $a_{j+1} = t_j + 1$ for each $j < n$;
  \item has competitive ratio $\Omega(\log n)$.\footnote{We remark that the Parking Permit Problem is typically defined as having
a time horizon $[1,N]$ (on which $n$ requests arrive) and Lemma 21 of
\cite{NaorUW22} gives a lower bound of $\Omega(\log
N)$. Since $N \geq n$, we
get a lower bound of $\Omega(\log n)$.
}
  \end{itemize}
\end{lemma}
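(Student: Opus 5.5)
We prove the two bullets of \cref{lem:summono-lb:ppadv} in turn. The first is a structural consequence of the adversary's rule. The second is the Naor--Umboh--Williamson argument specialised to costs $C_k=2^k$ and durations $D_k=4^k$.

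For the first bullet, we argue by induction on $j$. Assume that before request $j$ the permits of $\B$ cover exactly a prefix $[1,t_{j-1}]$, with $t_0=0$. The adversary then places request $j$ at the earliest uncovered timestep, namely $a_j=t_{j-1}+1$. Since $\B$ must cover $a_j$ at time $a_j$, and permits bought earlier cannot reach it, $\B$ buys at least one permit starting at $a_j$.

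We may assume $\B$ buys exactly one such permit. Among several permits bought at time $a_j$, only the longest matters; discarding the others only lowers $\B$'s cost. A lower bound for this normalised algorithm transfers to the original, because the normalised algorithm pays at most as much on the same adversarial sequence. Let this permit be of type $k_j$, so $t_j=a_j+D_{k_j}$. The covered region is then again a prefix $[1,t_j]$, and the next request arrives at $a_{j+1}=t_j+1$. This completes the induction.

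For the competitive ratio, the algorithm pays $\sum_j 2^{k_j}$. We then exhibit a cheap offline solution. For each scale $k$, consider the partition of time into aligned windows of length $4^k$. Buying one type-$k$ permit per window that contains a request costs $2^k$ times the number of such windows. The key step is a charging argument over scales. Fix a window $W$ at scale $k$. Every request $j$ in $W$ with $4^{k_j}$ much smaller than $4^k$ consumes a gap of length $D_{k_j}+1$. The total cost $\sum 2^{k_j}$ that $\B$ spends inside $W$ therefore relates to the lengths covered by $\sqrt{D_{k_j}}$. We would then choose, per window, the scale that minimises offline cost. Concretely, we follow the standard proof: show that OPT is at most $O(\B/\log n)$ by averaging over the $\Theta(\log n)$ scales. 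For each scale $k$, the all-type-$k$ solution costs at most $2^k\cdot(\text{number of occupied windows})$. Summing over $k$ and using $\sum_k 2^k\lceil L/4^k\rceil$-type bounds against $\sum_j 2^{k_j}$, with each $j$ contributing roughly $O(1)$ per scale relative to its own permit cost, gives that the minimum over $k$ is at most a $1/\log n$ fraction of $\B$'s cost.

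The main obstacle is this averaging step. We have to show that every request contributes $\Omega(\log n)$ scales' worth of slack, i.e.\ that the sum over scales of the single-scale offline costs is at most $O(\sum_j 2^{k_j})$. If this direct averaging fails, the fallback is to invoke Lemma~21 of \cite{NaorUW22} verbatim. As noted in the footnote, it yields $\Omega(\log N)$ for horizon $N\ge n$ and hence $\Omega(\log n)$, after checking that their adversary is exactly the ``earliest uncovered timestep'' adversary described here.
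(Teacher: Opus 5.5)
The paper gives no proof of this lemma. It imports it as Lemma~21 of \cite{NaorUW22}, and the footnote turns $\Omega(\log N)$ into $\Omega(\log n)$ via $N\ge n$. Your fallback is therefore exactly the paper's route.

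As a self-contained argument, however, your treatment of the second bullet has a genuine gap. The inequality $\sum_i \mathrm{OFF}_i = O(\sum_j 2^{k_j})$ is the whole content of the bound, and you only name it as the obstacle. The accounting you do sketch would not give it. If each request contributed $O(2^{k_j})$ at every scale, summing over $\Theta(\log n)$ scales would yield $O(\log n)\cdot\mathrm{ALG}$, hence no lower bound at all. Choosing a scale per window is a different idea that the averaging does not need.

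There is also a smaller point on the first bullet. Besides merging permits bought at the same time, you must rule out $\B$ buying, at a covered non-request time, a permit that extends the prefix. The lazy transformation fixes this. But it changes the adversarial sequence, so you need that the lazy algorithm costs at most as much as $\B$ on every sequence, not just on the same one.

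The missing idea is that the charge to permit $j$ must decay geometrically in $|i-k_j|$. Let $\mathrm{OFF}_i$ buy a type-$i$ permit at every request not yet covered by its own permits. Its windows $[a,a+D_i]$ are disjoint, so each of $\B$'s permits starts in exactly one window per scale. This lazy single-scale solution also avoids the boundary issue your aligned windows create when a long $\B$-permit ends inside a window.

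\begin{enumerate}
\item If a window contains the start of a $\B$-permit of type at least $i$, charge $2^i$ to the first such permit. Permit $j$ then receives at most $\sum_{i\le k_j}2^i<2^{k_j+1}$ this way.
\item Otherwise all $\B$-permits starting in the window have type below $i$. Because $a_{j+1}=t_j+1$, they tile the window, so spread $2^i$ in proportion to length. Permit $j$ receives at most $2^i(4^{k_j}+1)/(4^i+1)\le 2\cdot 4^{k_j}2^{-i}$, which sums to at most $2\cdot 2^{k_j}$ over $i>k_j$.
\end{enumerate}

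The tiling in the second case can fail only for the window containing $a_n$, at most one per scale. Restrict to scales $0\le i\le K=\lfloor\log_4 a_n\rfloor$. These exceptional windows then cost less than $2^{K+1}\le 2\sqrt{a_n}$. Meanwhile $\B$ pays at least $\sqrt{a_n/2}$ to cover $[1,a_n]$, since $2^k\ge\sqrt{(4^k+1)/2}$ and the square root is subadditive.

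Hence $\sum_{i=0}^{K}\mathrm{OFF}_i\le 7\,\mathrm{ALG}$. Since $K+1\ge\log_4 a_n\ge\log_4 n$, the cheapest single scale gives $\OPT\le 7\,\mathrm{ALG}/\log_4 n$, which is the claimed $\Omega(\log n)$ ratio.
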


\paragraph{Hard TCP Acknowledgment instances.}
We now define the corresponding TCP Acknowledgment instances. Define the piecewise-linear concave function $f(x)=\min_{k \geq 0} C_k + x \cdot
(C_k/D_k)$ where $C_k$ and $D_k$ are defined as above.
A \emph{hard TCP Acknowledgment instance} is one in which the total cost of a
batch $\sigma$ acknowledged at time $t$---i.e., sum of the
acknowledgment cost\footnote{Recall that the acknowledgment cost is 1.} and the delay cost of the batch $\bdelay(\sigma,t)$---is
given by $f(t - a(\sigma))$ where $a(\sigma)$ is the earliest arrival time among
the packets in $\sigma$. That is, $\bdelay(\sigma,t) = f(t-a(\sigma)) - 1$.

\paragraph{Converting TCP Acknowledgment algorithm to Parking Permit.} Observe
that a hard Parking Permit instance with arrival sequence $A$ has a
corresponding hard TCP Acknowledgment instance with the same arrival sequence.
Our goal now is to show how to map decisions of a TCP Acknowledgment algorithm
on the corresponding instance into decisions for the
Parking Permit instance. We begin by presenting the two necessary ingredients.

First, we need to decide which permit to buy when a request arrives even though the TCP Acknowledgment
algorithm is not required to commit on arrival of a request, when the request
will be acknowledged.
For a sequence $A$, let $A_j$ denote the subsequence containing the first $j$
elements. The following lemma, together with \cref{prop:summono-lb:roundup},
lets us decide which permit to buy on arrival of a request.
\begin{lemma}
  \label{lem:summono-lb:threshold}
  Let $\A$ be a deterministic online algorithm for TCP Acknowledgment. Then,
  there exists a function $\nextack$ such that for every sequence of packet arrival
  times $A = (a_1, \ldots, a_n)$ and for every $j$, if $a_{j+1} > \nextack(A_j)$,
  then packet $j$ is acknowledged at time $\nextack(A)$. The function $\nextack$ is
  called the \emph{threshold function} of $\A$.
\end{lemma}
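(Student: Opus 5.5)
The function $\nextack$ will be defined by simulating $\A$ on the prefix $A_j$ with no further arrivals. Given a finite sequence $A_j=(a_1,\ldots,a_j)$, run $\A$ on the input consisting of exactly these $j$ packets and let $\nextack(A_j)$ be the time at which packet $j$ (the last one) is acknowledged in that run. Since $\A$ must eventually acknowledge every packet on any finite input, $\nextack(A_j)$ is a well-defined finite time; if $\A$ never acknowledged, its cost would be unbounded and it could not be competitive. Because $\A$ is deterministic, the run on a fixed input is unique, so $\nextack$ is a genuine function of the prefix alone. The statement as written says ``acknowledged at time $\nextack(A)$''. I read this as $\nextack(A_j)$, which is what the construction produces and what the subsequent reduction uses.

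The key step is an indistinguishability argument. Take any $A$ with $a_{j+1} > \nextack(A_j)$, and compare the run of $\A$ on $A$ with the run on $A_j$. An online algorithm at time $t$ sees only the packets with arrival time at most $t$, and its decision at time $t$ depends only on that observed history. For every $t < a_{j+1}$, the two runs have identical histories: the same packets $a_1,\ldots,a_j$ have arrived and no others. I would argue by induction over the sequence of decision points that the two runs make identical acknowledgment decisions throughout $[0,a_{j+1})$. Since $\nextack(A_j) < a_{j+1}$, this interval contains $\nextack(A_j)$. In the run on $A_j$, packet $j$ is pending until exactly $\nextack(A_j)$ and is acknowledged then, so the same holds in the run on $A$. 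Hence packet $j$ is acknowledged at time $\nextack(A_j)$ in the run on $A$.

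The only delicate point is formalizing ``decisions depend only on the past'' in continuous time. I would handle it by taking the online model literally: the set of acknowledgment times that $\A$ produces within $[0,t]$ is a function of the arrivals in $[0,t]$. Applying this with $t=\nextack(A_j)$ closes the argument. The strict inequality $a_{j+1} > \nextack(A_j)$ matters here, because it ensures that packet $j+1$ is not yet visible at the acknowledgment instant, so it cannot alter the decision made at that instant.
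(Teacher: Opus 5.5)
Your proposal is correct and follows the paper's proof: you define $\nextack(A_j)$ by running $\A$ on the prefix $A_j$ alone, and you use that the runs on $A_j$ and on $A$ see identical information up to time $\nextack(A_j) < a_{j+1}$, so packet $j$ is acknowledged at $\nextack(A_j)$ in both runs; your reading of the conclusion as $\nextack(A_j)$ rather than $\nextack(A)$ matches what the paper's proof establishes. One minor point: finiteness of $\nextack(A_j)$ comes from feasibility (every packet must be acknowledged), not from competitiveness, and the lemma does not need it anyway, because an infinite value would make the hypothesis $a_{j+1} > \nextack(A_j)$ vacuous.
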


\begin{proof}
  Fix a sequence $A$ and $1 \leq j \leq n$. Define $\nextack(A_j) \geq a_j$ to be the
  time that $\A$ acknowledges the last packet of $A_j$ when $\A$ is run on
  $A_j$. If $a_{j+1} > \nextack(A_j)$, then up till time $\nextack(A_j)$, the
  executions of $\A$ on $A_j$ and of $\A$ on $A$ see the same information. Since $\A$
  is an online algorithm, the execution of $\A$ on $A$ will also acknowledge at
  time $\nextack(A_j)$.
\end{proof}

Second, we need to be able to map batches to permits. In general, the time between when a batch is acknowledged and the earliest arrival time
of the batch can be arbitrary, while permits can only have lengths $D_k = 4^k$.
The following lets us approximately map batches to permits.
\begin{proposition}
 \label{prop:summono-lb:roundup}
 For every $x \geq 0$, there exists $k^{*} \geq 0$ such that $D_{k^{*}} \geq x$ and $C_{k^{*}} \leq
 2f(x)$.
\end{proposition}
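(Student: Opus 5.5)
We need: for every $x\ge 0$ there is $k^*$ with $D_{k^*}\ge x$ and $C_{k^*}\le 2f(x)$, where $f(x)=\min_{k\ge 0}\bigl(C_k + x\,C_k/D_k\bigr)=\min_{k\ge0}\bigl(2^k + x\,2^{-k}\bigr)$.

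The plan is to choose $k^*$ as the \emph{smallest} index whose duration covers $x$, namely $k^*=\min\{k\ge 0: 4^k\ge x\}$, and then lower bound $f(x)$ termwise. For each $k$, let $g_k(x)=2^k+x2^{-k}$; we will show $g_k(x)\ge 2^{k^*-1}=C_{k^*}/2$ for every $k$. Taking the minimum over $k$ then gives $f(x)\ge C_{k^*}/2$, which is the claim.

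\textbf{Case $k^*=0$}, i.e.\ $x\le 1$. Here $C_{k^*}=1$, and every term satisfies $g_k(x)\ge 2^k\ge 1$. Hence $f(x)\ge 1\ge C_{k^*}/2$.

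\textbf{Case $k^*\ge 1$.} By minimality of $k^*$ we have $x>4^{k^*-1}$. We split the indices $k$ into two ranges.
\begin{itemize}
\item If $k\ge k^*-1$, then $g_k(x)\ge 2^k\ge 2^{k^*-1}$.
\item If $k\le k^*-2$, then $g_k(x)\ge x2^{-k}> 4^{k^*-1}2^{-k}\ge 4^{k^*-1}2^{-(k^*-2)}=2^{k^*}$.
\end{itemize}
So every term is at least $2^{k^*-1}$. Therefore $f(x)\ge 2^{k^*-1}=C_{k^*}/2$, i.e.\ $C_{k^*}\le 2f(x)$, and $D_{k^*}=4^{k^*}\ge x$ holds by the choice of $k^*$.

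There is no real obstacle here. The only points needing care are that the minimum defining $f$ ranges over infinitely many $k$, which is harmless since the bound is proved for each term separately, and the boundary case $x=0$, which falls under the case $k^*=0$.
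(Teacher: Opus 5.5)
Your proof is correct and is essentially the paper's argument. Both take $k^*$ to be, in the nontrivial case, the smallest index with $D_{k^*}\ge x$, and both lower-bound $f(x)$ by either the constant term $C_k$ or the linear term $x\,C_k/D_k$, using $x\ge D_{k^*-1}$ and the fact that the slope $C_k/D_k=2^{-k}$ decreases in $k$. The only difference is organizational. The paper conditions on the minimizing index $k$ and takes $k^*=k$ when $x<D_k$, whereas you fix $k^*$ in advance and bound every term uniformly. Your version also states the slope comparison in the right direction: the paper's displayed $(1/2)^k\le(1/2)^{k^*-1}$ should read $\ge$, although its chain of inequalities is valid.
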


\begin{proof}
  Suppose $f(x) = C_k + x(C_k/D_k)$.
  If $x < D_k$, then $f(x) \geq C_k$ so choosing $k^{*} = k$ works.
  Next, we consider the case when $x > D_k$. Let $k^{*} > k$ be such that $D_{k^{*}-1} \leq x \leq
  D_{k^{*}}$. Observe that $C_{k}/D_{k} = (1/2)^{k} \leq (1/2)^{{k^{*}}-1} =
  C_{k^{*}-1}/D_{k^{*}-1}$. Together with the fact that $x \geq D_{k^{*}-1}$, we have
  \[f(x) \geq x(C_k/D_k) \geq
    D_{k^{*}-1}(C_{k^{*}-1}/D_{k^{*}-1}) = C_{k^{*}-1} = C_{k^{*}}/2,\]
  as desired.
\end{proof}

We now have the ingredients to convert any deterministic algorithm $\A$ for hard TCP Acknowledgment instances
 to a deterministic algorithm $\BA$ for hard Parking Permit
 instances. Let $\nextack(\cdot)$ be the threshold function of $\A$. The algorithm $\BA$ works as
 follows. For each request arrival $a_j$ that is not covered by permits bought
 so far:
 \begin{enumerate}[label=(\arabic*)]
 \item let $A_j$ be the sequence of arrival
   times so far, including $a_j$;
 \item let $k^{*}$ be such that $D_{k^{*}} \geq x$ and $f(D_{k^{*}})\leq 2 f(\nextack(A_j)-a_j)$ as guaranteed by
   \cref{prop:summono-lb:roundup};
 \item buy permit $[a_j, a_j+D_{k^{*}}]$.
 \end{enumerate}

 \begin{lemma}
   \label{lem:summono-lb:lb}
   Let $\A$ be a deterministic algorithm for hard TCP Acknowledgment instances
   and $A^{*}$ be the adversarial sequence for $\BA$ given by
   \cref{lem:summono-lb:ppadv}. Then, $\A$ is $\Omega(\log n)$-competitive on
   $A^*$.
 \end{lemma}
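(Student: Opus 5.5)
We will show that on $A^*$ the cost of $\A$ on the hard TCP instance is at least a constant fraction of the cost of $\BA$ on the hard Parking Permit instance, while the optimal TCP cost is at most a constant times the optimal Parking Permit cost. Then \cref{lem:summono-lb:ppadv} transfers the $\Omega(\log n)$ ratio.

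\textbf{Step 1: $\A$ and $\BA$ behave identically on $A^*$.}
By \cref{lem:summono-lb:ppadv}, $\BA$ buys a separate permit $[a_j, a_j+D_{k_j}]$ for every request $j$, and $a_{j+1} = a_j + D_{k_j} + 1$. By the choice of $k_j$ via \cref{prop:summono-lb:roundup}, we have $a_j + D_{k_j} \geq \nextack(A_j)$, so $a_{j+1} > \nextack(A_j)$. We argue by induction on $j$ that $\A$, run on $A^*$, acknowledges packet $j$ alone at time $\nextack(A_j)$. \cref{lem:summono-lb:threshold} gives the acknowledgment time of packet $j$. Since all earlier packets were acknowledged before $a_j$ and packet $j+1$ arrives after $\nextack(A_j)$, the batch is exactly $\{j\}$. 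Hence $\A$ pays
\[
\sum_j f\bigl(\nextack(A_j)-a_j\bigr) \;\geq\; \tfrac12 \sum_j C_{k_j} \;=\; \tfrac12\,\mathrm{cost}(\BA).
\]
The inequality is \cref{prop:summono-lb:roundup}, applied with $x=\nextack(A_j)-a_j$. One should use the guarantee $C_{k^*}\le 2f(x)$ rather than the mistyped $f(D_{k^*})$ in the description of $\BA$.

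\textbf{Step 2: compare optima.}
Take an optimal Parking Permit solution on $A^*$ with permits $[s_i, s_i+D_{k_i}]$. Build a TCP solution as follows. For each permit, group the requests it covers, assigning each request to one permit, and acknowledge the group at its last arrival. Call this group $\sigma$. Since $\sigma$ is covered by the permit, its last arrival minus $a(\sigma)$ is at most $D_{k_i}$. Using the term $k=k_i$ in the definition of $f$, we get
\[
f\bigl(\text{last}-a(\sigma)\bigr) \;\leq\; C_{k_i} + D_{k_i}\cdot\frac{C_{k_i}}{D_{k_i}} \;=\; 2C_{k_i}.
\]
Thus $\OPTTCP \leq 2\,\OPTPP$.

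\textbf{Step 3: conclude and identify the obstacle.}
Combining the two steps,
\[
\mathrm{cost}(\A) \;\geq\; \tfrac12\,\mathrm{cost}(\BA) \;\geq\; \Omega(\log n)\,\OPTPP \;\geq\; \Omega(\log n)\,\OPTTCP.
\]
The main care point is Step 1. One must verify that $\BA$ is genuinely a deterministic online Parking Permit algorithm, which holds because $\nextack(A_j)$ depends only on $A_j$. This makes \cref{lem:summono-lb:ppadv} applicable. One must also check that the adversary's gaps ensure each TCP batch is a singleton, so the TCP cost is exactly $\sum_j f(\nextack(A_j)-a_j)$.
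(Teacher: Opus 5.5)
Your Steps 1 and 3 follow the paper's argument, including the corrected reading $C_{k^*}\le 2f(x)$ and the remark that $\BA$ is a legitimate deterministic online algorithm. The gap is in Step 2.

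\textbf{Groups are not batches.} In TCP Acknowledgment you cannot acknowledge a chosen group. An acknowledgment at time $t$ acknowledges every pending packet, so the realized batches are the contiguous blocks of arrivals between consecutive acknowledgment times. If you assign multiply-covered requests to permits arbitrarily, groups can interleave: a request assigned to $Q$ may arrive between two requests assigned to $P$. The acknowledgment at the last arrival of one group then also takes earlier pending requests of other groups. So the realized batches are not your groups $\sigma$, and the per-group bound $f(\mathrm{last}-a(\sigma))\le 2C_{k_i}$ does not bound the cost of the solution you built.

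\textbf{This is not cosmetic.} Take a long permit $P$ with $m$ short permits nested inside it, and let requests alternate between $P$'s group and the short permits' groups. Each short permit's acknowledgment closes a realized batch that starts with a request of $P$'s group and has span about $D_P/m$. The realized cost is then about $m f(D_P/m)$. With $D_P=4^k$ and $m=2^k$, this exceeds $2\sum_i C_{k_i}$ by an unbounded factor. Your argument uses nothing about the permit family except that it covers $A^*$, so it does not exclude this. For an \emph{optimal} family the conclusion can be rescued: irredundancy forces the permits into a chain in which each realized batch starts inside the permit whose group triggers its acknowledgment. But that is an additional argument you have not given.

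\textbf{Repair.} Choose the assignment so that groups are contiguous. Assign each request to a covering permit with the latest end time, breaking ties by a fixed order. Suppose $x<y<z$ with $x,z$ assigned to $P$ and $y$ assigned to $Q\neq P$. Since $P$ covers $y$, $Q$ ends no earlier than $P$ and wins any tie. Hence $Q$ also covers $z$, so $z$ would have been assigned to $Q$, a contradiction. With contiguous groups, acknowledging at each group's last arrival realizes exactly the groups. Your per-group bound then gives $\mathtt{OPT}_{\mathrm{TCP}}\le 2\,\mathtt{OPT}_{\mathrm{PP}}$.

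The paper gets the same effect with a greedy construction. From the earliest unacknowledged packet $a_j$, it acknowledges at the furthest end time $t$ among permits covering $a_j$. The batch therefore starts at $a_j$ inside that permit and costs at most $f(D)\le 2C$. Distinct batches charge distinct permits because the next batch starts after $t$.
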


 \begin{proof}
 Let $\OPTPP(A^{*})$ and $\OPTTCP(A^{*})$ be the cost of the optimal Parking
 Permit and TCP Acknowledgment solutions to $A^{*}$, respectively.
 Since \cref{lem:summono-lb:ppadv} already shows that $\BA$ is $\Omega(\log
 n)$-competitive on $A^{*}$, it suffices to show the cost of $\A$ on $A^{*}$ is at least
 $\Omega(1)$ times the cost of $\BA$ on $A^{*}$, and that $\OPTTCP(A^{*}) =
 O(1) \OPTPP(A^{*})$.

 By \cref{lem:summono-lb:ppadv} and by definition of $\BA$, for every $j < n$,
 we have that $a_{j+1} > a_j+D_{{k^{*}}} \geq \nextack(A_j)$. Thus,
 for each request $j$, \cref{lem:summono-lb:threshold} implies that $\A$
 acknowledges $j$ by itself at time $\nextack(A_j)$ and the total cost of the batch is
 $f(\nextack(A_j)-a_j))$; on the other hand, $\BA$ buys permit
 $[a_j,a_j+D_{k^{*}}]$ of cost $C_{k^{*}} \leq
 2f(\nextack(A_j)-a_j)$. Thus, the cost of $\A$ on $A^{*}$ is at least 1/2
 the cost of $\BA$ on $A^{*}$.

 Next, we show that $\OPTTCP(A^{*}) =
 O(1) \OPTPP(A^{*})$. Let $P_i = [s_i,t_i]$ be the $i$-th permit bought by the optimal solution
for the hard instance of Parking Permit with arrival sequence $A$. We construct
a solution to the hard TCP Acknowledgment instance as follows: while there is an unacknowledged packet, let $a_j$ be the earliest
  unacknowledged packet and send an acknowledgment at time $t$ where
  $t$ is the furthest end time among all permits covering $a_j$ (at least one
  such permit exists since the permits have to cover $A$).

  We now argue that we can charge the total cost of each batch to a unique
  permit. For each batch
  $\sigma_i$ acknowledged at time $t_i$, let $P_{(i)}$ be the permit that covers
  $a(\sigma_i)$ and has end time $t_i$. Since $P_{(i)}$ is the permit with the
  furthest end time, we have that $P_{(i')} \neq P_{(i)}$ for $i' \neq i$. Let
  $C_{(i)}$ and $D_{(i)}$ be the cost and duration of $P_{(i)}$.
  Since $a(\sigma_i) \in P_{(i)}$, we get that the total cost of
  $\sigma_i$ is $f(t_i - a(\sigma_i)) \leq f(D_{(i)}) = 2C_{(i)}$. Thus the cost
  of each batch $\sigma_i$ is charged to at most twice the cost of permit
  $P_{(i)}$, and each permit is charged by at most one batch. Thus,
  we have shown that $\OPTTCP(A^{*}) = O(1)\OPTPP(A^{*})$, completing the proof
  of the lemma.
 \end{proof}

 Thus, we conclude that every deterministic algorithm for TCP Acknowledgment with sum-monotone delay costs is $\Omega(\log n)$-competitive, as desired. This also completes the proof of \cref{thm:summono}.

\section{Batch-Oblivious Delay Costs}

We now turn to the batch-oblivious model, where the delay cost is defined globally over the vector of packet delays, without reference to how packets are grouped in acknowledgments. Formally, the objective is to minimize $k + f(d)$, where $k$ is the number of acknowledgments, $f : \mathbb{R}_+^n \to \mathbb{R}_+$ is a monotone delay function, and $d = (d_1, \dots, d_n)$ is the vector of waiting times, where $d_i$ denotes how long packet $i$ waits before being acknowledged. The function $f$ is revealed incrementally as packets arrive. While this framework captures natural objectives such as total waiting time, it also includes more general functions, such as concave or submodular delay costs. As discussed earlier, greedy algorithms that succeed for additive delay fail in these settings, due to the ability of the optimal solution to concentrate delay on a few early packets and reduce marginal costs for the rest. A formal lower bound for concave delay functions is provided in \cref{sec:concave}. Here, we focus on broader classes of delay functions, including submodular functions and norms.

    \subsection{Concave Delay Cost}
    \label{sec:concave}
\concavelb*

\begin{proof}
 Let $\ell \leq n/2$ and $\epsilon\in [0,1]$ be parameters that will
 be set later. Define $x, y \in \mathbb{R}^n$ such that: each of the first $\ell$
 coordinates of $x$ is $\epsilon$ and each of the remaining coordinates is
 $1$; each of the first $\ell$ coordinates of $y$ is $n/\ell$ and  each of
 the remaining is $\epsilon$. Given a waiting time vector $d$, define $d_{\leq
   \ell} = \sum_{i \leq \ell} d_i$ and $d_{> \ell} = \sum_{i > \ell} d_i$. We now define the delay function
 \[
   f(d)
   = \min\{d \cdot x, d \cdot y\}
   = \min\left\{\epsilon d_{\leq \ell} + d_{> \ell},
   \frac{n}{\ell} d_{\leq \ell} + \epsilon d_{> \ell}\right\}.
 \]
 Note that the function $f$ is concave as it is the
 point-wise minimum of linear functions.

 The adversary proceeds as follows. For the first $\ell$ packets, it
 releases packet $i$ at time $i$. Let $A_\ell$ be the number of packets made by
 the algorithm just before time $\ell+1$. If $A_\ell \geq \ell/2$ (case
 1), it
 releases the remaining $n - \ell$ packets together at time $\ell + 1$;
 otherwise (case 2), it continues releasing each remaining packet $i$ at time $i$.

 We now show that in both cases, the adversary can construct a significantly cheaper solution. Consider case 1: the algorithm incurs a cost of at least $\ell/2$. Meanwhile, the adversary can use a single acknowledgment at time $\ell + 1$. Let $d^*$ denote the resulting waiting time vector, where $d^*_i = \ell + 1 - i$ for $i \leq \ell$, and $d^*_i = 0$ for $i > \ell$. The delay cost of this solution is at most $f(d^*) \leq \epsilon \sum_{i=1}^\ell i = O(\ell^2 \epsilon)$, and the total cost is $1 + O(\ell^2 \epsilon)$. Therefore, the competitive ratio of the algorithm is $\Omega\big(\min\{\ell,\, 1/(\ell \epsilon)\}\big)$ in this case.

Next, we consider case 2. Let $k$ denote the total number of acknowledgments made by the algorithm. Without loss of generality, we may assume that each acknowledgment occurs at the arrival time of some packet, as this can only reduce the total cost. Recall that in this case, packets arrive at unit times $1, 2, \dots, n$. Thus, for any time interval starting at a packet arrival, the number of acknowledgments made during the interval plus the total waiting time incurred by the packets arriving within it must be at least the length of the interval. This is because the acknowledgment cost is 1, so each acknowledgment can “pay for” at most one unit of waiting time.

 Now, consider the first $\ell$ packets. By assumption, the algorithm makes $A_\ell < \ell/2$ acknowledgments during this interval. From our previous observation, it follows that the total delay incurred by these packets is at least $\ell - A_\ell > \ell/2$. Thus, we obtain that $d \cdot y \geq (n/\ell) \cdot (\ell/2) = \Omega(n)$. On the other hand, consider only the remaining $n - \ell$ packets. Here, the number of acknowledgments and the overall waiting time sum up to $(k - A_\ell) + d_{>\ell}$. As noted above, this sum must be at least the length of the interval $(\ell, n]$, which is greater than $n/2$, since $\ell < n/2$. Hence, we have $k + d \cdot x \geq (k - A_\ell) + d_{>\ell} = \Omega(n)$. Thus, the total cost of the algorithm satisfies $k + \min\{d \cdot x, d \cdot y\} \geq \Omega(n)$.

 On the other hand, consider the solution that acknowledges each of the first
 $\ell$ packets on arrival and makes one acknowledgment at the end at time
 $n$. Let $d^{*}$ be its waiting time vector. Then, it holds that $f(d^{*}) \leq d^{*}\cdot
 y = \epsilon d_{>\ell} \leq O(n^2\epsilon)$ and so the solution has total cost
 $\ell + O(n^2\epsilon)$. Thus, the competitive ratio is
 $\Omega(\min\{n/\ell, 1/(n\epsilon)\})$ in this case.

 Setting $\ell = \lceil \sqrt{n} \rceil$ and $\epsilon = 1/n^2$ gives the theorem.
\end{proof}

\subsection{Submodular Delay Cost and Norms}
We begin by defining continuous submodularity. 
\begin{definition}[Continuous submodular functions]
  \label{def:cs}
    A function $f : \mathbb{R}^n_+ \rightarrow \mathbb{R}_+$ is \emph{continuous
      submodular} if for every $x,y \in \mathbb{R}^n$, it satisfies $f(x \vee y)
    + f(x \wedge y) \leq f(x) + f(y)$, where $\vee$ denotes coordinate-wise
    maximum and $\wedge$ denotes coordinate-wise minimum. 
\end{definition}

We remark that, unlike the case of discrete submodular functions, continuous
submodularity is only equivalent to a weaker notion of diminishing marginals: for
every $x,y \in \mathbb{R}^n_+$ such that $x \leq y$, $i \in [n]$ such that
$x_i = y_i$ and $a \geq 0$, we have $f(x + ae_i) - f(x) \geq f(y + ae_i) -
f(y)$, where $e_i$ is the $i$-th standard basis vector. We refer the reader to
\cite[Remark 7 and Appendix~A.1]{PattonR023} for further
discussion on this topic.

In fact, our analysis only uses an even weaker condition, where the
diminishing-marginals property is required only when the coordinate being
increased is zero in both vectors. We say that $f$ satisfies
\emph{zero-coordinate diminishing marginals} if for every $x,y\in\mathbb R^n_+$
such that $x\le y$, every $i\in[n]$ such that $x_i=y_i=0$, and every $a\ge 0$,
we have $f(x+ae_i)-f(x)\ge f(y+ae_i)-f(y)$. Clearly, continuous submodularity
implies this condition.

For a vector $\Delta\in\mathbb R^n_+$, let
$\operatorname{supp}(\Delta)=\{i\in[n]:\Delta_i>0\}$. We will use the following
simple observation.

\begin{observation}
  \label{obs:zero-coordinate-vector-marginal}
  Suppose that $f$ satisfies zero-coordinate diminishing marginals. Let
  $u,v,\Delta\in\mathbb R^n_+$ be such that $u\le v$ and $u_i=v_i=0$ for every
  $i\in \operatorname{supp}(\Delta)$. Then
  \[
    f(u+\Delta)-f(u)\ge f(v+\Delta)-f(v).
  \]
\end{observation}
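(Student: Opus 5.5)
We would prove the observation by a telescoping argument that adds $\Delta$ one coordinate at a time. Write $\operatorname{supp}(\Delta)=\{i_1,\dots,i_m\}$ and define partial sums $\Delta^{(0)}=0$ and $\Delta^{(r)}=\sum_{s\le r}\Delta_{i_s}e_{i_s}$ for $r=1,\dots,m$, so that $\Delta^{(m)}=\Delta$. Then
\[
  f(u+\Delta)-f(u)=\sum_{r=1}^{m}\Bigl(f\bigl(u+\Delta^{(r-1)}+\Delta_{i_r}e_{i_r}\bigr)-f\bigl(u+\Delta^{(r-1)}\bigr)\Bigr),
\]
and the analogous identity holds with $v$ in place of $u$. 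It then suffices to compare the two sums term by term.

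For the $r$-th term we apply zero-coordinate diminishing marginals with the choices $x=u+\Delta^{(r-1)}$, $y=v+\Delta^{(r-1)}$, coordinate $i=i_r$, and increment $a=\Delta_{i_r}\ge 0$. We need to check three hypotheses. First, $x\le y$ holds because $u\le v$ and the same vector $\Delta^{(r-1)}$ is added to both. Second, $x,y\in\mathbb{R}^n_+$ holds since all the vectors involved are nonnegative. Third, $x_{i_r}=y_{i_r}=0$ holds because $u_{i_r}=v_{i_r}=0$ by assumption ($i_r\in\operatorname{supp}(\Delta)$), and $\Delta^{(r-1)}$ has support in $\{i_1,\dots,i_{r-1}\}$, which does not contain $i_r$ since the indices are distinct.

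With these hypotheses in place, the condition gives that each $u$-increment is at least the corresponding $v$-increment. Summing over $r$ yields the claimed inequality. If $\Delta=0$, both sides are zero and there is nothing to prove.

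The only delicate point is the third hypothesis, namely that the coordinate being raised is still zero in both intermediate vectors. The distinctness of the support indices is exactly what guarantees this, so the remaining steps are routine.
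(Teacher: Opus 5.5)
Your proof is correct and is the paper's argument: add the coordinates of $\operatorname{supp}(\Delta)$ one at a time, observe that the coordinate being raised is still zero in both intermediate vectors $u+\Delta^{(r-1)}\le v+\Delta^{(r-1)}$, apply zero-coordinate diminishing marginals to each step, and sum the telescoping increments. You simply spell out the hypothesis checks (ordering, nonnegativity, and distinctness of the support indices) that the paper leaves implicit.
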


Indeed, one can add the coordinates in $\operatorname{supp}(\Delta)$ one at a
time. Before each coordinate is added, it is still zero in both the smaller and
the larger vector, so zero-coordinate diminishing marginals apply. Summing
these inequalities over all coordinates in $\operatorname{supp}(\Delta)$ gives the
desired inequality.

\submodular*

\begin{proof} 
Recall that Greedy makes an acknowledgment
whenever its delay cost increases by $1$. We now analyze the competitive ratio
of the algorithm. First, observe that by construction, the cost of the algorithm is at most twice
the number of acknowledgments. Let $k$ be the number of acknowledgments made by
the algorithm. Fix an optimal solution. In the rest of the proof, we argue
that its cost is at least $k$.

Let $t_j$ be the time when the algorithm made its $j$-th acknowledgment, and define
$T_j = (t_{j-1}, t_j]$ with $t_0 = 0$. For each packet $i$, let $d_i(j)$
and $d^*_i(j)$ be the amount of waiting time that packet $i$ incurred up till
$t_j$ in the algorithm's solution and the optimal solution, respectively. If
packet $i$ arrived after $t_j$, we define $d_i(j)$ and $d^*_i(j)$ to be both $0$.

At a high level, we adopt the proof framework of \cite{dooly2001line} for the
standard TCP Acknowledgment setting and generalize it to our case: within each
interval $T_j$, the optimal solution either issues an acknowledgment and pays a
cost of $1$ or its delay cost increases by at least $1$. However, if the optimal
solution allows earlier packets to wait a long time, it can effectively
``front-load'' its delay cost, which causes the marginal cost in later intervals
to be too small. To address this, we introduce a capped version of the optimal
delay vector, denoted $\capd$, where each entry is defined as
$\capd_i(j) = \min\{d^*_i(j), d_i(j)\}$. In what follows, we use $d(j)$ and
$\capd(j)$ to denote the vectors $(d_1(j), \dots, d_n(j))$ and
$(\capd_1(j), \dots, \capd_n(j))$, respectively. By monotonicity, the optimal
solution's delay cost $f(d^*) \geq f(\capd)$ and so it suffices to show that if
the optimal solution makes no acknowledgment in $T_j$, then the increase in the
delay cost of the capped optimal delay vector in $T_j$ is
$f(\capd(j)) - f(\capd(j-1)) \geq 1$. 

\begin{claim}
  \label{clm:CS-marginal}
    For each interval $T_j$, if the optimal solution did not make an acknowledgment in $T_j$, then 
    \[
      f(\capd(j)) - f(\capd(j-1)) \geq f(d(j)) - f(d(j-1)) = 1.
    \]
\end{claim}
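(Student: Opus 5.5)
The plan is to write both the greedy delay vector and the capped optimal delay vector at time $t_j$ as their value at time $t_{j-1}$ plus a common increment supported on the batch $\sigma_j$. We then use monotonicity once, and \cref{obs:zero-coordinate-vector-marginal} once, to compare the two marginals.

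\emph{Step 1: Greedy's vectors.} Let $\sigma_j$ be the set of packets arriving in $T_j=(t_{j-1},t_j]$. Every packet arriving at or before $t_{j-1}$ is acknowledged by Greedy at or before $t_{j-1}$, so its Greedy waiting time is frozen and $d_i(j)=d_i(j-1)$ for such $i$. Packets arriving after $t_j$ are zero in both vectors. For $i\in\sigma_j$ we have $d_i(j-1)=0$, because the packet arrives after $t_{j-1}$. Also $d_i(j)=t_j-a_i=:\Delta_i$, because Greedy acknowledges $\sigma_j$ exactly at $t_j$. Hence, with $v:=d(j-1)$,
\[
  d(j)=v+\Delta,\qquad \operatorname{supp}(\Delta)\subseteq\sigma_j,\qquad v_i=0 \text{ for } i\in\sigma_j .
\]
By the definition of Greedy (it acknowledges exactly when its delay cost has grown by $1$ since the last acknowledgment), $f(v+\Delta)-f(v)=1$.

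\emph{Step 2: the capped optimal vectors.} Assume the optimal solution makes no acknowledgment in $T_j$. Then every $i\in\sigma_j$ is still pending in the optimal solution at time $t_j$, so $d^*_i(j)=t_j-a_i=d_i(j)$. Therefore $\capd_i(j)=\Delta_i$ for $i\in\sigma_j$, while $\capd_i(j-1)=0$ for such $i$.

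For the remaining coordinates, define $u:=\capd(j-1)$ and let $u'$ agree with $\capd(j)$ off $\sigma_j$ and be zero on $\sigma_j$. Then $u'\ge u$, since $d^*_i(\cdot)$ is nondecreasing in time and $d_i(j)=d_i(j-1)$ off $\sigma_j$, so the cap does not decrease. We get
\[
  \capd(j)=u'+\Delta,\qquad \capd(j-1)=u,\qquad u\le v,
\]
where $u\le v$ holds because $\capd\le d$ coordinatewise. Moreover $u_i=v_i=0$ on $\operatorname{supp}(\Delta)$.

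\emph{Step 3: combine.} By monotonicity of $f$,
\[
  f(\capd(j))-f(\capd(j-1))=f(u'+\Delta)-f(u)\ \ge\ f(u+\Delta)-f(u).
\]
By \cref{obs:zero-coordinate-vector-marginal} applied to $u\le v$ and $\Delta$,
\[
  f(u+\Delta)-f(u)\ \ge\ f(v+\Delta)-f(v)=f(d(j))-f(d(j-1))=1.
\]

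\emph{Expected obstacle.} The delicate point is the bookkeeping in Step 2. We must check that the capping makes the increment on $\sigma_j$ identical in both solutions; this is exactly where "no acknowledgment in $T_j$" is used. We must also check that off $\sigma_j$ the capped vector only grows while staying below Greedy's frozen values.

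A second point is boundary conventions: a packet arriving exactly at $t_{j-1}$ belongs to $\sigma_{j-1}$, and we need $f(d(j))-f(d(j-1))=1$ with equality. This follows from continuity of the delay cost in time, so Greedy's trigger is hit exactly. If the last, partial batch does not reach $1$, it is not an interval $T_j$ of the claim.
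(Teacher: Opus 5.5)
Your proof is correct and follows the paper's argument: both apply \cref{obs:zero-coordinate-vector-marginal} with $u=\capd(j-1)$, $v=d(j-1)$ and the Greedy increment $\Delta$. Both then finish by monotonicity through the coordinatewise comparison $\capd(j-1)+\Delta\le\capd(j)$; your decomposition $\capd(j)=u'+\Delta$ with $u'\ge u$ is the same comparison. One minor caveat on your boundary remark: the paper treats $f(d(j))-f(d(j-1))=1$ as the definition of Greedy's trigger, whereas justifying it by continuity would need $f$ itself to be continuous, which continuous submodularity does not imply (``continuous'' refers to the domain).
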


\begin{proof}
  The equality follows from the fact that $f(d(j)) - f(d(j-1))$ is the increase
  in the algorithm's delay cost during interval $T_j$, and the algorithm makes an
  acknowledgment exactly when its delay cost increases by $1$ since the last
  acknowledgment. 
  
  It remains to prove the inequality. Let $\Delta^j = d(j)-d(j-1)$ be the
  increase in the algorithm's delay vector during interval $T_j$. Since the
  algorithm acknowledged all pending packets at time $t_{j-1}$, the only packets
  whose delay can increase during $T_j$ are the packets that arrive
  during $T_j$. Hence, for every $i\in\operatorname{supp}(\Delta^j)$, we have
  $d_i(j-1)=0$. Moreover, by the definition of the capped vector,
  $\capd_i(j-1)=0$. We also have $\capd(j-1)\le d(j-1)$. Therefore, applying
  \cref{obs:zero-coordinate-vector-marginal} with
  $u=\capd(j-1)$, $v=d(j-1)$, and $\Delta=\Delta^j$, we get
  \[
    f(\capd(j-1)+\Delta^j)-f(\capd(j-1))
    \ge
    f(d(j-1)+\Delta^j)-f(d(j-1)).
  \]
  Since $d(j-1)+\Delta^j=d(j)$, this gives
  \begin{equation}
  \label{eq:cs}
    f(\capd(j-1)+\Delta^j)-f(\capd(j-1))
    \ge
    f(d(j))-f(d(j-1)).
  \end{equation}

  It remains to compare $\capd(j-1)+\Delta^j$ with $\capd(j)$. We claim that
  $\capd(j-1)+\Delta^j \leq \capd(j)$ coordinatewise. Fix a packet $i$. If $i$
  arrived during $T_j$, then, since the optimal solution did not make an
  acknowledgment in $T_j$, the packet has the same waiting time up till $t_j$ in
  both solutions. Thus, $d_i^*(j)=d_i(j)$, and hence
  $\capd_i(j)=d_i(j)=\Delta_i^j$. Also, $\capd_i(j-1)=0$, so the desired
  inequality holds with equality.

  If packet $i$ arrived no later than $t_{j-1}$, then the algorithm does not
  incur additional waiting time for packet $i$ during $T_j$, so
  $\Delta_i^j=0$. Since waiting times in the optimal solution do not decrease
  while no acknowledgment is made, and since $d_i(j)=d_i(j-1)$, we have
  $\capd_i(j-1)\le \capd_i(j)$. Thus, the desired inequality again holds.
  Finally, if packet $i$ arrives after $t_j$, all relevant quantities are $0$,
  and the inequality is trivial. Therefore, $\capd(j-1)+\Delta^j \leq \capd(j)$, as desired. 
  
  Applying monotonicity of $f$ to the above inequality, we have
  $f(\capd(j)) \geq f(\capd(j-1)+\Delta^j)$. Plugging this into \cref{eq:cs} gives
  \[
    f(\capd(j))-f(\capd(j-1))
    \ge
    f(d(j))-f(d(j-1)).
  \]
  Together with the greedy rule $f(d(j))-f(d(j-1))=1$, this proves the claim.
\end{proof}

Suppose the optimal solution makes $k^*$ acknowledgments. If $k^* \geq k$, then
we are done. Otherwise, there exist at least $k - k^*$ intervals in which the
optimal solution did not make an acknowledgment. By monotonicity of $f$ and \cref{clm:CS-marginal},
the optimal solution's delay cost is 
\[
    f(d^*) 
    \geq f(\capd) 
    = \sum_{j=1}^k f(\capd(j)) - f(\capd(j-1)) 
    \geq k- k^*.
\]  
Thus, the optimal solution's total cost is at least $k$. 
\end{proof}

Patton, Russo, and Singla~\cite{PattonR023} observed that $\ell_p$ norms, $\textsf{Top-}k$ norms and ordered norms are
submodular. They also proved a result on the approximability of symmetric norms by
submodular norms. Using these results give us the following corollary.

\normthm*

\begin{remark}
    The zero-coordinate diminishing-marginals condition has a natural modeling
interpretation in practical domains beyond TCP Acknowledgment. It says that the marginal cost of making a new coordinate
positive is smaller when more other coordinates are already present. This is
appropriate for incident-based or SLA-type\footnote{SLA = Service Level Agreement.} delay costs, where the first few
delayed requests may create most of the operational or reputational cost, while
additional newly delayed requests are partly absorbed into the same incident. For
example, functions of the form $f(d)=g(\operatorname{supp}(d))$, where $g$ is a
monotone submodular function, satisfy zero-coordinate diminishing marginals:
adding a new positive coordinate corresponds exactly to adding a new element to a
larger set, whose marginal contribution is smaller by the submodularity of $g$. Such
functions model global costs depending on the set of requests that experience any
positive delay, rather than on the precise magnitude of each delay.
\end{remark}

\section{Conclusion}

In this work, we revisit Online TCP Acknowledgment under broad and natural generalizations of delay costs. We show that the classical guarantees of Greedy extend well beyond additive delays: it remains $2$-competitive for \emph{max-monotone} objectives and for a rich class of \emph{batch-oblivious} continuous submodular delay
costs. At the same time, we identify a sharp boundary where this paradigm
breaks, by proving that Greedy can be $\Omega(n)$-competitive for \emph{sum-monotone}
delays. For this harder regime, we develop a new phase-based online algorithm
that achieves a tight $\Theta(\log n)$ competitive ratio. Together, our results
give a more complete picture of TCP under non-additive delay objectives and lay
conceptual groundwork for online problems with delay more broadly. 

There are several intriguing open problems. First, what is the competitiveness of other problems such as Joint Replenishment, Multi-Level Aggregation, and Set Cover with Delay under the more general delay cost models considered in this work? Second, while we have a tight characterization of the deterministic competitive ratio for sum-monotone delays, it is unclear whether randomization can help. For the Parking Permit problem, Meyerson~\cite{Meyerson05} showed that the randomized competitive ratio is $\Theta(\log \log n)$. A reduction from Parking Permit will imply the same lower bound. Finally, are there other interesting delay objectives? For example, fairness-oriented delay objectives, where the goal is
to balance efficiency with equitable treatment of packets, a setting that
remains largely unexplored in the TCP acknowledgment framework.

\small
\bibliography{pubs}

@inproceedings{dooly1998tcp,
  title={Tcp dynamic acknowledgment delay (extended abstract) theory and practice},
  author={Dooly, Daniel R and Goldman, Sally A and Scott, Stephen D},
  booktitle={Proceedings of the thirtieth annual ACM symposium on Theory of computing},
  pages={389--398},
  year={1998}
}

@article{KarlinKR01,
	title = {Dynamic {TCP} {Acknowledgment} and {Other} {Stories} about e/(e - 1)},
	volume = {36},
	issn = {1432-0541},
	url = {https://doi.org/10.1007/s00453-003-1013-x},
	doi = {10.1007/s00453-003-1013-x},
	number = {3},
	journal = {Algorithmica},
	author = {{Karlin} and {Kenyon} and {Randall}},
	month = jul,
	year = {2003},
	pages = {209--224},
}

@article{dooly2001line,
  title={On-line analysis of the TCP acknowledgment delay problem},
  author={Dooly, Daniel R and Goldman, Sally A and Scott, Stephen D},
  journal={Journal of the ACM (JACM)},
  volume={48},
  number={2},
  pages={243--273},
  year={2001},
  publisher={ACM New York, NY, USA}
}

@misc{clark1982rfc0813,
  title={RFC0813: Window and Acknowledgement Strategy in TCP},
  author={Clark, David D},
  year={1982},
  publisher={RFC Editor}
}

@article{allman1998generation,
  title={On the generation and use of TCP acknowledgments},
  author={Allman, Mark},
  journal={ACM SIGCOMM Computer Communication Review},
  volume={28},
  number={5},
  pages={4--21},
  year={1998},
  publisher={ACM New York, NY, USA}
}

@inproceedings{Seiden00,
  author       = {Steven S. Seiden},
  title        = {A guessing game and randomized online algorithms},
  booktitle    = {Proceedings of the Thirty-Second Annual {ACM} Symposium on Theory
                  of Computing, May 21-23, 2000, Portland, OR, {USA}},
  pages        = {592--601},
  publisher    = {{ACM}},
  year         = {2000}
}

@inproceedings{de2005dynamic,
  title={A dynamic adaptive acknowledgment strategy for TCP over multihop wireless networks},
  author={De Oliveira, Ruy and Braun, Torsten},
  booktitle={Proceedings IEEE 24th Annual Joint Conference of the IEEE Computer and Communications Societies.},
  volume={3},
  pages={1863--1874},
  year={2005},
  organization={IEEE}
}

@inproceedings{AlbersB03,
  author       = {Susanne Albers and
                  Helge Bals},
  title        = {Dynamic {TCP} acknowledgement: penalizing long delays},
  booktitle    = {Proceedings of the Fourteenth Annual {ACM-SIAM} Symposium on Discrete
                  Algorithms, January 12-14, 2003, Baltimore, Maryland, {USA}},
  pages        = {47--55},
  publisher    = {{ACM/SIAM}},
  year         = {2003}
}

@article{albers2003online,
  title={Online algorithms: a survey},
  author={Albers, Susanne},
  journal={Mathematical Programming},
  volume={97},
  pages={3--26},
  year={2003},
  publisher={Springer}
}

@article{AlbersB05,
  author       = {Susanne Albers and
                  Helge Bals},
  title        = {Dynamic {TCP} Acknowledgment: Penalizing Long Delays},
  journal      = {{SIAM} J. Discret. Math.},
  volume       = {19},
  number       = {4},
  pages        = {938--951},
  year         = {2005}
}

@article{bienkowski2020online,
  title={Online algorithms for multilevel aggregation},
  author={Bienkowski, Marcin and B{\"o}hm, Martin and Byrka, Jaroslaw and Chrobak, Marek and D{\"u}rr, Christoph and Folwarczn{\`y}, Luk{\'a}{\v{s}} and Je{\.z}, {\L}ukasz and Sgall, Ji{\v{r}}{\'\i} and Thang, Nguyen Kim and Vesel{\`y}, Pavel},
  journal={Operations Research},
  volume={68},
  number={1},
  pages={214--232},
  year={2020},
  publisher={INFORMS}
}

@inproceedings{le2023power,
  title={The power of clairvoyance for multi-level aggregation and set cover with delay},
  author={Le, Ngoc Mai and Umboh, Seeun William and Xie, Ningyuan},
  booktitle={Proceedings of the 2023 Annual ACM-SIAM Symposium on Discrete Algorithms (SODA)},
  pages={1594--1610},
  year={2023},
  organization={SIAM}
}

@inproceedings{chen2022online,
  author       = {Ryder Chen and
                  Jahanvi Khatkar and
                  Seeun William Umboh},
  editor       = {Mikolaj Bojanczyk and
                  Emanuela Merelli and
                  David P. Woodruff},
  title        = {Online Weighted Cardinality Joint Replenishment Problem with Delay},
  booktitle    = {49th International Colloquium on Automata, Languages, and Programming,
                  {ICALP} 2022, Paris, France, July 4-8, 2022},
  series       = {LIPIcs},
  volume       = {229},
  pages        = {40:1--40:18},
  publisher    = {Schloss Dagstuhl - Leibniz-Zentrum f{\"{u}}r Informatik},
  year         = {2022},
  url          = {https://doi.org/10.4230/LIPIcs.ICALP.2022.40},
  doi          = {10.4230/LIPICS.ICALP.2022.40},
  bibsource    = {dblp computer science bibliography, https://dblp.org}
}

@inproceedings{emek2016online,
  title={Online matching: haste makes waste!},
  author={Emek, Yuval and Kutten, Shay and Wattenhofer, Roger},
  booktitle={Proceedings of the forty-eighth annual ACM symposium on Theory of Computing},
  pages={333--344},
  year={2016}
}

@article{ashlagi2017min,
  title={Min-cost bipartite perfect matching with delays},
  author={Ashlagi, Itai and Azar, Yossi and Charikar, Moses and Chiplunkar, Ashish and Geri, Ofir and Kaplan, Haim and Makhijani, Rahul and Wang, Yuyi and Wattenhofer, Roger},
  journal={Approximation, Randomization, and Combinatorial Optimization. Algorithms and Techniques (APPROX/RANDOM 2017)},
  volume={81},
  pages={1},
  year={2017},
  publisher={Schloss Dagstuhl-Leibniz-Zentrum f{\"u}r Informatik}
}

@inproceedings{azar2021min,
  title={The min-cost matching with concave delays problem},
  author={Azar, Yossi and Ren, Runtian and Vainstein, Danny},
  booktitle={Proceedings of the 2021 ACM-SIAM Symposium on Discrete Algorithms (SODA)},
  pages={301--320},
  year={2021},
  organization={SIAM}
}

@inproceedings{bienkowski2017match,
  title={A match in time saves nine: Deterministic online matching with delays},
  author={Bienkowski, Marcin and Kraska, Artur and Schmidt, Pawe{\l}},
  booktitle={International Workshop on Approximation and Online Algorithms},
  pages={132--146},
  year={2017},
  organization={Springer}
}

@inproceedings{azar2017online,
  title={Online service with delay},
  author={Azar, Yossi and Ganesh, Arun and Ge, Rong and Panigrahi, Debmalya},
  booktitle={Proceedings of the 49th Annual ACM SIGACT Symposium on Theory of Computing},
  pages={551--563},
  year={2017}
}

@inproceedings{azar2020beyond,
  title={Beyond tree embeddings--a deterministic framework for network design with deadlines or delay},
  author={Azar, Yossi and Touitou, Noam},
  booktitle={2020 IEEE 61st Annual Symposium on Foundations of Computer Science (FOCS)},
  pages={1368--1379},
  year={2020},
  organization={IEEE}
}

@inproceedings{AzarCKT20,
  author       = {Yossi Azar and
                  Ashish Chiplunkar and
                  Shay Kutten and
                  Noam Touitou},
  title        = {Set Cover with Delay - Clairvoyance Is Not Required},
  booktitle    = {28th Annual European Symposium on Algorithms, {ESA} 2020, September
                  7-9, 2020, Pisa, Italy (Virtual Conference)},
  series       = {LIPIcs},
  volume       = {173},
  pages        = {8:1--8:21},
  publisher    = {Schloss Dagstuhl - Leibniz-Zentrum f{\"{u}}r Informatik},
  year         = {2020}
}

@inproceedings{PattonR023,
  author       = {Kalen Patton and
                  Matteo Russo and
                  Sahil Singla},
  title        = {Submodular Norms with Applications To Online Facility Location and
                  Stochastic Probing},
  booktitle    = {Approximation, Randomization, and Combinatorial Optimization. Algorithms
                  and Techniques, {APPROX/RANDOM} 2023, September 11-13, 2023, Atlanta,
                  Georgia, {USA}},
  series       = {LIPIcs},
  volume       = {275},
  pages        = {23:1--23:22},
  publisher    = {Schloss Dagstuhl - Leibniz-Zentrum f{\"{u}}r Informatik},
  year         = {2023}
}

@inproceedings{CarrascoPSV18,
  author       = {Rodrigo A. Carrasco and
                  Kirk Pruhs and
                  Cliff Stein and
                  Jos{\'{e}} Verschae},
  title        = {The Online Set Aggregation Problem},
  booktitle    = {{LATIN} 2018: Theoretical Informatics - 13th Latin American Symposium,
                  Buenos Aires, Argentina, April 16-19, 2018, Proceedings},
  series       = {Lecture Notes in Computer Science},
  volume       = {10807},
  pages        = {245--259},
  publisher    = {Springer},
  year         = {2018}
}

@inproceedings{moseley2025putting,
  title={Putting Off the Catching Up: Online Joint Replenishment Problem with Holding and Backlog Costs},
  author={Moseley, Benjamin and Niaparast, Aidin and Ravi, R},
  booktitle={Proceedings of the 2025 Annual ACM-SIAM Symposium on Discrete Algorithms (SODA)},
  pages={3865--3883},
  year={2025},
  organization={SIAM}
}

@article{gyorgyi2023joint,
  title={Joint replenishment meets scheduling},
  author={Gy{\"o}rgyi, P{\'e}ter and Kis, Tam{\'a}s and Tam{\'a}si, T{\'\i}mea and B{\'e}k{\'e}si, J{\'o}zsef},
  journal={Journal of Scheduling},
  volume={26},
  number={1},
  pages={77--94},
  year={2023},
  publisher={Springer}
}

@article{BuchbinderKLMS13,
  author       = {Niv Buchbinder and
                  Tracy Kimbrel and
                  Retsef Levi and
                  Konstantin Makarychev and
                  Maxim Sviridenko},
  title        = {Online Make-to-Order Joint Replenishment Model: Primal-Dual Competitive
                  Algorithms},
  journal      = {Oper. Res.},
  volume       = {61},
  number       = {4},
  pages        = {1014--1029},
  year         = {2013},
  url          = {https://doi.org/10.1287/opre.2013.1188},
  doi          = {10.1287/OPRE.2013.1188},
  bibsource    = {dblp computer science bibliography, https://dblp.org}
}

@inproceedings{BienkowskiBCJS13,
  author    = {Marcin Bienkowski and
               Jaroslaw Byrka and
               Marek Chrobak and
               Lukasz Jez and
               Dorian Nogneng and
               Jir{\'{\i}} Sgall},
  title     = {Better Approximation Bounds for the Joint Replenishment Problem},
  booktitle = {Proceedings of the Twenty-Fifth Annual {ACM-SIAM} Symposium on Discrete
               Algorithms, {SODA} 2014, Portland, Oregon, USA, January 5-7, 2014},
  pages     = {42--54},
  publisher = {{SIAM}},
  year      = {2014},
  url       = {https://doi.org/10.1137/1.9781611973402.4},
  doi       = {10.1137/1.9781611973402.4},
  bibsource = {dblp computer science bibliography, https://dblp.org}
}

@inproceedings{BritoKV04,
  author       = {Carlos Brito and
                  Elias Koutsoupias and
                  Shailesh Vaya},
  title        = {Competitive analysis of organization networks or multicast acknowledgement:
                  how much to wait?},
  booktitle    = {Proceedings of the Fifteenth Annual {ACM-SIAM} Symposium on Discrete
                  Algorithms, {SODA} 2004, New Orleans, Louisiana, USA, January 11-14,
                  2004},
  pages        = {627--635},
  publisher    = {{SIAM}},
  year         = {2004},
  url          = {http://dl.acm.org/citation.cfm?id=982792.982886},
  bibsource    = {dblp computer science bibliography, https://dblp.org}
}

@inproceedings{frameworkDelay,
  author    = {Yossi Azar and
               Noam Touitou},
  title     = {General Framework for Metric Optimization Problems with Delay or with
               Deadlines},
  booktitle = {Proceedings of the 60th {IEEE} Annual Symposium on Foundations of Computer Science, {FOCS}
               2019},
  pages     = {60--71},
  publisher = {{IEEE} Computer Society},
  year      = {2019},
  url       = {https://doi.org/10.1109/FOCS.2019.00013},
  doi       = {10.1109/FOCS.2019.00013},
  bibsource = {dblp computer science bibliography, https://dblp.org}
}

@inproceedings{Touitou21,
  author    = {Noam Touitou},
  title     = {Nearly-Tight Lower Bounds for Set Cover and Network Design with Deadlines/Delay},
  booktitle = {32nd International Symposium on Algorithms and Computation, {ISAAC}
               2021, December 6-8, 2021, Fukuoka, Japan},
  series    = {LIPIcs},
  volume    = {212},
  pages     = {53:1--53:16},
  publisher = {Schloss Dagstuhl - Leibniz-Zentrum f{\"{u}}r Informatik},
  year      = {2021},
  url       = {https://doi.org/10.4230/LIPIcs.ISAAC.2021.53},
  doi       = {10.4230/LIPIcs.ISAAC.2021.53},
  bibsource = {dblp computer science bibliography, https://dblp.org}
}

@inproceedings{Touitou23,
  author       = {Noam Touitou},
  title        = {Frameworks for Nonclairvoyant Network Design with Deadlines or Delay},
  booktitle    = {50th International Colloquium on Automata, Languages, and Programming,
                  {ICALP} 2023, July 10-14, 2023, Paderborn, Germany},
  series       = {LIPIcs},
  volume       = {261},
  pages        = {105:1--105:20},
  publisher    = {Schloss Dagstuhl - Leibniz-Zentrum f{\"{u}}r Informatik},
  year         = {2023},
  url          = {https://doi.org/10.4230/LIPIcs.ICALP.2023.105},
  doi          = {10.4230/LIPICS.ICALP.2023.105},
  bibsource    = {dblp computer science bibliography, https://dblp.org}
}

@inproceedings{BuchbinderFNT17,
  author    = {Niv Buchbinder and
               Moran Feldman and
               Joseph (Seffi) Naor and
               Ohad Talmon},
  title     = {\emph{O}(depth)-Competitive Algorithm for Online Multi-level Aggregation},
  booktitle = {Proceedings of the Twenty-Eighth Annual {ACM-SIAM} Symposium on Discrete
               Algorithms, {SODA} 2017, Barcelona, Spain, Hotel Porta Fira, January
               16-19},
  pages     = {1235--1244},
  publisher = {{SIAM}},
  year      = {2017},
  url       = {https://doi.org/10.1137/1.9781611974782.80},
  doi       = {10.1137/1.9781611974782.80},
  bibsource = {dblp computer science bibliography, https://dblp.org}
}

@article{BienkowskiBBCDF21,
  author       = {Marcin Bienkowski and
                  Martin B{\"{o}}hm and
                  Jaroslaw Byrka and
                  Marek Chrobak and
                  Christoph D{\"{u}}rr and
                  Luk\'a\v{s} Folwarczn\'y and
                  Lukasz Jez and
                  Jir{\'{\i}} Sgall and
                  Kim Thang Nguyen and
                  Pavel Vesel{\'{y}}},
  title        = {New results on multi-level aggregation},
  journal      = {Theor. Comput. Sci.},
  volume       = {861},
  pages        = {133--143},
  year         = {2021},
  url          = {https://doi.org/10.1016/j.tcs.2021.02.016},
  doi          = {10.1016/j.tcs.2021.02.016},
  bibsource    = {dblp computer science bibliography, https://dblp.org}
}

@article{francis,
	author = {Bach, Francis},
	date = {2019/05/01},
	doi = {10.1007/s10107-018-1248-6},
	id = {Bach2019},
	isbn = {1436-4646},
	journal = {Mathematical Programming},
	number = {1},
	pages = {419--459},
	title = {Submodular functions: from discrete to continuous domains},
	url = {https://doi.org/10.1007/s10107-018-1248-6},
	volume = {175},
	year = {2019}}

@inproceedings{DeryckereU23,
  author       = {Lindsey Deryckere and
                  Seeun William Umboh},
  title        = {Online Matching with Set and Concave Delays},
  booktitle    = {Approximation, Randomization, and Combinatorial Optimization. Algorithms
                  and Techniques, {APPROX/RANDOM} 2023, September 11-13, 2023, Atlanta,
                  Georgia, {USA}},
  series       = {LIPIcs},
  volume       = {275},
  pages        = {17:1--17:17},
  publisher    = {Schloss Dagstuhl - Leibniz-Zentrum f{\"{u}}r Informatik},
  year         = {2023},
  url          = {https://doi.org/10.4230/LIPIcs.APPROX/RANDOM.2023.17},
  doi          = {10.4230/LIPICS.APPROX/RANDOM.2023.17},
  bibsource    = {dblp computer science bibliography, https://dblp.org}
}

@inproceedings{ShmoysSU26,
  author       = {David Shmoys and
                  Varun Suriyanarayana and
                  Seeun William Umboh},
  title        = {Improved Online Algorithms for Inventory Management Problems with Holding and Delay Costs: Riding the Wave Makes Things Simpler, Stronger, \& More General},
  booktitle    = {Proceedings of the 2026 {ACM-SIAM} Symposium on Discrete Algorithms,
                  {SODA} 2026, Vancouver, Canada, January 11-14, 2026},
  pages        = {2726--2742},
  publisher    = {{SIAM}},
  year         = {2026},
  url          = {https://doi.org/10.1137/1.9781611978971.100},
  doi          = {10.1137/1.9781611978971.100}
}

@inproceedings{AzarL26,
  author       = {Yossi Azar and
                  Shahar Lewkowicz},
  title        = {Online Joint Replenishment Problem with Arbitrary Holding and Backlog Costs},
  booktitle    = {Proceedings of the 2026 {ACM-SIAM} Symposium on Discrete Algorithms,
                  {SODA} 2026, Vancouver, Canada, January 11-14, 2026},
  pages        = {2702--2725},
  publisher    = {{SIAM}},
  year         = {2026},
  url          = {https://doi.org/10.1137/1.9781611978971.99},
  doi          = {10.1137/1.9781611978971.99}
}

@inproceedings{Meyerson05,
  author       = {Adam Meyerson},
  title        = {The Parking Permit Problem},
  booktitle    = {46th Annual {IEEE} Symposium on Foundations of Computer Science, {FOCS}
                  2005, Pittsburgh, PA, USA, October 23-25, 2005, Proceedings},
  pages        = {274--284},
  publisher    = {{IEEE} Computer Society},
  year         = {2005},
  url          = {https://doi.org/10.1109/SFCS.2005.72},
  doi          = {10.1109/SFCS.2005.72},
  bibsource    = {dblp computer science bibliography, https://dblp.org}
}

@article{NaorUW22,
  author       = {Joseph (Seffi) Naor and
                  Seeun William Umboh and
                  David P. Williamson},
  title        = {Tight Bounds for Online Weighted Tree Augmentation},
  journal      = {Algorithmica},
  volume       = {84},
  number       = {2},
  pages        = {304--324},
  year         = {2022},
  url          = {https://doi.org/10.1007/s00453-021-00888-7},
  doi          = {10.1007/S00453-021-00888-7},
  bibsource    = {dblp computer science bibliography, https://dblp.org}
}

@inproceedings{AzarCK17,
  author       = {Yossi Azar and
                  Ashish Chiplunkar and
                  Haim Kaplan},
  editor       = {Philip N. Klein},
  title        = {Polylogarithmic Bounds on the Competitiveness of Min-cost Perfect
                  Matching with Delays},
  booktitle    = {Proceedings of the Twenty-Eighth Annual {ACM-SIAM} Symposium on Discrete
                  Algorithms, {SODA} 2017, Barcelona, Spain, Hotel Porta Fira, January
                  16-19},
  pages        = {1051--1061},
  publisher    = {{SIAM}},
  year         = {2017},
  url          = {https://doi.org/10.1137/1.9781611974782.67},
  doi          = {10.1137/1.9781611974782.67},
  bibsource    = {dblp computer science bibliography, https://dblp.org}
}

@article{KarlinMRS88,
  author       = {Anna R. Karlin and
                  Mark S. Manasse and
                  Larry Rudolph and
                  Daniel Dominic Sleator},
  title        = {Competitive Snoopy Caching},
  journal      = {Algorithmica},
  volume       = {3},
  pages        = {77--119},
  year         = {1988},
  url          = {https://doi.org/10.1007/BF01762111},
  doi          = {10.1007/BF01762111},
  bibsource    = {dblp computer science bibliography, https://dblp.org}
}

@article{DBLP:journals/algorithmica/EzraLPRU26,
  author       = {Tomer Ezra and
                  Stefano Leonardi and
                  Michal Pawlowski and
                  Matteo Russo and
                  Seeun William Umboh},
  title        = {Universal Optimization for Non-Clairvoyant Subadditive Joint Replenishment},
  journal      = {Algorithmica},
  volume       = {88},
  number       = {3},
  pages        = {43},
  year         = {2026},
  url          = {https://doi.org/10.1007/s00453-026-01389-1},
  doi          = {10.1007/S00453-026-01389-1},
  bibsource    = {dblp computer science bibliography, https://dblp.org}
}

@inproceedings{DBLP:conf/stacs/Touitou25,
  author       = {Noam Touitou},
  editor       = {Olaf Beyersdorff and
                  Michal Pilipczuk and
                  Elaine Pimentel and
                  Kim Thang Nguyen},
  title        = {Nearly-Optimal Algorithm for Non-Clairvoyant Service with Delay},
  booktitle    = {42nd International Symposium on Theoretical Aspects of Computer Science,
                  {STACS} 2025, Jena, Germany, March 4-7, 2025},
  series       = {LIPIcs},
  pages        = {74:1--74:21},
  publisher    = {Schloss Dagstuhl - Leibniz-Zentrum f{\"{u}}r Informatik},
  year         = {2025},
  url          = {https://doi.org/10.4230/LIPIcs.STACS.2025.74},
  doi          = {10.4230/LIPICS.STACS.2025.74},
  bibsource    = {dblp computer science bibliography, https://dblp.org}
}

@inproceedings{DBLP:conf/stoc/Touitou23,
  author       = {Noam Touitou},
  editor       = {Barna Saha and
                  Rocco A. Servedio},
  title        = {Improved and Deterministic Online Service with Deadlines or Delay},
  booktitle    = {Proceedings of the 55th Annual {ACM} Symposium on Theory of Computing,
                  {STOC} 2023, Orlando, FL, USA, June 20-23, 2023},
  pages        = {761--774},
  publisher    = {{ACM}},
  year         = {2023},
  url          = {https://doi.org/10.1145/3564246.3585107},
  doi          = {10.1145/3564246.3585107},
  bibsource    = {dblp computer science bibliography, https://dblp.org}
}

@inproceedings{DBLP:conf/soda/DufayW26,
  author       = {Marc Dufay and
                  Roger Wattenhofer},
  editor       = {Kasper Green Larsen and
                  Barna Saha},
  title        = {A Deterministic Polylogarithmic Competitive Algorithm for Matching
                  with Delays},
  booktitle    = {Proceedings of the 2026 Annual {ACM-SIAM} Symposium on Discrete Algorithms,
                  {SODA} 2026, Vancouver, BC, Canada, January 11-14, 2026},
  pages        = {3936--3964},
  publisher    = {{SIAM}},
  year         = {2026},
  url          = {https://doi.org/10.1137/1.9781611978971.144},
  doi          = {10.1137/1.9781611978971.144},
  bibsource    = {dblp computer science bibliography, https://dblp.org}
}

@inproceedings{DBLP:conf/sirocco/BienkowskiKS18,
  author       = {Marcin Bienkowski and
                  Artur Kraska and
                  Pawel Schmidt},
  editor       = {Zvi Lotker and
                  Boaz Patt{-}Shamir},
  title        = {Online Service with Delay on a Line},
  booktitle    = {Structural Information and Communication Complexity - 25th International
                  Colloquium, {SIROCCO} 2018, Ma'ale HaHamisha, Israel, June 18-21,
                  2018, Revised Selected Papers},
  series       = {Lecture Notes in Computer Science},
  pages        = {237--248},
  publisher    = {Springer},
  year         = {2018},
  url          = {https://doi.org/10.1007/978-3-030-01325-7\_22},
  doi          = {10.1007/978-3-030-01325-7\_22},
  bibsource    = {dblp computer science bibliography, https://dblp.org}
}

@article{DBLP:journals/siamcomp/GuptaKP22,
  author       = {Anupam Gupta and
                  Amit Kumar and
                  Debmalya Panigrahi},
  title        = {Caching with Time Windows and Delays},
  journal      = {{SIAM} J. Comput.},
  volume       = {51},
  number       = {4},
  pages        = {975--1017},
  year         = {2022},
  url          = {https://doi.org/10.1137/20m1346286},
  doi          = {10.1137/20M1346286},
  bibsource    = {dblp computer science bibliography, https://dblp.org}
}

@article{DBLP:journals/mst/MariPRS25,
  author       = {Mathieu Mari and
                  Michal Pawlowski and
                  Runtian Ren and
                  Piotr Sankowski},
  title        = {Online Matching with Delays and Stochastic Arrival Times},
  journal      = {Theory Comput. Syst.},
  volume       = {69},
  number       = {1},
  pages        = {12},
  year         = {2025},
  url          = {https://doi.org/10.1007/s00224-024-10207-6},
  doi          = {10.1007/S00224-024-10207-6},
  bibsource    = {dblp computer science bibliography, https://dblp.org}
}

@inproceedings{DBLP:conf/isaac/KrneticM0W20,
  author       = {Predrag Krnetic and
                  Darya Melnyk and
                  Yuyi Wang and
                  Roger Wattenhofer},
  editor       = {Yixin Cao and
                  Siu{-}Wing Cheng and
                  Minming Li},
  title        = {The k-Server Problem with Delays on the Uniform Metric Space},
  booktitle    = {31st International Symposium on Algorithms and Computation, {ISAAC}
                  2020, Hong Kong (Virtual Conference), December 14-18, 2020},
  series       = {LIPIcs},
  pages        = {61:1--61:13},
  publisher    = {Schloss Dagstuhl - Leibniz-Zentrum f{\"{u}}r Informatik},
  year         = {2020},
  url          = {https://doi.org/10.4230/LIPIcs.ISAAC.2020.61},
  doi          = {10.4230/LIPICS.ISAAC.2020.61},
  bibsource    = {dblp computer science bibliography, https://dblp.org}
}

@inproceedings{DBLP:conf/wine/HeLSWWZ23,
  author       = {Kun He and
                  Sizhe Li and
                  Enze Sun and
                  Yuyi Wang and
                  Roger Wattenhofer and
                  Weihao Zhu},
  editor       = {Jugal Garg and
                  Max Klimm and
                  Yuqing Kong},
  title        = {Randomized Algorithm for {MPMD} on Two Sources},
  booktitle    = {Web and Internet Economics - 19th International Conference, {WINE}
                  2023, Shanghai, China, December 4-8, 2023, Proceedings},
  series       = {Lecture Notes in Computer Science},
  pages        = {348--365},
  publisher    = {Springer},
  year         = {2023},
  url          = {https://doi.org/10.1007/978-3-031-48974-7\_20},
  doi          = {10.1007/978-3-031-48974-7\_20},
  bibsource    = {dblp computer science bibliography, https://dblp.org}
}

@article{DBLP:journals/tcs/EmekSW19,
  author       = {Yuval Emek and
                  Yaacov Shapiro and
                  Yuyi Wang},
  title        = {Minimum cost perfect matching with delays for two sources},
  journal      = {Theor. Comput. Sci.},
  volume       = {754},
  pages        = {122--129},
  year         = {2019},
  url          = {https://doi.org/10.1016/j.tcs.2018.07.004},
  doi          = {10.1016/J.TCS.2018.07.004},
  bibsource    = {dblp computer science bibliography, https://dblp.org}
}

@inproceedings{DBLP:conf/focs/00010P21,
  author       = {Anupam Gupta and
                  Amit Kumar and
                  Debmalya Panigrahi},
  title        = {A Hitting Set Relaxation for {\textdollar}k{\textdollar}-Server and
                  an Extension to Time-Windows},
  booktitle    = {62nd {IEEE} Annual Symposium on Foundations of Computer Science, {FOCS}
                  2021, Denver, CO, USA, February 7-10, 2022},
  pages        = {504--515},
  publisher    = {{IEEE}},
  year         = {2021},
  url          = {https://doi.org/10.1109/FOCS52979.2021.00057},
  doi          = {10.1109/FOCS52979.2021.00057},
  bibsource    = {dblp computer science bibliography, https://dblp.org}
}

@inproceedings{GSU26,
      title={Online Matching with Size-Based and Convex Delays}, 
      author={Junhao Gan and Xiao Sun and Seeun William Umboh},
      year={2026},
      booktitle={Approximation, Randomization, and Combinatorial Optimization. Algorithms
                  and Techniques, {APPROX/RANDOM} 2026},
      note={To appear.}
}

@inproceedings{DBLP:conf/icalp/DinitzFU26,
  author       = {Michael Dinitz and
                  Jeremy T. Fineman and
                  Seeun William Umboh},
  editor       = {Sayan Bhattacharya and
                  Danupon Nanongkai and
                  Michael Benedikt and
                  Gabriele Puppis},
  title        = {Learning-Augmented Online Algorithms for Nonclairvoyant Joint Replenishment
                  Problem with Deadlines},
  booktitle    = {53rd International Colloquium on Automata, Languages, and Programming,
                  {ICALP} 2026, Royal Holloway, University of London, Egham, United
                  Kingdom, July 7-10, 2026},
  series       = {LIPIcs},
  volume       = {374},
  pages        = {77:1--77:21},
  publisher    = {Schloss Dagstuhl - Leibniz-Zentrum f{\"{u}}r Informatik},
  year         = {2026},
  url          = {https://doi.org/10.4230/LIPIcs.ICALP.2026.77},
  doi          = {10.4230/LIPICS.ICALP.2026.77},
  bibsource    = {dblp computer science bibliography, https://dblp.org}
}
\bibliographystyle{alpha}

\end{document}